\providecommand{\BibLatexMode}[1]{}
\providecommand{\BibTexMode}[1]{#1}
  \renewcommand{\BibLatexMode}[1]{}
  \renewcommand{\BibTexMode}[1]{#1}
  \renewcommand{\BibLatexMode}[1]{#1}
  \renewcommand{\BibTexMode}[1]{}
 \numberwithin{figure}{section}%
   \numberwithin{table}{section}%
   \numberwithin{equation}{section}%
\newlength{\savedparindent}
\newcommand{\SaveIndent}{\setlength{\savedparindent}{\parindent}}
\newcommand{\RestoreIndent}{\setlength{\parindent}{\savedparindent}}
   \newtheorem{theorem}{Theorem}[section] 
   \newtheorem{lemma}[theorem]{Lemma}
   \newtheorem{corollary}[theorem]{Corollary}
   \newtheorem{observation}[theorem]{Observation}
   \theoremstyle{remark}%
\newtheorem{definition}[theorem]{Definition}
   \newtheorem{assumption}[theorem]{Assumption}%
   \newtheorem*{remark:unnumbered}[theorem]{Remark}%
   \newtheorem{remark}[theorem]{Remark}%
   \theoremstyle{nonumberplain}
   \newtheorem{proof}{Proof:}
\def\compactify{\itemsep=0pt \topsep=0pt \partopsep=0pt \parsep=0pt}
\let\latexusecounter=\usecounter
\renewcommand{\th}{th\xspace}
\providecommand{\ComplexityClass}[1]{{{\textcolor[named]{OliveGreen}{%
            \textsc{#1}}}}}
\providecommand{\NPHard}{{\ComplexityClass{NP-Hard}}%
   \index{NP!hard}\xspace}
\newcommand{\emphic}[2]{%
   \textcolor{blue25}{%
      \textbf{\emph{#1}}}%
   \index{#2}}
\newcommand{\emphi}[1]{\emphic{#1}{#1}}
\newcommand{\myqedsymbol}{\rule{2mm}{2mm}}
\newcommand{\SarielThanks}[1]{%
   \thanks{%
      Department of Computer Science; %
      University of Illinois; %
      201 N. Goodwin Avenue; %
      Urbana, IL, 61801, USA; %
      {\tt \url{http://sarielhp.org}.} %
      #1%
   }
}
\newcommand{\PiotrThanks}[1]{%
   \thanks{%
      Department of Computer Science, MIT; %
      32 Vassar Street, Cambridge, MA, 02139, USA; %
      \texttt{indyk@mit.edu}.%
      #1
   }%
}
\newcommand{\SepidehThanks}[1]{%
   \thanks{%
      Data Science Institute, Columbia University; %
      500 West 120th St., New York, NY, 10027, USA; %
      \texttt{mahabadi@mit.edu}.%
      #1
   }%
}
\newcommand{\IfPrinterVer}[2]{#2}%
\providecommand{\Mh}[1]{{{#1}}}%
   \renewcommand{\IfPrinterVer}[2]{#1}%
   \renewcommand{\Mh}[1]{{\textcolor{red}{#1}}}%
  \definecolor{blue25}{rgb}{0,0,0}
  \definecolor{blue25}{rgb}{0,0,0.7}
\newcommand{\HLinkShort}[2]{\hyperref[#2]{#1\ref*{#2}}}
\newcommand{\HLink}[2]{\hyperref[#2]{#1~\ref*{#2}}}
\newcommand{\HLinkPage}[2]{\hyperref[#2]{#1~\ref*{#2}%
      $_\text{p\pageref{#2}}$}}
\newcommand{\HLinkSuffix}[3]{\hyperref[#2]{#1\ref*{#2}{#3}}}
\newcommand{\HLinkPageSuffix}[3]{\hyperref[#2]{#1\ref*{#2}%
      #3$_\text{p\pageref{#2}}$}}
\newcommand{\HLinkPageOnly}[1]{\hyperref[#1]{Page~\refpage*{#1}%
      $_\text{p\pageref{#1}}$}}
\newcommand{\figlab}[1]{\label{fig:#1}}
\newcommand{\figref}[1]{\HLink{Figure}{fig:#1}}
\providecommand{\deflab}[1]{\label{def:#1}}
\newcommand{\defref}[1]{\HLink{Definition}{def:#1}}
\newcommand{\obslab}[1]{\label{observation:#1}}
\newcommand{\obsref}[1]{\HLink{Observation}{observation:#1}}
\providecommand{\eqlab}[1]{}%
\renewcommand{\eqlab}[1]{\label{equation:#1}}
\newcommand{\Eqref}[1]{\HLinkSuffix{Eq.~(}{equation:#1}{)}}
\newcommand{\tbllab}[1]{\label{table:#1}}
\newcommand{\tblref}[1]{Table~\ref{table:#1}}
\newcommand{\asmlab}[1]{\label{assumption:#1}}
\newcommand{\asmref}[1]{\HLink{Assumption}{assumption:#1}}%
\newcommand{\remlab}[1]{\label{rem:#1}}
\newcommand{\remref}[1]{\HLink{Remark}{rem:#1}}%
\newcommand{\corlab}[1]{\label{cor:#1}}
\newcommand{\corref}[1]{\HLink{Corollary}{cor:#1}}%
\newcommand{\lemlab}[1]{\label{lemma:#1}}
\newcommand{\lemref}[1]{\HLink{Lemma}{lemma:#1}}%
\newcommand{\thmlab}[1]{{\label{theo:#1}}}
\newcommand{\thmref}[1]{\HLink{Theorem}{theo:#1}}
\newcommand{\seclab}[1]{\label{sec:#1}}
\newcommand{\secref}[1]{\HLink{Section}{sec:#1}}
\renewcommand{\Re}{\mathbb{R}}%
\newcommand{\ballY}[2]{\Mh{\bm{b}}\pth{#1,#2}}
\newcommand{\Matrix}{\Mh{M}}%
\newcommand{\pth}[2][\!]{\mleft({#2}\mright)}%
\newcommand{\CHX}[1]{\Mh{\mathrm{ConvexHull}}\pth{#1}}%
\newcommand{\spY}[2]{\Mh{{S}}\pth{#1, #2}}%
\newcommand{\tmQueryChar}{\Mh{T_Q}}%
\newcommand{\tmQueryY}[2]{\tmQueryChar\pth{#1, #2}}%
\newcommand{\norm}[1]{\left\|#1\right\|}%
\newcommand{\Cone}{\Mh{C}}%
\newcommand{\Star}{\Mh{\star}}%
\newcommand{\Set}[2]{\left\{ #1 \;\middle\vert\; #2 \right\}}
\newcommand{\Setw}[2]{\{ #1 \;\vert\; #2 \}}
\newcommand{\simp}{\Mh{\triangle}}%
\newcommand{\starY}[2]{\Mh{\mathrm{star}}\pth{#1, #2}}%
\newcommand{\bouqY}[2]{\Mh{\mathrm{bqt}}\pth{#1, #2}}%
\newcommand{\pbouqY}[2]{\Mh{\mathrm{bqt}^+}\pth{#1, #2}}%
\newcommand{\bouqA}{\Mh{\mathcal{B}}}%
\newcommand{\SimSetC}{\Mh{\bm{\Delta}}}%
\newcommand{\SimSetI}[2]{\SimSetC\pth{ #1, #2}}%
\newcommand{\SimSetD}{\Mh{\bm{\Delta}}_\FlatC}%
\newcommand{\simX}[1]{\Mh{\simp}_{#1}}%
\newcommand{\simCY}[2]{\Mh{\simp}_{\FlatC}\pth{#1,#2}}%
\newcommand{\simCX}[1]{\Mh{\simp}_{\FlatC}\pth{#1}}%
\newcommand{\radBX}[1]{\Mh{\mathrm{r}}^{}_\Pb\pth{#1}}%
\newcommand{\wdX}[1]{\Mh{{d}\pth{#1}}}
\newcommand{\vLen}{\Mh{\bm{\ell}}}%
\newcommand{\pFlatY}[2]{\flat^+\pth{#1, #2}}
\newcommand{\FlatB}{\Mh{F}}%
\newcommand{\FlatC}{\Mh{G}}%
\newcommand{\FlatD}{\Mh{H}}%
\newcommand{\dimX}[1]{\mathrm{dim}\pth{#1}}%
\newcommand{\flatF}{\Mh{F}}%
\newcommand{\flatG}{\Mh{G}}%
\newcommand{\flatH}{\Mh{H}}%
\newcommand{\pFlat}[1]{\pnt^{}_{\FlatC}\pth{#1}}%
\newcommand{\qB}{\query^{}_\Pb}%
\newcommand{\qX}[1]{\query_{\FlatC}\pth{#1}}%
\newcommand{\qAX}[1]{\query_{\FlatD}\pth{#1}}%
\newcommand{\intX}[1]{\mathrm{int}\pth{#1}}
\renewcommand{\flat}{\Mh{\mathsf{f}}}%
\newcommand{\flatA}{\Mh{\mathsf{g}}}%
\newcommand{\flatX}[1]{\flat_{#1}}%
\newcommand{\eps}{\varepsilon}%
\newcommand{\SetX}{\Mh{X}}
\newcommand{\SetY}{\Mh{Y}}
\newcommand{\Term}[1]{\textsf{#1}}%
\newcommand{\SLR}{\Term{SLR}\xspace}%
\newcommand{\ANN}{\Term{ANN}\xspace}%
\newcommand{\ANIS}{\Term{ANIS}\xspace}%
\newcommand{\ANIF}{\Term{ANIF}\xspace}%
\newcommand{\ANLF}{\Term{ANLIF}\xspace}%
\newcommand{\remove}[1]{}%
\newcommand{\distSet}[2]{\Mh{\mathsf{d}}(#1, #2)}%
\newcommand{\distY}[2]{\norm{#1- #2}}%
\newcommand{\distwY}[2]{\|{#1- #2}\|}%
\newcommand{\brc}[1]{\left\{ {#1} \right\}}%
\newcommand{\Angle}[1]{\measuredangle #1}%
\newcommand{\NNV}[1]{#1^{\Mh{*}}}%
\newcommand{\nnq}{\NNV{\query}}%
\newcommand{\qf}{\query^{}_{\flatA}}
\newcommand{\etal}{\textit{et~al.}\xspace}
\newcommand{\vecA}{\Mh{\tau}}
\newcommand{\vecB}{\Mh{u}}
\newcommand{\projY}[2]{\Mh{\mathsf{proj}}^{}_{#1}\pth{#2}}
\newcommand{\nnY}[2]{\mathsf{nn}\pth{#1,#2}}
\newcommand{\dirY}[2]{\mathsf{dir}(#1,#2)}
\newcommand{\PrismB}{\Mh{\Phi}_{\Pb}}%
\newcommand{\PrismBX}[1]{\Mh{\Phi}_{\Pb \cup \brc{\pnt_k}}}%
\newcommand{\rad}{\Mh{r}}
\newcommand{\Pb}{\Mh{B}}%
\newcommand{\affX}[1]{\mathrm{aff}\pth{#1}}
\newcommand{\PntSet}{\Mh{P}}%
\newcommand{\PntSetA}{\Mh{V}}%
\newcommand{\PntSetB}{\Mh{U}}%
\newcommand{\PntSetC}{\Mh{Z}}%
\newcommand{\query}{\Mh{q}}%
\newcommand{\wquery}{\Mh{\widehat{q}}}%
\newcommand{\cen}{\Mh{c}}%
\newcommand{\bpnt}{\Mh{b}}%
\newcommand{\pnt}{\Mh{p}}%
\newcommand{\pntA}{\Mh{y}}%
\newcommand{\pntB}{\Mh{u}}%
\newcommand{\pntC}{\Mh{v}}%
\newcommand{\pntD}{\Mh{\bm{s}}}%
\newcommand{\pntE}{\Mh{x}}%
\newcommand{\normY}[2]{\| #1 - #2 \|}
\newcommand{\SphereChar}{\Mh{\ensuremath{\mathbb{S}}}}
\newcommand{\SphereY}[2]{\SphereChar\pth{#1, #2}}
\newcommand{\SphereA}{\SphereChar}
\newcommand{\LineY}[2]{\mathrm{line}\pth{#1, #2}}%
\newcommand{\refX}[1]{\mathrm{ref}\pth{#1}}%
\newcommand{\projSX}[1]{{ #1}^{}_{|\SphereChar}}
\newcommand{\origin}{\Mh{\textbf{0}}}%
\newcommand{\PRC}{\Mh{\circledcirc}}
\newcommand{\PRZ}[3]{\PRC \pth{#1, #2, #3 }}%
\newcommand{\cardin}[1]{\left| {#1} \right|}%
\newcommand{\TrC}{\Mh{M}}%
\newcommand{\TrX}[1]{\TrC\pth{#1}}%
\newcommand{\ts}{\hspace{0.6pt}}
\newcommand{\DS}{\ensuremath{\Mh{\mathcal{D}}}\xspace}%
\newcommand{\FlatSet}{\Mh{\mathcal{F}}}%
\newcommand{\FlatsY}[2]{\Mh{\mathcal{F}_{#1}\pth{#2}}}%
\newcommand{\SimpsY}[2]{\Mh{\bm{\Delta}_{#1}\pth{#2}}}%
\newcommand{\LFlatsY}[2]{\Mh{\mathcal{L}_{#1}\pth{#2}}}%
\newcommand{\Family}{\Mh{\mathcal{G}}}%
\renewcommand{\th}{th\xspace}
\newcommand{\ba}{\Mh{\alpha}}%
\newcommand{\bAngleY}[2]{\Mh{\bm{\alpha}}^{}_{#1}\pth{#2}}
\newcommand{\len}{\Mh{\ell}}%
\newcommand{\ocX}[1]{{#1}^{\bot}}%
\newcommand{\DotProd}[2]{\permut{{#1},{#2}}}%
\newcommand{\permut}[1]{\left\langle {#1} \right\rangle}%
\newcommand{\PASet}{\PntSet_{\Mh{\!\measuredangle}}}
\def\mypar#1{\medskip\noindent\textbf{#1}}
\newcommand{\tldO}{\scalerel*{\widetilde{O}}{j^2}}%
\title{Approximate Sparse Linear Regression%
   \footnote{The paper is about to appear in ICALP'18. This research was supported by NSF and Simons Foundation.}%
}
   \author{%
      Sariel Har-Peled%
      \SarielThanks{%
         Work on this paper was partially supported by NSF AF awards %
         CCF-1421231, and 
         CCF-1217462.  
      }%
      \and%
      Piotr Indyk\PiotrThanks{}%
      \and%
      Sepideh Mahabadi\SepidehThanks{This work was done while this author was at MIT.}%
   }%
\begin{document}

\maketitle

\begin{abstract}
    In the \emph{Sparse Linear Regression} (\SLR) problem, given a
    $d \times n$ matrix $\Matrix$ and a $d$-dimensional query
    $\query$, the goal is to compute a $k$-sparse $n$-dimensional vector
    $\vecA$ such that the error $\norm{\Matrix \vecA-\query}$ is
    minimized. This problem is equivalent to the following geometric problem:
    given a set $P$ of $n$ points and a query point $\query$ in $d$ dimensions, 
    find the closest $k$-dimensional subspace to $\query$,
    that is spanned by a subset of $k$ points in $P$.
    In this paper, we present data-structures/algorithms
    and conditional lower bounds for several variants  of this
    problem (such as finding the closest induced $k$ dimensional flat/simplex instead of a subspace).

    In particular, we present \emph{approximation} algorithms for the
    online variants of the above problems with query time
    $\tldO(n^{k-1})$, which are of interest in the "low sparsity regime" where $k$ is
    small, e.g., $2$ or $3$. For $k=d$, this matches, up to
    polylogarithmic factors, the lower bound that relies on the
    \emph{affinely degenerate conjecture} (i.e., deciding if $n$
    points in $\Re^d$ contains $d+1$ points contained in a hyperplane takes
    $\Omega(n^d)$ time).
 Moreover, our algorithms involve formulating and
solving several geometric subproblems, which we believe to
be of independent interest.    
\end{abstract}


\setcounter{page}{1}
\pagenumbering{arabic}

\section{Introduction}
The goal of the {\em Sparse Linear Regression} (\SLR) problem is to
find a sparse linear model explaining a given set of
observations. Formally, we are given a matrix
$\Matrix \in \Re^{d\times n}$, and a vector $\query \in \Re^d$, and
the goal is to find a vector $\vecA$ that is $k$-sparse (has at most
$k$ non-zero entries) and that minimizes
$\norm{\query - \Matrix \vecA}_2$. The problem also has a natural
query/online variant where the matrix $\Matrix$ is given in advance
(so that it can be preprocessed) and the goal is to quickly find
$\vecA$ given $\query$.

Various variants of \SLR have been extensively studied, in a wide range
of fields including
\begin{inparaenum}[(i)]
    \item statistics and machine learning \cite{t-rssvl-96,
       t-rssvl-11},
    \item compressed sensing~\cite{d-cs-06}, and
    \item computer vision~\cite{wygsm-rfrsr-09}.
\end{inparaenum}
The query/online variant is of particular interest in the application
described by Wright \etal \cite{wygsm-rfrsr-09}, where the matrix
$\Matrix$ describes a set of image examples with known labels and
$\query$ is a new image that the algorithm wants to label.

If the matrix $\Matrix$ is generated at random or satisfies certain
assumptions, it is known that a natural convex relaxation of the
problem finds the optimum solution in polynomial
time~\cite{crt-rupes-06, cds-adbp-98}. However, in general the problem
is known to be \NPHard \cite{n-sasls-95, dma-aga-97}, and even hard to
approximate up to a polynomial factor \cite{fkt-vsh-15} (see below for
a more detailed discussion).  Thus, it is likely that any algorithm
for this problem that guarantees "low" approximation factor must run
in exponential time. A simple upper bound for the offline problem is
obtained by enumerating ${n \choose k}$ possible supports of $\vecA$
and then solving an instance of the $d \times k$ least squares problem.
This results in $n^k (d+k)^{O(1)}$ running time, which (to the best of
our knowledge) constitutes the fastest known algorithm for this
problem.  At the same time, one can test whether a given set of $n$
points in a $d$-dimensional space is degenerate by reducing it to $n$
instances of \SLR with sparsity $d$. The former problem is conjectured
to require $\Omega(n^d)$ time~\cite{es-blbda-95} -- this is the
\emph{affinely degenerate conjecture}. This provides a natural barrier
for running time improvements (we elaborate on this below in
\secref{l:b}).

In this paper, we study the complexity of the problem in the case
where the sparsity parameter $k$ is constant. In addition to the
formulation above, we also consider two more constrained variants of
the problem. First, we consider the \emphi{Affine SLR} where the
vector $\vecA$ is required to satisfy $\norm{\vecA}_1 =1$, and second,
we consider the \emphi{Convex SLR} where additionally $\vecA$ should
be non-negative.  We focus on the approximate version of these
problems, where the algorithm is allowed to output a $k$-sparse vector
$\vecA'$ such that $\norm{\Matrix \vecA'-\query}_2$ is within a factor
of $1+\eps$ of the optimum.

\mypar{Geometric interpretation.} The \SLR problem is equivalent to the
\emph{Nearest Linear Induced Flat} problem defined as follows. Given a
set $\PntSet$ of $n$ points in $d$ dimensions and a $d$-dimensional
vector $\query$, the task is to find a $k$-dimensional flat spanning a
subset $B$ of $k$ points in $\PntSet$ and the origin, such that the
(Euclidean) distance from $\query$ to the flat is minimized.  The
Affine and Convex variants of \SLR respectively correspond to finding
the \emph{Nearest Induced Flat} and the \emph{Nearest Induced Simplex}
problems, where the goal is to find the closest $(k-1)$-dimensional
flat/simplex spanned by a subset of $k$ points in $\PntSet$ to the
query.


\mypar{Motivation for the problems studied.}
Given a large\footnote{{\it Bigger than the biggest thing ever and
      then some.  Much bigger than that in fact, really amazingly
      immense, a totally stunning size, ''wow, that's big'', time.} --
   The Restaurant at the End of the Universe, Douglas Adams.}  set of
items (e.g., images), one would like to store them efficiently for
various purposes. One option is to pick a relatively smaller subset of
representative items (i.e., support vectors), and represent all items
as a combination of this \emph{supporting set}. Note, that if our
data-set is diverse and is made out of several distinct groups (say,
images of the sky, and images of children), then naturally, the data
items would use only some of the supporting set for representation
(i.e., the representation over the supporting set is naturally
sparse). As such, it is natural to ask for a sparse representation of
each item over the (sparse but still relatively large) supporting set.
(As a side note, surprisingly little is known about how to choose such
a supporting set in theory, and the problem seems to be surprisingly
hard even for points in the plane.)

Now, when a new item arrives to the system, the task is to compute its
best sparse representation using the supporting set, and we would like
to do this as fast as possible (which admittedly is not going to be
that fast, see below for details).


\begin{table}
    \centerline{%
       \begin{tabular}{|c|l|c|c|c|}
         \cline{2-5}
         \multicolumn{1}{c||}{}
         & Comment
         & Space
         & Query
         & See
         \\
         \hline\hline
         \SLR
         &
         &
           $\Bigl. n^{k-1}\spY{n}{\eps}$
         &
           $n^{k-1} \tmQueryY{n}{\eps}$
         &
           \thmref{slr-k2}%
         \\
         \hline
         Affine \SLR%
         &%
         &%
           $\Bigl. n^{k-1}\spY{n}{\eps}$
         &%
           $n^{k-1} \tmQueryY{n}{\eps}$
         &%
           \thmref{slr-k}%
         \\
         \hline
         Convex \SLR
         &%
         &%
           $\Bigl. n^{k-1}\spY{n}{\eps}\log ^{k} n$
         &%
           $n^{k-1} \tmQueryY{n}{\eps} \log ^{k} n$%
         &%
           \lemref{slr-k:simplex}%
         \\
         \cline{2-5}
         &%
           $k=2$ \& $\eps\leq 1$ 
         &%
           $\Bigl.n\spY{n}{\eps}\log n$
         &%
           $n\tmQueryY{n}{\eps}\eps^{-2} \log n$
         &%
           \thmref{a:n:n:i:segment}%
         \\
         \cline{1-5}
         \begin{minipage}[c]{2.3cm}
             \begin{center}
                 Approximate nearest
             \end{center}
         \end{minipage}
         &%
           \begin{minipage}{3cm}
               \smallskip%
               $k=2$ \\
               $\bigl.2(1+\eps)$ Approx
           \end{minipage}
         &%
           \multicolumn{2}{|c|}{%
           \begin{minipage}{3cm}
               \centerline{$n^{1+O(\frac{1}{(1+\eps)^2})}$}
           \end{minipage}
           }
         &
           \thmref{n:i:segment}
         \\%
         \cline{2-5}%
         { induced segment}
         &
           $d=O(1)$
         &           
           \multicolumn{2}{|c|}{%
           $\Bigl.O( n \log n + n/\eps^d)$
           }
         &
           \corref{n:i:s:low:dim}%
         \\
         \hline
       \end{tabular}%
\vspace{0.1cm}
        }%
    \caption[summary]{Summary of results. Here,
       $\spY{n}{\eps}$ denotes the preprocessing time and space used
       by a
       $(1+\eps)$-\ANN (approximate nearest-neighbor) data-structure,
       and
       $\tmQueryY{n}{\eps}$ denotes the query time (we assume all
       these bounds are at least linear in the dimension
       $d$).  All the data-structures, except the last one, provide
       $(1+\eps)$-approximation.
In the nearest induced segment case (i.e., this is the
       offline convex \SLR case) the algorithm answers a single query
       -- in the specific case of \thmref{n:i:segment}, the query time
       is slighly better than the online variant, but significnatly,
       the space is better by a factor of
       $n$.  Unfortunately, the quality of apprxoimation is worse.  \vspace{-0.5cm}}%
    \tbllab{results}%
\end{table}

\subsection{Our results}

\mypar{Data-structures.} %
    We present data-structures to solve the online variants of the
    \SLR, Affine \SLR and Convex \SLR problems, for general value of
    $k$.  Our algorithms use a provided approximate nearest-neighbor
    (\ANN) data-structure as a black box.  The new results are
    summarized in \tblref{results}.
    For example, for a point set in constant dimension, we get a near
    linear time algorithm that computes for a single point, the
    nearest induced segment to it in near linear time, see
    \corref{n:i:s:low:dim}.

For small values of $k$, our algorithms offer notable improvements of the query time over the aforementioned naive algorithm, albeit at a cost of preprocessing. Below in \secref{l:b}, we show how our result matches the lower bound that relies on the affinely degenerate conjecture. Moreover, our algorithms involve formulating and
solving several interesting geometric subproblems, which we believe to
be of independent interest.

    \mypar{Conditional lower bound.}
    We show a
    conditional lower bound of
    $\Omega(n^{k/2}/(e^k \log^{\Theta(1)} n))$, for the offline
    variants of all three problems. Improving this lower bound further, for the
    case of $k=4$, would imply a nontrivial lower bound for famous
    Hopcroft's problem. See \secref{sec:hopcroft} for the description.
    Our lower bound result presented in \secref{lowerbound}, follows
    by a reduction from the $k$-sum problem which is conjectured to
    require $\Omega(n^{\lceil k/2 \rceil}/\log^{\Theta(1)} n)$ time
    (see e.g., \cite{pw-pfsa-10}, Section 5). This provides further
    evidence that the off-line variants of the problem require $n^{\Omega(k)}$ time.

\subsubsection{Detecting affine degeneracy}
\seclab{l:b}
Given a point set $\PntSet$ in $\Re^d$ (here $d$ is conceptually
small), it is natural to ask if the points are in general position --
that is, all subsets of $d+1$ points are affinely independent.  The
\emph{affinely degenerate conjecture} states that this problem
requires $\Omega(n^{d})$ time to solve \cite{es-blbda-95}. This can be
achieved by building the arrangement of hyperplanes in the dual, and
detecting any vertex that has $d+1$ hyperplanes passing through
it. This problem is also solvable using our data-structure. (We note that since the approximation of our data structure is multiplicative, and in the reduction we only need to detect distance of $0$ from larger than $0$, we are able to solve the exact degeneracy problem as described next). Indeed,
we instantiate \thmref{slr-k}, for $k=d$, and using a low-dimensional
$(1+\eps)$-\ANN data-structure of Arya \etal
\cite{amnsw-oaann-98}. Such an \ANN data-structure uses
$\spY{n}{\eps} = O(n)$ space, $O(n \log n)$ preprocessing time, and
$\tmQueryY{n}{\eps}=O( \log n + 1/\eps^d) =O(\log n)$ query time (for a fixed
constant $\eps<1$). Thus, by \thmref{slr-k}, our data structure has total space usage and preprocessing time of $\tilde O (n^k)$ and a query time of $\tilde O(n^{k-1})$. Detecting affine degeneracy then reduces to solving
for each point of $\query \in \PntSet$, the problem of finding the
closest $(d-1)$-dimensional induced flat (i.e., passing through $d$ points) of
$\PntSet \setminus \brc{ \query}$ to $\query$. It is easy to show that
this can be solved using our data-structure with an extra $\log$
factor\footnote{The details are somewhat tedious -- one generates
   $O( \log n)$ random samples of $\PntSet$ where each point is picked
   with probability half. Now, we build the data-structure for each of
   the random samples. With high probability, for each of the query
   point $\query \in \PntSet$, one of the samples contains, with high
   probability, the $d$ points defining the closest flat, while not
   containing $\query$.}.    This means that the total runtime (including the preprocessing and the $n$ queries) will be $\tilde O (n^k)= \tilde O(n^d)$. Thus, up to polylogarithmic factor, the
data-structure of \thmref{slr-k} provides an optimal trade-off under the affinely
degenerate conjecture. We emphasize that this reduction only rules out the existence of algorithms  for online variants of our problems that improve both the preprocessing time from $O(n^{k})$, and query time from $O(n^{k-1})$ by much; it does not rule out for example the algorithms with large preprocessing time (in fact much larger than $n^{k}$) but small query time. 

\vspace{-0.1cm}
\subsection{Related work}

The computational complexity of the approximate sparse linear
regression problem has been studied, e.g., in~\cite{n-sasls-95,
   dma-aga-97, fkt-vsh-15}. In particular, the last paper proved a
strong hardness result, showing that the problem is hard even if the
algorithm is allowed to output a solution with sparsity
$k'=k 2^{\log^{1-\delta} n}$ whose error is within a factor of
$n^c m^{1-\alpha}$ from the optimum, for any constants
$\delta, \alpha>0$ and $c>1$.

The query/online version of the Affine \SLR problem can be reduced to
the {\em Nearest $k$-flat Search Problem} studied in \cite{magen-drpvd-02, bhz-anss-11, m-anlsh-15}, where the database consists
of a set of $k$-flats (affine subspaces) of size $N$ and the goal is
to find the closest $k$-flat to a given query point $\query$. Let
$\PntSet$ be a set of $n$ points in $\Re^d$ that correspond to the
columns of $\Matrix$. The reduction proceeds by creating a database of
all $N=\binom{n}{k}$ possible $k$-flats that pass through $k$ points
of $\PntSet$. 
However, the result of \cite{bhz-anss-11} does not
provide multiplicative approximation guarantees, although it does provide some alternative guarantees and has  been  validated  by  several  experiments. The  result of \cite{magen-drpvd-02}, provides provable guarantees and fast query time of
$(d+\log N + 1/\eps)^{O(1)}$, but the space
requirement is quasi-polynomial of the form $2^{(\log N)^{O(1)}} = 2^{(k\log n)^{O(1)}}$.
Finally the result of \cite{m-anlsh-15} only works for the special case of $k=2$, 
and yields an algorithm with
space usage
\begin{math}
    O\pth{ n^{14} \eps^{-3} \spY{n^2}{\eps}}
\end{math}
and query time $O\pth{ \tmQueryY{{n^2\eps^{-4}}}{\eps} \log^{2} n }$%
\footnote{%
   The exact exponent is not specified in the main theorem of
   \cite{m-anlsh-15} and it was obtained by an inspection of the
   proofs in that paper.}.  Similar results can be achieved for the
other variants.

The \SLR problem has a close relationship with the \textit{Approximate
   Nearest Neighbor (\ANN)} problem. In this problem, we are given a
collection of $N$ points, and the goal is to build a data structure
which, given any query point $\query$, reports the data point whose
distance to the query is within a $(1+\eps)$ factor of the distance of
the closest point to the query.  There are many efficient algorithms
known for the latter problem.  One of the state of the art results for
\ANN in Euclidean space answers queries in time
$(d \log (N)/\eps^2)^{O(1)}$ using $(dN)^{O(1/\eps^2)}$
space~\cite{kor-esann-00,im-anntr-98}.

\subsection{Our techniques and sketch of the algorithms}

\mypar{Affine \SLR (nearest flat).} %
To solve this problem, we first fix a subset $\Pb\subseteq P$ of $k-1$
points, and search for the closest $(k-1)$-flat among those that
contain $\Pb$. Note, that there are at most $n-k+1$ such flats. Each
such flat $\flat$, as well as the query flat $Q_{\mathrm{flat}}$
(containing $\Pb$ and the query $\query$), has only one additional
degree of freedom, which is represented by a vector $v_H$ ($v_Q$,
resp.) in a $d-k+1$ space. The vector $v_H$ that is closest to $v_Q$
corresponds to the flat that is closest to $\query$. This can be found
approximately using standard \ANN data structure, resulting in an
algorithm with running time $O(n^{k-1}\cdot \tmQueryY{n}{\eps})$.
Similarly, by adding the origin to the set $\Pb$, we could solve the
\SLR problem in a similar way.


\mypar{Convex \SLR (nearest simplex).} %
This case requires an intricate combination of low and high
dimensional data structures, and is the most challenging part of
this work.  To find the closest $(k-1)$-dimensional induced {\em
   simplex}, one approach would be to fix $\Pb$ as before, and find
the closest corresponding flat.  This will work only if the
projection of the query onto the closest flat falls inside of
its corresponding simplex. Because of that, we need to restrict our
search to the flats of {\em feasible simplices}, i.e., the simplices
$S$ such that the projection of the query point onto the corresponding
flat falls inside $S$.  If we manage to find this set, we can use the
algorithm for affine \SLR to find the closest one. Note that finding
the distance of the query to the closest non-feasible simplex can
easily be computed in time $n^{k-1}$ as the closest point of such a simplex to the query
lies on its boundary which is a lower dimensional object.

Let $S$ be the unique simplex obtained from $\Pb$ and an additional
point $\pnt$. Then we can determine whether $S$ is feasible or not
only by looking at (i) the relative positioning of $\pnt$ with respect
to $\Pb$, that is, how the simplex looks like in the flat going
through $S$, (ii) the relative positioning of $\query$ with respect to
$\Pb$, and (iii) the distance between the query and the flat of the
simplex.  Thus, if we were given a set of simplices through $\Pb$ such that all
their flats were at a distance $r$ from the query, we could build a
single data structure for retrieving all the feasible flats. This
can be done by mapping all of them in advance onto a unified $(k-1)$
dimensional space (the ``parameterized space''), and then using $k-1$
dimensional orthogonal range-searching trees in that space.

However, the minimum distance $r$ is not known in general.  Fortunately, as we
show the feasibility property is monotone in the distance: the
farther the flat of the simplex is from the query point, the weaker
constraints it needs to satisfy. Thus, given a threshold value $r$,
our algorithm retrieves the simplices satisfying the restrictions they
need to satisfy if they were at a distance $r$ from the query. This
allows us to use binary search for finding the right value of $r$ by
random sampling.  The final challenge is that, since our access is to
an approximate NN data structure (and not an exact one), the above
procedure yields a superset of feasible simplices. The algorithm then
finds the closest flat corresponding to the simplices in this
superset. We show that although the reported simplex may not be
feasible (the projection of $\query$ on to the corresponding flat does
not fall inside the simplex), its distance to the query is still
approximately at most $r$.

 
\mypar{Convex \SLR for $k=2$ (offline nearest segment).} %
Given a set of $n$ points $\PntSet$ and the query $\query$ all at
once, the goal is to find the segment composed of two points in
$\PntSet$ that is closest to the query, in sub-quadratic time.  To
this end, we project the point set to a sphere around the query point,
and preprocess it for \ANN. For each point of $\PntSet$, we use its
reflected point on the sphere, to perform an \ANN query, and get another
(projected) point. Taking the original two points, generates a
segment. We prove that among these $n$ segments, there is a
$O(1)$-\ANN to the closest induced segment.  This requires building
\ANN data-structure and answering $n$ \ANN queries, thus yielding a
subquadratic-time algorithm.  See \secref{offline} for details.

\mypar{Conditional Lower bound.} %
Our reduction from $k$-sum is randomized, and based on the following
idea. First observe that by testing whether the solution to \SLR is
zero or non-zero we can test whether there is a subset of $k$ numbers
and a set of associated $k$ weights such that the {\em weighted} sum
is equal to zero. In order to solve $k$-sum, however, we need to
force the weights to be equal to $1$.  To this end, we lift the
numbers to vectors, by extending each number by a vector selected at
random from the standard basis $e_1 \ldots e_k$. We then ensure that
in the selected set of $k$ numbers, each element from the basis is
represented exactly once and with weight 1. This ensures that the
solution to \SLR yields a feasible solution to $k$-sum.

\section{Preliminaries}

\subsection{Notations}
Throughout the paper, we assume $\PntSet \subseteq \Re^d$ is the set
of input points which is of size $n$.  In this paper, for simplicity,
we assume that the point-sets are non-degenerate, however this
assumption is not necessary for the algorithms.  We use the notation
$X \subset_{i} B$ to denote that $X$ is a subset of $B$ of size $i$,
and use $\origin$ to denote the origin.  For two points
$\pntA,\pntB\in \Re^d$, the segment the form is denoted by
$\pntA \pntB$, and the line formed by them by $\LineY{\pntA}{\pntB}$.

\begin{definition}
    \deflab{simplex}%
    For a set of points $S$, let
    \begin{math}
        \flat_S = \affX{S} = \Set{\sum_{i=1}^{\cardin{S}} \alpha_i \pnt_i}{
           \pnt_i \in S, \text{ and } \sum_{i=1}^{\cardin{S}} \alpha_i = 1}
    \end{math}
    be the $(\cardin{S}-1)$-dimensional flat (or
    \emphi{$(\cardin{S}-1)$-flat} for short) passing through the
    points in the set $S$ (aka the \emph{affine hull} of $S$).  The
    $(\cardin{S}-1)$-dimensional simplex (\emphi{$({\cardin{S}}-1)$-simplex} for
    short) that is formed by the convex-hull of the points of $S$ is
    denoted by $\simX{S}$. We denote the \emphi{interior} of a simplex
    $\simX{S}$ by $\intX{\simX{S}}$.
\end{definition}

\begin{definition}[distance and nearest-neighbor]
    For a point $\query \in \Re^d$, and a point $\pnt \in \Re^d$, we
    use $\distSet{ \query}{\pnt} = \distY{\query}{\pnt}_2$ to denote
    the \emphi{distance} between $\query$ and $\pnt$. For a closed set
    $\SetX \subseteq \Re^d$, we denote by
    $\distSet{\query}{\SetX} = \min_{\pnt \in \SetX}
    \distY{\query}{\pnt}_2$ the \emphi{distance} between $\query$ and
    $\SetX$.  The point of $\SetX$ realizing the distance between
    $\query$ and $\SetX$ is the \emphi{nearest neighbor} to $\query$
    in $\SetX$, denoted by $\nnY{\query}{\SetX}$. We sometimes refer
    to $\nnY{\query}{\SetX}$ as the \emph{projection} of $\query$ onto
    $\SetX$.

    More generally, given a finite family of such sets
    $\Family = \Set{\SetX_i \subseteq \Re^d }{i=1,\ldots, m}$, the
    \emphi{distance} of $\query$ from $\Family$ is
    $\distSet{\query}{\Family} = \min_{ \SetX \in \Family}
    \distSet{\query}{\SetX}$. The \emph{nearest-neighbor}
    $\nnY{\query}{\Family}$ is defined analogously to the above.
\end{definition}

\begin{assumption}
    \asmlab{ANN}%
    Throughout the paper, we assume we have access to a data structure
    that can answer $(1+\eps)$-\ANN queries on a set of $n$ points in
    $\Re^d$.  We use $\spY{n}{\eps}$ to denote the space requirement
    of this data structure, and by $\tmQueryY{n}{\eps}$ to denote the
    query time.
\end{assumption}

\subsubsection{Induced stars, bouquets, books, simplices and flats}

\begin{definition}
    \deflab{star}%
    Given a point $\bpnt$ and a set $\PntSet$ of points in $\Re^d$,
    the \emphi{star} of $\PntSet$, with the base $\bpnt$, is the set
    of segments
    \begin{math}
        \starY{\bpnt}{\PntSet}%
        =%
        \Set{ \bpnt\pnt}{ \pnt \in \PntSet \setminus \brc{\bpnt}}.
    \end{math}
    Similarly, given a set $\Pb$ of points in $\Re^d$, with
    $\cardin{\Pb} = k-1 \leq d$, the \emphi{book} of $\PntSet$, with
    the base $\Pb$, is the set of simplices
    \begin{math}
        \SimSetI{\Pb}{\PntSet}%
        =%
        \Set{ \simX{\Pb\cup\brc{\pnt}} }{ \pnt \in \PntSet \setminus
           \Pb}.
    \end{math}
    Finally, the set of flats induced by these simplices, is the
    \emphi{bouquet} of $\PntSet$, denoted by
    \begin{math}
        \bouqY{\Pb}{\PntSet}%
        =%
        \Set{ \flatX{\Pb\cup\brc{\pnt}} }{ \pnt \in \PntSet \setminus
           \Pb}.
    \end{math}
\end{definition}
If $\Pb$ is a single point, then the corresponding book is a star, and the corresponding bouquet is a
set of lines all passing through the single point in $\Pb$.

\begin{definition}
    \deflab{induced:flats}%
    For a set $\PntSet \subseteq \Re^d$, let
    $\LFlatsY{k}{\PntSet} = \Set{ \flat_{S\cup\{\origin\}}}{ S
       \subset_{k} \PntSet}$ be the set of all linear $k$-dimensional
    subspaces induced by ${\PntSet}$, and
    $\FlatsY{k}{\PntSet} = \Set{ \flat_S}{ S \subset_{k} \PntSet}$ be
    the set of all $(k-1)$-flats induced by $\PntSet$. Similarly, let
    $\SimpsY{k}{\PntSet} = \Set{ \simX{S}}{ S \subset_{k} \PntSet}$ be
    the set of all $(k-1)$-simplices induced by $\PntSet$.
\end{definition}

\subsection{Problems}

In the following, we are given a set $\PntSet$ of $n$ points in
$\Re^d$, a query point $\query$ and parameters $k$ and $\eps>0$. We
are interested in the following problems: 
\begin{compactenum}[ I.]
    \item \emphi{\SLR} (nearest induced linear subspace): Compute
    $\nnY{\query}{\bigl.\LFlatsY{k}{\PntSet}}$.

    \item \emphi{\ANLF} ({approximate nearest linear induced flat}):
    Compute a $k$-flat $\flat \in \LFlatsY{k}{\PntSet}$, such that
    $\distSet{\query}{\flat} \leq (1+\eps)
    \distSet{\query}{\bigl.\LFlatsY{k}{\PntSet}}$.

    \item \emphi{Affine \SLR} (nearest induced flat): Compute
    $\nnY{\query}{\bigl.\FlatsY{k}{\PntSet}}$.

    \item \emphi{\ANIF} ({Approximate Nearest Induced Flat}): Compute
    a $(k-1)$-flat $\flat \in \FlatsY{k}{\PntSet}$, such that
    $\distSet{\query}{\flat} \leq (1+\eps)
    \distSet{\query}{\bigl.\FlatsY{k}{\PntSet}}$.

    \item \emphi{Convex \SLR} ({Nearest Induced Simplex}): Compute
    $\nnY{\query}{\bigl.\SimpsY{k}{\PntSet}}$.

    \item \emphi{\ANIS} ({Approximate Nearest Induced Simplex}):
    Compute a $(k-1)$-simplex $\simp \in \SimpsY{k}{\PntSet}$, such
    that
    $\distSet{\query}{\simp} \leq (1+\eps)
    \distSet{\query}{\bigl.\SimpsY{k}{\PntSet}}$.
\end{compactenum}
Here, the parameter $k$ corresponds to the \emph{sparsity} of the
solution.

\section{Approximating the nearest %
   induced flats and subspaces}
\seclab{sec:nearest-flat}

In this section, we show how to solve approximately the online
variants of \SLR and affine \SLR problems. These results are later
used in \secref{sec:nearest-simplex}. We start with the simplified
case of the uniform star.

\subsection{Approximating the nearest neighbor %
   in a uniform star}

\mypar{Input \& task.} %
We are given a base point $\bpnt$, a set $\PntSet$ of $n$ points in
$\Re^d$, and a parameter $\eps>0$.  We assume that
$\distY{\bpnt}{\pnt} = 1$, for all $\pnt \in \PntSet$.  The task is to
build a data structure that can report quickly, for a query point
$\query$ that is also at distance one from $\bpnt$, the
$(1+\eps)$-\ANN segment to $\query$ in $\starY{\bpnt}{\PntSet}$.

\mypar{Preprocessing.} %
The algorithm computes the set
$\PntSetA = \Set{\pnt - \bpnt}{\pnt \in \PntSet \setminus
   \brc{\bpnt}}$, which lies on a unit sphere in $\Re^{d}$. Next, the
algorithm builds a data structure $\DS_\PntSetA$ for answering
$(1+\eps)$-\ANN queries on $\PntSetA$.

\mypar{Answering a query.} %
For a query point $\query$, the algorithm does the following:
\begin{compactenum}[\quad(A)]
    \item Compute $\vecA = \query - \bpnt$.

    \item Compute $(1+\eps)$-\ANN{} to $\vecA$ in $\PntSetA$, denoted
    by $\vecB$ using $\DS_\PntSetA$.

    \item Let $\pntA$ be the point in $\PntSet$ corresponding to
    $\vecB$.

    \item Return
    $\min\pth{ \bigl. \distSet{\query}{ \bpnt \pntA }, \, 1 }$.
\end{compactenum}


\newcommand{\PlaceFigureOnRight}[2]{%
   \noindent%
   \SaveIndent%
   \begin{minipage}{\linewidth - \widthof{#2} - \widthof{\quad}}
       \RestoreIndent%
       #1
   \end{minipage}
   \quad%
   \begin{minipage}{\widthof{#2}}%
       \hfill%
       #2
   \end{minipage}
}

\begin{lemma}
    \lemlab{ann:u:star}%
    Consider a base point $\bpnt$, and a set $\PntSet$ of $n$ points
    in $\Re^d$ all on $\SphereY{\bpnt}{1}$, where
    $\SphereA =\SphereY{\bpnt}{1}$ is the sphere of radius $1$
    centered at $\bpnt$.  Given a query point $\query \in \SphereA$,
    the above algorithm reports correctly a $(1+\eps)$-\ANN in
    $\starY{\bpnt}{\PntSet}$. The query time is dominated by the time
    to perform a single $(1+\eps)$-\ANN
    query.
\end{lemma}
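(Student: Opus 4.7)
The plan is to reduce segment-nearest-neighbor to point-nearest-neighbor on the unit sphere by passing through angles. Since every $\pnt \in \PntSet$ and the query $\query$ lie on $\SphereY{\bpnt}{1}$, the translated vectors $\vecA = \query-\bpnt$ and the points of $\PntSetA = \Set{\pnt - \bpnt}{\pnt \in \PntSet}$ all lie on the unit sphere centered at the origin. For any $\pnt \in \PntSet$, setting $\theta = \angle(\vecA, \pnt - \bpnt)$, the chord length is $\|\vecA - (\pnt - \bpnt)\| = 2\sin(\theta/2)$, while a direct minimization of $\|\vecA - t(\pnt - \bpnt)\|^2$ over $t \in [0,1]$ yields $\distSet{\query}{\bpnt\pnt} = \sin\theta$ when $\theta \leq \pi/2$ and $\distSet{\query}{\bpnt\pnt} = 1$ otherwise. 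In particular this distance never exceeds $1$, so the outer $\min$ in the algorithm is harmless.

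Let $\pntA^\circ \in \PntSet$ realize $\distSet{\query}{\starY{\bpnt}{\PntSet}}$ with corresponding angle $\theta^\circ$, and let $\vecB = \pntA - \bpnt$, with angle $\theta = \angle(\vecA, \vecB)$, be the vector returned by the \ANN query. Since $\vecB$ is a $(1+\eps)$-\ANN of $\vecA$ in $\PntSetA$ and $\pntA^\circ - \bpnt \in \PntSetA$, we have $\|\vecA - \vecB\| \leq (1+\eps) \|\vecA - (\pntA^\circ - \bpnt)\|$, which in terms of angles is $\sin(\theta/2) \leq (1+\eps)\sin(\theta^\circ/2)$.

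The main technical step is to deduce $\distSet{\query}{\bpnt\pntA} \leq (1+\eps)\distSet{\query}{\bpnt\pntA^\circ}$ from this single inequality; this is where the half-angle versus full-angle mismatch makes the argument nontrivial. I split into three cases. If $\theta^\circ > \pi/2$, then $\distSet{\query}{\bpnt\pntA^\circ} = 1 \geq \distSet{\query}{\bpnt\pntA}$ trivially. If $\theta, \theta^\circ \leq \pi/2$, I handle the subcase $\theta \leq \theta^\circ$ by monotonicity of $\sin$ on $[0, \pi/2]$, and the subcase $\theta > \theta^\circ$ by writing $\sin\theta = 2\sin(\theta/2)\cos(\theta/2)$ and using the additional fact $\cos(\theta/2) \leq \cos(\theta^\circ/2)$ (since $\cos$ is decreasing on $[0,\pi/4]$). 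The delicate case is $\theta^\circ \leq \pi/2 < \theta$, where $\distSet{\query}{\bpnt\pntA} = 1$; here $\theta > \pi/2$ forces $\sin(\theta/2) > 1/\sqrt{2}$, so the \ANN bound gives $\sin(\theta^\circ/2) > 1/(\sqrt{2}(1+\eps))$, and $\theta^\circ/2 \leq \pi/4$ gives $\cos(\theta^\circ/2) \geq 1/\sqrt{2}$, yielding $\sin\theta^\circ = 2\sin(\theta^\circ/2)\cos(\theta^\circ/2) > 1/(1+\eps)$, i.e., $(1+\eps)\distSet{\query}{\bpnt\pntA^\circ} > 1 = \distSet{\query}{\bpnt\pntA}$, as required. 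The query time is clearly dominated by the single \ANN call on $\DS_\PntSetA$, with the extra computation of $\vecA$ and of $\distSet{\query}{\bpnt\pntA}$ taking only $O(d)$ time.
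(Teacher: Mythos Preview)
Your proof is correct and follows essentially the same approach as the paper: both arguments pass to angles, use the identity $\sin\theta = 2\sin(\theta/2)\cos(\theta/2)$ to convert the chord-length \ANN guarantee into a segment-distance guarantee when $\theta \leq \pi/2$, and handle the case $\theta > \pi/2$ by showing $\sin\theta^\circ > 1/(1+\eps)$ via the bound $\sin(\theta^\circ/2) > 1/(\sqrt{2}(1+\eps))$ together with $\cos(\theta^\circ/2) \geq 1/\sqrt{2}$. The only cosmetic difference is that the paper first argues that the exact nearest point in $\PntSet$ is also the segment-optimal point (so the optimal angle is automatically the smallest), whereas you compare directly to the segment-optimal $\pntA^\circ$; your extra subcase $\theta \leq \theta^\circ$ is therefore degenerate but harmless.
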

\begin{proof}
    If $\distSet{\query}{\starY{\bpnt}{\PntSet}} =1$ the correctness
    is obvious.  Otherwise,
    $\distSet{\query}{\starY{\bpnt}{\PntSet}} < 1$, and assume for the
    sake of simplicity of exposition that $\bpnt$ is the origin.  Let
    $\pntE$ be the nearest neighbor to $\query$ in $\PntSet$, and let
    $\pntE'$ be the nearest point on $\bpnt \pntE$ to $\query$.
    Similarly, let $\pntA$ be the point returned by the \ANN data
    structure for $\query$, and let $\pntA'$ be the nearest point on
    $\bpnt \pntA$ to $\query$.  Moreover, let
    $\alpha = \Angle{\query \bpnt \pntE}$ and
    $\beta = \Angle{\query \bpnt \pntA}$. As
    $\distSet{\query}{\starY{\bpnt}{\PntSet}} <1$, we can also
    conclude that $\alpha$ is smaller than $\pi/2$.
    
       We have that
       \begin{math}
           \distY{\query}{\pntE} \leq \distY{\query}{\pntA} \leq
           (1+\eps) \distY{\query}{\pntE}.
       \end{math}
       Observe that
       \begin{math}
           \distY{\query}{\pntA} = 2 \sin( \beta/2 ),
       \end{math}
       and
       \begin{math}
           \distY{\query}{\pntE} = 2 \sin( \alpha/2 ).
       \end{math}
       As such, by the monotonicity of the $\sin$ function in the
       range $[0,\pi/2]$, we conclude that $\alpha \leq \beta$. This
       readily implies that $\pntE$ is the point of $\PntSet$ that
       minimizes the angle $\measuredangle \query \bpnt \pntE$ (i.e.,
       $\alpha$), which in turn minimizes the distance to $\query$ on
       the induced segment. As such, we have
       \begin{math}
           \distSet{\query}{\starY{\bpnt}{\PntSet}} =
           \distY{\query}{\pntE'}.
       \end{math}
     
\begin{figure}[!h]
\centerline{\includegraphics[width=4cm]{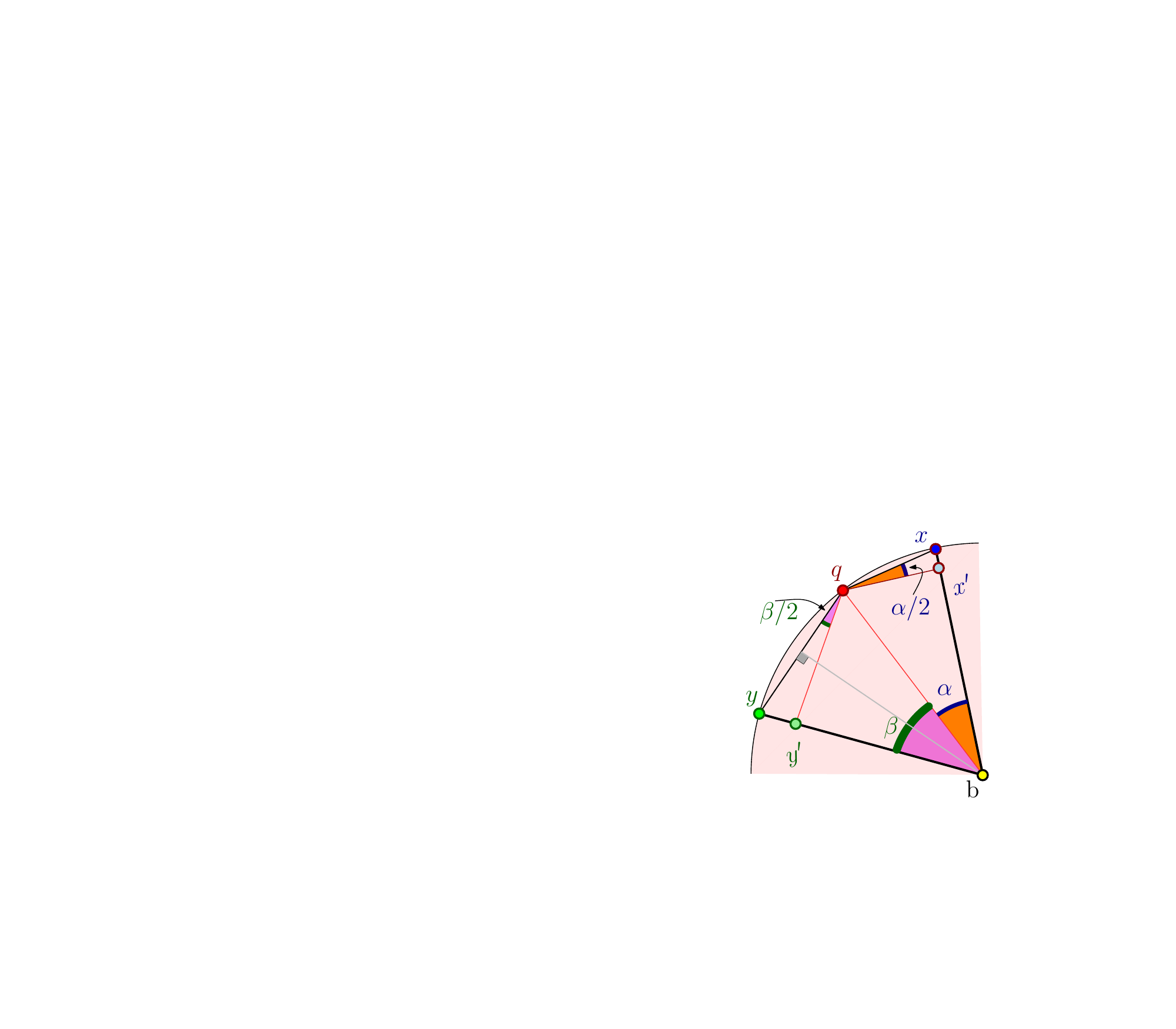}}%
\caption{}
\figlab{lemFigure}
\end{figure}
       
       As for the quality of approximation, first suppose that
       $\beta\leq \pi/2$. Then we have
       \begin{math}
           \frac{\distY{\query}{\pntA'}}{\distY{\query}{\pntE'}}%
           =%
           \frac{\distY{\query}{\pntA}\cos (\beta /2) }%
           {\distY{\query}{\pntE}\cos (\alpha/2)}%
           \leq%
           1+\eps,
       \end{math}
       since
       $\distY{\query}{\pntA}/ \distY{\query}{\pntE} \leq 1+\eps$ and
       $\alpha/2 \leq \beta/2 \leq \pi/2$, which in turn implies that
       $\cos (\beta/2) \leq \cos (\alpha/2)$. %

    Otherwise, we know that $\beta> \pi/2$ and thus
    $\distSet{\query}{\bpnt\pntA} = 1$. Now if
    $1\leq (1+\eps) \distY{\query}{\pntE'}$ we are done. Thus, we
    assume that $\distY{\query}{\pntE'}\leq 1/(1+\eps)$. Also, we have
    that
    $\distY{\query}{\pntE}\geq \distY{\query}{\pntA}/(1+\eps) >
    \sqrt{2}/(1+\eps)$ as $\beta> \pi/2$. This would imply that
    $\sqrt{2}/(1+\eps) <\distY{\query}{\pntE} =
    \distY{\query}{\pntE'}/\cos (\alpha/2) \leq 2/\sqrt{2} (1+\eps)$,
    as $\alpha\leq \pi/2$, which is a contradiction. Hence, the lemma
    holds.  %
\end{proof}


\subsection{Approximating the nearest %
   flat in a bouquet}
\seclab{bouquet}

\begin{definition}
    \deflab{dir}%
    For a set $\SetX$ and a point $\pnt$ in $\Re^d$, let
    $\pnt'= \nnY{\pnt}{\SetX}$.  We use $\dirY{\SetX}{\pnt}$ to denote
    the unit vector $(\pnt-\pnt') / \norm{\pnt-\pnt'}$, which is the
    \emphi{direction} of $\pnt$ in relation to $\SetX$.
\end{definition}

\mypar{Input \& task.}  We are given sets $\Pb$ and $\PntSet$ of $k-1$
and $n$ points, respectively, in $\Re^d$, and a parameter $\eps>0$.
The task is to build a data structure that can report quickly, for a
query point $\query$, a $(1+\eps)$-\ANN flat to $\query$ in
$\bouqY{\Pb}{\PntSet}$, see \defref{star}.

\mypar{Preprocessing.} %
Let $\FlatB = \flatX{\Pb}$.  The algorithm computes the set
\begin{math}
    \PntSetA = \Set{\dirY{\FlatB}{\pnt}, - \dirY{\FlatB}{\pnt}}{\pnt
       \in \PntSet \setminus \Pb},
\end{math}
which lies on a $d-k+2$ dimensional unit sphere in $\Re^{d-k+1}$, and
then builds a data structure $\DS_\PntSetA$ for answering (standard)
\ANN queries on $\PntSetA$.

\mypar{Answering a query.} %
For a query point $\query$, the algorithm does the following:
\begin{compactenum}[\quad(A)]
    \item Compute $\vecA = \dirY{\FlatB}{\query}$.
    \item Compute \ANN{} to $\vecA$ in $\PntSetA$, denoted by $\vecB$
    using the data structure $\DS_\PntSetA$.

    \item Let $\pnt$ be the point in $\PntSet$ corresponding to
    $\vecB$.

    \item Return the distance
    $\distSet{\query}{\flatX{\Pb \cup \brc{\pnt}}}$.
\end{compactenum}


\begin{definition}
    For sets $\SetX, \SetY \subseteq \Re^d$, let
    $\projY{\SetX}{\SetY} = \Set{\nnY{\query}{\SetX} }{\query \in
       \SetY}$ be the \emphi{projection} of $\SetY$ on $\SetX$.
\end{definition}

\begin{lemma}
    \lemlab{project:flat}%
    Consider two affine subspaces $\flatF \subseteq \flatH$ with a
    base point $\bpnt \in \flatF$, and the orthogonal complement
    affine subspace
    \begin{math}
        \ocX{\flatF} = \Set{ \bpnt + \vecA }{ \bigl.%
           \DotProd{\vecA}{\pntB -\pntC} = 0 %
           \text{ for all } \pntB,\pntC \in \flatF , \vecA\in
           \Re^d}\!.
    \end{math}
    For an arbitrary point $\query \in \Re^d$, let
    $\ocX{\query} = \projY{\ocX{\flatF}}{\query}$.
    We have that
    \begin{math}
        \distSet{\query}{\flatH} =%
        \distSet{\ocX{\query}}{\projY{\ocX{\flatF}}{\flatH}}.
    \end{math}
\end{lemma}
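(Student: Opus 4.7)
The plan is to reduce to an orthogonal decomposition argument. Without loss of generality I translate so that $\bpnt$ is the origin, which turns every affine subspace containing $\bpnt$ into a linear subspace. Let $L$ be the linear subspace parallel to $\flatF$, so that $\ocX{\flatF} = L^\perp$. Let $M$ be the linear subspace parallel to $\flatH$; since $\flatF \subseteq \flatH$ we have $L \subseteq M$, and I can set $N = M \cap L^\perp$, giving the orthogonal direct sum decomposition $M = L \oplus N$.

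Next I unpack the two sides of the claimed equality in terms of these components. On the right-hand side, the projection $\projY{\ocX{\flatF}}{\flatH}$ is exactly $N$: a point $\ell + n \in \flatH$ with $\ell \in L$, $n \in N$ has its unique closest point in $L^\perp$ equal to $n$, since $\ell$ is already the $L$-component. Write $\query = u + v$ with $u \in L$ and $v \in L^\perp$; then $\ocX{\query} = v$, and the right-hand distance becomes $\min_{n \in N}\norm{v - n}$, i.e.\ the distance from $v$ to $N$ inside $L^\perp$.

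For the left-hand side, any point of $\flatH$ has the form $\ell + n$ with $\ell \in L$, $n \in N$, and by Pythagoras (using that $L \perp L^\perp$ and $N \subseteq L^\perp$)
\[
\norm{\query - (\ell+n)}^2 = \norm{u - \ell}^2 + \norm{v - n}^2.
\]
Minimizing over $\ell \in L$ and $n \in N$ independently, the optimum takes $\ell = u$ and $n$ equal to the projection of $v$ onto $N$, so $\distSet{\query}{\flatH} = \min_{n \in N}\norm{v - n}$. This matches the right-hand side computed above, proving the lemma.

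The argument is essentially bookkeeping; the only mild obstacle is the translation from the affine setting of $\flatF$, $\flatH$, $\ocX{\flatF}$ to the linear decomposition, which is why I fix the base point $\bpnt$ as the origin at the outset and then invoke the orthogonal splitting $M = L \oplus N$ to apply Pythagoras cleanly.
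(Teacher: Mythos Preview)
Your proof is correct and follows essentially the same approach as the paper: translate so that $\bpnt$ is the origin, decompose $\flatH$ orthogonally as $\flatF \oplus (\flatH \cap \ocX{\flatF})$, identify $\projY{\ocX{\flatF}}{\flatH}$ with the second summand, and finish with Pythagoras. The only cosmetic difference is that the paper picks explicit coordinates (letting $\flatF$ span the first $f$ coordinates and $\flatG = \flatH \cap \ocX{\flatF}$ the next $g$), whereas you keep the argument coordinate-free via the splitting $M = L \oplus N$; your $N$ is exactly the paper's $\flatG$.
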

\begin{proof}
    For the sake of simplicity of exposition, assume that $\bpnt$ is
    the origin.  Let $\flatG = \ocX{\flatF} \cap \flatH$. We have that
    $\flatF$ and $\flatG$ are both contained in $\flatH$, and are
    orthogonal to each other. Furthermore, we have
    $\flatH = \flatF + \flatG$. Setting $f = \dimX{\flatF}$,
    $g = \dimX{\flatG}$, and $h =\dimX{\flatH}$, we have $h = f+g$,
    and
    \begin{math}
        \projY{\ocX{\flatF}}{\flatH}%
        =%
        \projY{\ocX{\flatF}}{\flatF + \flatG}%
        =%
        \projY{\ocX{\flatF}}{\flatF} + \projY{\ocX{\flatF}}{\flatG}%
        =%
        \flatG.
    \end{math}
    
    So, assume that the $\flatF$ spans the first $f$ coordinates of
    $\Re^d$, and $\flatG$ spans the next $g$ coordinates (i.e.,
    $\flatH$ is the linear subspace defined by the first $h$
    coordinates).
    If $\query = (q_1, \ldots, q_d)$, then
    $\ocX{\query} = (0, \ldots, 0, q^{}_{f+1}, \ldots, q^{}_d)$, and
    \begin{align*}
      \distSet{\query}{\flatH}%
      =%
      \sqrt{\sum_{i=h+1}^d q_i^2}
      = %
      \sqrt{\sum_{i=1}^f  0^2 +\sum_{i=h+1}^d q_i^2}
      =%
      \distSet{\ocX{\query}}{\flatG}%
      =%
      \distSet{\ocX{\query}}{\projY{\ocX{\flatF}}{\flatG}}.
    \end{align*}
\end{proof}

Using the notation of \asmref{ANN} and \defref{star}, we have the
following:

\begin{lemma}[\ANN flat in a bouquet]
    \lemlab{slr-k:b}%
    Given sets $\Pb$ and $\PntSet$ of $k-1$ and $n$ points,
    respectively, in $\Re^d$, and a parameter $\eps>0$, one can
    preprocess them, using a single \ANN data structure, such that
    given a query point, the algorithm can compute a $(1+\eps)$-\ANN
    to the closest $(k-1)$-flat in $\bouqY{\Pb}{\PntSet}$.  The
    algorithm space and preprocessing time is $O(\spY{n}{\eps})$, and
    the query time is
    $O(\tmQueryY{n}{\eps})$.
\end{lemma}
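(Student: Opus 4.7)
The plan is to reduce the problem to a single approximate nearest-neighbor query on a sphere, using the orthogonal complement of $\FlatB = \flatX{\Pb}$ to strip away the common $(k-2)$-flat shared by every candidate. Invoking \lemref{project:flat} with $\flatF = \FlatB$ and $\flatH = \flatX{\Pb \cup \brc{\pnt}}$ for each candidate $\pnt \in \PntSet \setminus \Pb$ yields $\distSet{\query}{\flatX{\Pb\cup\brc{\pnt}}} = \distSet{\ocX{\query}}{\projY{\ocX{\FlatB}}{\flatX{\Pb\cup\brc{\pnt}}}}$. After translating a base point of $\FlatB$ to the origin, each projection $\projY{\ocX{\FlatB}}{\flatX{\Pb\cup\brc{\pnt}}}$ is a line through the origin with unit direction $\dirY{\FlatB}{\pnt}$, so the original problem collapses to: given the point $\ocX{\query}$ in $\ocX{\FlatB}$ and a collection of $n-k+1$ lines through the origin, find the closest line.

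The second step is to convert this into a spherical nearest-neighbor problem, mimicking the star construction of \lemref{ann:u:star}. The perpendicular distance from $\ocX{\query}$ to the line with direction $\dirY{\FlatB}{\pnt}$ equals $\norm{\ocX{\query}}\,\sin \theta_\pnt$, where $\theta_\pnt$ is the angle between $\vecA := \dirY{\FlatB}{\query}$ and $\dirY{\FlatB}{\pnt}$. Because a line has two antipodal unit directions, I include both $\pm \dirY{\FlatB}{\pnt}$ in $\PntSetA$; this forces the true spherical nearest neighbor of $\vecA$ to make an angle at most $\pi/2$ with $\vecA$, so minimizing $\norm{\vecA-\vecB}$ over $\PntSetA$ is a monotone proxy for minimizing $\sin\theta_\pnt$. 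A single $(1+\eps)$-\ANN query on $\PntSetA$ then identifies the desired line.

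For the approximation-quality analysis I would reproduce the two-case angular argument of \lemref{ann:u:star}. Using the identities $\norm{\vecA-\vecB} = 2\sin(\theta/2)$ and $\sin\theta = 2\sin(\theta/2)\cos(\theta/2)$, if the optimal angle is $\alpha$ and the \ANN returns an angle $\beta \in [0, \pi/2]$, then $\alpha \leq \beta$ by the monotonicity of $\sin(\cdot/2)$, and hence $\sin\beta / \sin\alpha = (\sin(\beta/2)/\sin(\alpha/2)) \cdot (\cos(\beta/2)/\cos(\alpha/2)) \leq (1+\eps)$; multiplying by $\norm{\ocX{\query}}$ delivers the claimed $(1+\eps)$-approximation of perpendicular distance. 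The border case $\beta > \pi/2$ is ruled out by the same contradiction argument as in \lemref{ann:u:star}, since the antipode of the optimum lies in $\PntSetA$ at spherical distance at most $\sqrt{2}$ from $\vecA$, which is the main subtlety of the proof.

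The resources then match the claim: $\cardin{\PntSetA} = 2(n-k+1) = O(n)$, so the single underlying \ANN structure costs $\spY{n}{\eps}$ space and answers queries in $\tmQueryY{n}{\eps}$ time, with $d^{O(1)}$ overhead for the projections and the final flat-distance computation absorbed into these bounds. The only degenerate case to handle separately is $\query \in \FlatB$, which makes $\vecA$ undefined; but then $\query$ lies on every candidate flat, the distance is zero, and any flat can be returned after a $d^{O(1)}$ preprocessing test.
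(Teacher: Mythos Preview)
Your reduction is exactly the paper's: project to $\ocX{\FlatB}$ via \lemref{project:flat}, turning the bouquet into a pencil of lines through the base point, then note that after rescaling this is precisely the uniform-star setting of \lemref{ann:u:star}; the paper simply invokes that lemma as a black box rather than re-deriving the angular inequality inline.

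One quibble: your handling of $\beta>\pi/2$ is mis-stated. The antipode of the optimum is at chord distance $2\cos(\alpha/2)\geq\sqrt2$, not $\leq\sqrt2$, so no contradiction arises that way, and the contradiction in \lemref{ann:u:star} is specific to \emph{segments} (where the returned distance becomes $1$). For \emph{lines} the case $\beta>\pi/2$ is in fact harmless and needs no separate treatment: since $\alpha\leq\pi/2$ forces $\cos(\beta/2)\leq\cos(\alpha/2)$ regardless of whether $\beta$ exceeds $\pi/2$, the bound $\sin\beta/\sin\alpha=\bigl(\sin(\beta/2)/\sin(\alpha/2)\bigr)\bigl(\cos(\beta/2)/\cos(\alpha/2)\bigr)\leq 1+\eps$ goes through uniformly.
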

\begin{proof}
    The construction is described above, and the space and query time
    bounds follow directly from the algorithm description.  As for
    correctness, pick an arbitrary base point $\bpnt \in \Pb$, and let
    $\ocX{\FlatB}$ be the orthogonal complement affine subspace to
    $\FlatB$ passing through $\bpnt$.  Let
    $\ocX{\PntSet} = \projY{\ocX{\FlatB}}{\PntSet}$, and observe that
    $\projY{\ocX{\FlatB}}{\Pb}$ is the point $\bpnt$. In particular,
    the projection of $\bouqA = \bouqY{\Pb}{\PntSet}$ to
    $\ocX{\FlatB}$ is the bouquet of lines
    $\ocX{\bouqA} = \bouqY{\brc{\bpnt}}{\ocX{\PntSet}}$. Applying
    \lemref{project:flat} to each flat of $\bouqA$, implies that
    $\distSet{\query}{\bouqA} = \distSet{\ocX{\query}}{\ocX{\bouqA}}$,
    where $\ocX{\query} = \projY{\ocX{\FlatB}}{\query}$. Let
    $\SphereA$ be the sphere of radius
    $r = \distY{\ocX{\query}}{\bpnt}$ centered at $\bpnt$. Clearly,
    the closest line in the bouquet, is the closest point to the
    uniform star formed by clipping the lines of $\ocX{\bouqA}$ to
    $\SphereA$. Since all the lines of $\ocX{\bouqA}$ pass through
    $\bpnt$, scaling space around $\bpnt$ by some factor $\alpha$,
    just scales the distances between $\query$ and $\ocX{\bouqA}$ by
    $x$. As such, scale space so that $r=1$. But then, this is a
    uniform star with radius $1$, and the algorithm of
    \lemref{ann:u:star} applies (which is exactly what the current
    algorithm is doing). Thus, it correctly identifies a line in the
    bouquet that realizes the desired approximation, implying the
    correctness of the query algorithm.
\end{proof}

\subsection{The result}
\seclab{projection} Here, we show simple algorithms for the \ANIF and
the \ANLF problems by employing \lemref{slr-k:b}. We assume $\eps>0$
is a prespecified approximation parameter.

\mypar{Approximating the affine \SLR.}  As discussed earlier, the
goal is to find an approximately closest $(k-1)$-dimensional flat that
passes through $k$ points of $\PntSet$, to the query.  To this end, we
enumerate all possible $k-1$ subsets of points of
$\Pb \subset_{k-1}\PntSet$, and build for each such base set $\Pb$,
the data structure of \lemref{slr-k:b}. Given a query, we compute the
\ANN flat in each one of these data structures, and return the closest
one found.

\begin{theorem}
    \thmlab{slr-k}%
    %
    The aforementioned algorithm computes a $(1+\eps)$-\ANN to the
    closest $(k-1)$-flat in $\FlatsY{k}{\PntSet}$, see
    \defref{induced:flats}.  The space and preprocessing time is
    $O(n^{k-1}\spY{n}{\eps})$, and the query time is
    $O(n^{k-1}\tmQueryY{n}{\eps})$.
\end{theorem}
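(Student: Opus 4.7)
The plan is to verify correctness and bound the resources by directly invoking \lemref{slr-k:b} on each of the $\binom{n}{k-1}$ possible base sets, and taking the best answer.

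For correctness, I would argue as follows. Let $\flat^* \in \FlatsY{k}{\PntSet}$ be the optimal $(k-1)$-flat, i.e., the one realizing $\distSet{\query}{\bigl.\FlatsY{k}{\PntSet}}$. By \defref{induced:flats}, $\flat^* = \flatX{S^*}$ for some $k$-point subset $S^* \subset_k \PntSet$. Fix any point $\pnt^* \in S^*$, and set $\Pb^* = S^* \setminus \brc{\pnt^*}$, which is a $(k-1)$-point subset of $\PntSet$. Since the algorithm enumerates all such $(k-1)$-subsets, it builds the data structure of \lemref{slr-k:b} for $\Pb^*$. By construction, $\flat^* \in \bouqY{\Pb^*}{\PntSet}$, so the bouquet of $\Pb^*$ contains the optimal flat, and therefore
\[
    \distSet{\query}{\bigl.\bouqY{\Pb^*}{\PntSet}} = \distSet{\query}{\flat^*} = \distSet{\query}{\bigl.\FlatsY{k}{\PntSet}}.
\]
By \lemref{slr-k:b}, the query on this data structure returns a flat $\flat \in \bouqY{\Pb^*}{\PntSet}$ with $\distSet{\query}{\flat} \leq (1+\eps) \distSet{\query}{\bigl.\bouqY{\Pb^*}{\PntSet}}$. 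Since the algorithm returns the minimum over all base sets, the flat it outputs has distance at most $(1+\eps)\distSet{\query}{\bigl.\FlatsY{k}{\PntSet}}$, as required.

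For the resource bounds, the algorithm builds one copy of the \lemref{slr-k:b} data structure for every $(k-1)$-subset of $\PntSet$. There are $\binom{n}{k-1} = O(n^{k-1})$ such subsets, and by \lemref{slr-k:b}, each individual structure uses $O(\spY{n}{\eps})$ space and preprocessing time, and $O(\tmQueryY{n}{\eps})$ query time; summing over all base sets gives the claimed $O(n^{k-1}\spY{n}{\eps})$ space/preprocessing and $O(n^{k-1}\tmQueryY{n}{\eps})$ query time (the final post-processing of comparing the $O(n^{k-1})$ returned distances is absorbed into the query bound since $\tmQueryY{n}{\eps}$ is at least linear in the ambient dimension, which is itself $\Omega(1)$).

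There is essentially no obstacle here: all of the nontrivial geometric work has already been done in \lemref{slr-k:b} (reducing the bouquet case to a uniform star via orthogonal projection) and \lemref{ann:u:star} (reducing the uniform-star \ANN to a single sphere \ANN query). The only thing to be careful about is that the approximation guarantee of \lemref{slr-k:b} is a multiplicative $(1+\eps)$ approximation to the nearest flat within its own bouquet, not a universal guarantee; this is why it is crucial that the bouquet corresponding to $\Pb^*$ actually contains $\flat^*$, which is immediate from the definitions.
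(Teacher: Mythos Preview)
Your proposal is correct and follows exactly the paper's approach: the paper does not even provide a separate proof for this theorem, as it follows immediately from the algorithm description (enumerate all $\binom{n}{k-1}$ base sets $\Pb$, apply \lemref{slr-k:b} to each, and return the best). Your write-up just makes explicit the obvious correctness argument (the optimal flat lies in one of the bouquets) and the resource accounting.
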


\mypar{Approximating the \SLR.}  The goal here is to find an
approximately closest $k$-dimensional flat that passes through $k$
points of $\PntSet$ and the origin $\origin$, to the query.  We
enumerate all possible $k-1$ subsets of points of
$\Pb'\subset_{k-1}\PntSet$, and build for each base set
$\Pb=\Pb'\cup\{\origin\}$, the data structure of
\lemref{slr-k:b}. Given a query, we compute the \ANN flat in each one
of these data structures, and return the closest one found.

\begin{theorem}
    \thmlab{slr-k2}%
    The aforementioned algorithm computes a $(1+\eps)$-\ANN to the
    closest $k$-flat in $\LFlatsY{k}{\PntSet}$, see
    \defref{induced:flats}, with space and preprocessing time of
    $O(n^{k-1}\spY{n}{\eps})$, and the query time of
    $O(n^{k-1}\allowbreak\tmQueryY{n}{\eps})$.
\end{theorem}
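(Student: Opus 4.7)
The plan is to reduce \thmref{slr-k2} to essentially the same argument as \thmref{slr-k}, exploiting the simple observation that a $k$-dimensional linear subspace through $\origin$ spanned by $k$ points of $\PntSet$ is identical to the $k$-flat (affine hull) through those $k$ points together with $\origin$. Thus any element of $\LFlatsY{k}{\PntSet}$ can be written as $\flatX{\Pb' \cup \brc{\pnt} \cup \brc{\origin}}$, where $\Pb' \subset_{k-1} \PntSet$ and $\pnt \in \PntSet \setminus \Pb'$.

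First I would verify that the enumeration is complete: every $k$-subset of $\PntSet$ can be decomposed as $\Pb' \cup \brc{\pnt}$ with $\Pb' \subset_{k-1} \PntSet$ and $\pnt \in \PntSet \setminus \Pb'$, so ranging $\Pb'$ over all $\binom{n}{k-1}$ such subsets covers every induced linear $k$-subspace. For each such $\Pb'$, set $\Pb = \Pb' \cup \brc{\origin}$ (of size $k$) and invoke \lemref{slr-k:b}. Although that lemma is stated for a base of size $k-1$, its construction and proof use $\cardin{\Pb}$ only to fix the dimensions of the flat $\FlatB = \flatX{\Pb}$ and its orthogonal complement $\ocX{\FlatB}$; the projection argument then reduces the query to an \ANN on a uniform star in $\ocX{\FlatB}$ regardless of the size of $\Pb$. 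Hence the lemma applies verbatim with $\cardin{\Pb} = k$, producing a data structure that computes a $(1+\eps)$-\ANN $k$-flat in $\bouqY{\Pb}{\PntSet}$ using $O\pth{\spY{n}{\eps}}$ space and preprocessing time and $O\pth{\tmQueryY{n}{\eps}}$ query time.

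Finally I would sum up the resources and verify correctness. There are $\binom{n}{k-1} = O(n^{k-1})$ base sets, yielding total space and preprocessing $O\pth{n^{k-1} \spY{n}{\eps}}$ and total query time $O\pth{n^{k-1} \tmQueryY{n}{\eps}}$. For correctness, let $\flat^{\Star} \in \LFlatsY{k}{\PntSet}$ be the true nearest linear $k$-subspace to $\query$, and write $\flat^{\Star} = \flatX{\Pb'^{\Star} \cup \brc{\pnt^{\Star}} \cup \brc{\origin}}$ for some $\Pb'^{\Star} \subset_{k-1} \PntSet$. The iteration corresponding to $\Pb'^{\Star}$ returns a $k$-flat at distance at most $(1+\eps)\,\distSet{\query}{\flat^{\Star}}$, so the overall minimum is also within this bound. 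I do not anticipate any substantive obstacle: the reduction is essentially cosmetic, adjoining $\origin$ to each enumerated base, and all heavy lifting has already been done in \lemref{slr-k:b}.
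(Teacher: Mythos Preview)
Your proposal is correct and follows exactly the paper's approach: enumerate all $\Pb'\subset_{k-1}\PntSet$, set $\Pb=\Pb'\cup\brc{\origin}$, and invoke \lemref{slr-k:b} for each such base. You are in fact slightly more careful than the paper, since you explicitly note and justify that \lemref{slr-k:b} is being applied with $\cardin{\Pb}=k$ rather than the stated $k-1$, which the paper glosses over.
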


\section{Approximating the nearest induced simplex}
\seclab{sec:nearest-simplex} In this section we consider the online
variant of the \ANIS problem. Here, we are given the parameter $k$,
and the goal is to build a data structure, such that given a query
point $\query$, it can find a $(1+\eps)$-ANN induced $(k-1)$-simplex.

As before, we would like to fix a set $\Pb$ of $k-1$ points and look
for the closest simplex that contains $\Pb$ and an additional point
from $\PntSet$. The plan is to filter out the simplices for which the
projection of the query on to them falls outside of the interior of
the simplex. Then we can use the algorithm of the previous section to
find the closest flat corresponding to the feasible simplices (the
ones that are not filtered out).  First we define a canonical space
and map all these simplices and the query point to a unique
$(k+1)$-dimensional space. As it will become clear shortly, the goal
of this conversion is to have a common lower dimensional space through
which we can find all feasible simplices using range searching
queries.

\subsection{Simplices and distances}

\subsubsection{Canonical realization}

In the following, we fix a sequence
$\Pb = \pth{\pnt_1, \ldots, \pnt_{k-1}}$ of $k-1$ points in $\Re^d$.
We are interested in arguing about simplices induced by $k+1$ points,
i.e., $\Pb$, an additional input point $\pnt_k$, and a query point
$\query$. Since the ambient dimension is much higher (i.e., $d$), it
would be useful to have a common canonical space, where we can argue
about all entities.

\begin{definition}
    For a given set of points $\Pb$, let $\FlatB = \flatX{\Pb}$.  Let
    $\pnt \notin \FlatB$ be a given point in $\Re^d$, and consider the
    two connected components of
    $\flatX{\Pb \cup \{\pnt\}} \setminus \FlatB$, which are
    \emphi{halfflats}. The halfflat containing $\pnt$ is the
    \emphi{positive halfflat}, and it is denoted by
    $\pFlatY{\Pb}{\pnt}$.
\end{definition}

Fix some arbitrary point $\pntD^* \in \Re^d \setminus \FlatB$, and let
$\FlatC = \pFlatY{\Pb}{\pntD^*}$ be a \emph{canonical} such
halfflat. Similarly, for a fixed point
$\pntD^{**} \in \Re^d \setminus \flatX{\Pb \cup \brc{\pntD^*}}$, let
$\FlatD = \pFlatY{\Pb \cup \pntD^*}{\pntD^{**}}$. Conceptually, it is
convenient to consider $\FlatD = \Re^{k-2} \times \Re \times \Re^+$,
where the first $k-2$ coordinates correspond to $\FlatB$, and the
first $k-1$ coordinates correspond to $\FlatC$ (this can be done by
applying a translation and a rotation that maps $\FlatD$ into this
desired coordinates system). This is the \emphi{canonical
   parameterization} of $\FlatD$.

The following observation formalizes the following: Given a $(k-1)$
dimensional halfflat $\FlatC$ passing through $\Pb$, a point on
$\FlatC$ is uniquely identified by its distances from the points in
$\Pb$.

\begin{observation}
    \obslab{obs:pgl} Given a sequence of distances
    $\bm{\vLen} = \pth{ \len_1, \ldots, \len_{k-1}}$, there might be
    only one unique point $\pnt = \pFlat{\vLen} \in \FlatC$, such that
    $\distY{\pnt}{\pnt_i} = \len_i$, for $i=1, \ldots, k-1$. Such a
    point might not exist at all\footnote{\emph{Trilateration} is the
       process of determining the location of $\pnt \in \FlatC$ given
       $\vLen$. \emph{Triangulation} is the process of determining the
       location when one knows the angles (not the distances).}.
\end{observation}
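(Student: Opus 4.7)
The plan is to reduce the observation to elementary linear algebra carried out in the canonical parameterization of $\FlatD$. In that parameterization $\FlatD = \Re^{k-2} \times \Re \times \Re^+$, so the first $k-2$ coordinates span $\FlatB$; each base point then has the form $\pnt_i = (\pnt_i', 0, 0)$ with $\pnt_i' \in \Re^{k-2}$, while any candidate point $\pnt \in \FlatC$ has the form $\pnt = (\pnt', h, 0)$ with $\pnt' \in \Re^{k-2}$ and $h > 0$ (strictly positive because $\FlatC$ is an \emph{open} halfflat).

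The next step is to expand the distance constraints. Each equation $\distY{\pnt}{\pnt_i} = \len_i$ becomes $\norm{\pnt' - \pnt_i'}^2 + h^2 = \len_i^2$ for $i = 1, \ldots, k-1$. Subtracting the $i=1$ equation from the $i$-th equation eliminates both $h^2$ and the quadratic term $\norm{\pnt'}^2$, yielding the linear system
\[
    2 \DotProd{\pnt'}{\pnt_1' - \pnt_i'} \;=\; (\len_i^2 - \len_1^2) - \norm{\pnt_i'}^2 + \norm{\pnt_1'}^2,
    \qquad i = 2, \ldots, k-1,
\]
which is $k-2$ linear equations in the single unknown $\pnt' \in \FlatB \cong \Re^{k-2}$.

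The key step is the uniqueness of the solution to this system. The coefficient rows are $\pnt_1' - \pnt_i'$ for $i = 2, \ldots, k-1$, and these $k-2$ vectors are linearly independent in $\Re^{k-2}$ exactly because $\pnt_1, \ldots, \pnt_{k-1}$ are affinely independent in $\FlatB$ (granted by the paper's non-degeneracy assumption). Thus $\pnt'$, if it exists, is uniquely determined by $\vLen$. Once $\pnt'$ is pinned down, the $i=1$ equation gives $h^2 = \len_1^2 - \norm{\pnt' - \pnt_1'}^2$, and because $\pnt$ lies in the positive halfflat we take the positive square root, which determines $h$ uniquely as well; the other equations then hold automatically by construction whenever the subtracted equations do. If the right-hand side for $h^2$ is negative, or if the recovered $(\pnt', h, 0)$ turns out to violate the remaining equations, then no such point exists, giving the final sentence of the observation. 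The only substantive ingredient is the linear independence of the $\pnt_1' - \pnt_i'$'s, and I do not anticipate any real obstacle beyond bookkeeping.
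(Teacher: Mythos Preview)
Your argument is correct. The paper states this as an observation without proof (the footnote merely names the process \emph{trilateration}), so there is no paper proof to compare against; your linear-algebra justification via subtracting the squared-distance equations to obtain a full-rank $(k-2)\times(k-2)$ linear system in $\pnt'$, followed by recovering $h$ from the positive square root, is exactly the standard way to make the observation rigorous. One cosmetic remark: since the claim lives entirely in $\FlatC$, you could work directly in $\Re^{k-2}\times\Re^+$ and drop the trailing $0$ coordinate from $\FlatD$; also, your caveat ``or if the recovered $(\pnt',h,0)$ turns out to violate the remaining equations'' is superfluous, because once $\pnt'$ solves the subtracted system and the $i=1$ equation fixes $h$, all $k-1$ equations are automatically satisfied---non-existence occurs only when $h^2\le 0$.
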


Next, given $\FlatC$ and $\FlatD$, a point $\query$ and a value
$\len < \distSet{\query}{\FlatB}$, we aim to define the points
$\qX{\len}$ and $\qAX{\len}$.  Consider a point
$\query \in \Re^d \setminus \FlatB$ (not necessarily the query point),
and consider any positive $(k-1)$-halfflat $\flatA$ that contains
$\Pb$, and is in distance $\len$ from $\query$. Furthermore assume
that
\begin{math}
    \len = \distSet{\query}{\flatA} < \distSet{\query}{\FlatB}.
\end{math}
Let $\qf$ be the projection of $\query$ to $\flatA$. Observe that, by
the Pythagorean theorem, we have that
$d_i = \distY{\qf}{\pnt_i} = \sqrt{\distY{\query}{\pnt_i}^2 -
   \len^2}$, for $i=1,\ldots, k-1$. Thus, the above observation
implies, that the canonical point
$\qX{\len} = \pFlat{\bigl. d_1, \ldots, d_{k-1}}$ (see
\obsref{obs:pgl}) is uniquely defined. Note that this is somewhat
counterintuitive as the flat $\flatA$ and thus the point $\qf$ are not
uniquely defined. Similarly, there is a unique point
$\qAX{\len} \in \FlatD$, such that:
\begin{inparaenum}[(i)]
    \item the projection of $\qAX{\len}$ to $\FlatC$ is the point
    $\qX{\len}$,
    \item $\distY{\qAX{\len}}{\qX{\len}} = \len$, and these two also
    imply that
    \item $\distY{\qAX{\len}}{\pnt_i} = \distY{\query}{\pnt_i}$, for
    $i=1, \ldots, k-1$.
\end{inparaenum}

Therefore, given $\FlatC$ and $\FlatD$, a point $\query$
and a value $\len < \distSet{\query}{\FlatB}$, the points $\qX{\len}$
and $\qAX{\len}$ are uniquely defined. Intuitively, for a halfflat
that passes through $\Pb$ and is at distance $\len$ from the query,
$\qX{\len}$ models the position of the projection of the query onto
the halfflat, and $\qAX{\len}$ models the position of the query point
itself with respect to this halfflat. Next, we prove certain
properties of these points.

\subsubsection{Orbits}
\begin{definition}
    For a set of points $\Pb$ in $\Re^d$, define $\PrismB$ to be the
    open set of all points in $\Re^d$, such that their projection into
    $\FlatB$ lies in the interior of the simplex
    $\simX{\Pb} = \CHX{\Pb}$. The set $\PrismB$ is a \emphi{prism}.
\end{definition}

Consider a query point $\query \in \PrismB$, and its projection
$\qB = \nnY{\query}{\FlatB}$. Let
$\rad = \radBX{\query} = \distY{\query}{\qB}$ be the \emphi{radius} of
$\query$ in relation to $\Pb$. Using the above canonical
parameterization, we have that $\qX{0} = (\qB,\rad)$, and
$\qAX{0} = (\qX{0}, 0) = (\qB, \rad, 0)$. More generally, for
$\len \in [0,\rad]$, we have
\begin{align}
  \qX{\len} = \pth{\qB, \sqrt{\rad^2 - \len^2}}
  \qquad \text{ and } \qquad%
  \qAX{\len} = \pth{\qB, \sqrt{\rad^2 - \len^2}, \len}.
  \eqlab{q:A:X}
\end{align}
The curve traced by $\qAX{\len}$, as $\len$ varies from $0$ to $\rad$,
is the \emphi{orbit} of $\query$ -- it is a quarter circle with radius
$\rad$.  The following lemma states a monotonicity property that is
the basis for the binary search over the value of $\len$.

\begin{lemma}
    \lemlab{monotone}%
    (i) Define
    $\wquery(\len) = \bigl(\sqrt{\rad^2 - \len^2}, \len \bigr)$, and
    consider any point $\pnt = (x,0)$, where $x \geq 0$. Then, the
    function $\wdX{\len } = \distY{\wquery(\len)}{\pnt}$ is
    monotonically increasing for $\ell \in [0,\rad]$.

    (ii) 
    For any point $\pnt$ in the halfflat $\FlatC$, the function
    $\distY{\qAX{\len}}{\pnt}$ is monotonically increasing.
\end{lemma}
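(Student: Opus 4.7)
\emph{Proof proposal.} The plan is to prove part (i) by a direct calculation on the squared distance, and then reduce part (ii) to part (i) by unpacking the canonical parameterization of $\FlatD$.

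For part (i), I would write
\[
\wdX{\len}^2 = \pth{\sqrt{\rad^2 - \len^2} - x}^2 + \len^2
= \rad^2 + x^2 - 2x\sqrt{\rad^2 - \len^2}.
\]
Since $x \geq 0$ and the function $\len \mapsto \sqrt{\rad^2 - \len^2}$ is monotonically decreasing on $[0,\rad]$, the term $-2x\sqrt{\rad^2 - \len^2}$ is monotonically non-decreasing in $\len$, hence so is $\wdX{\len}^2$, and thus $\wdX{\len}$. (If $x=0$ the function is constant, consistent with the claim.)

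For part (ii), I would use the canonical parameterization of $\FlatD = \Re^{k-2} \times \Re \times \Re^{+}$, in which the first $k-2$ coordinates span $\FlatB$ and the first $k-1$ span $\FlatC$. Then any $\pnt \in \FlatC$ has the form $\pnt = (\pnt_\Pb, \pnt_c, 0)$ with $\pnt_c \geq 0$, while by \Eqref{q:A:X} we have $\qAX{\len} = \pth{\qB, \sqrt{\rad^2 - \len^2},\, \len}$. Therefore
\[
\distY{\qAX{\len}}{\pnt}^2
= \distY{\qB}{\pnt_\Pb}^2 + \pth{\sqrt{\rad^2 - \len^2} - \pnt_c}^2 + \len^2.
\]
The first term is independent of $\len$, and the remaining two terms are exactly $\wdX{\len}^2$ from part (i) with $x = \pnt_c \geq 0$. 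By part (i) this sum is monotonically non-decreasing in $\len$, so $\distY{\qAX{\len}}{\pnt}$ is too, as claimed.

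The only mild obstacle is keeping the canonical coordinates straight long enough to see that a point of $\FlatC$ has a zero in the $\FlatD$-direction; once that is set up, both parts are immediate from the identity above and the monotonicity of $\sqrt{\rad^2 - \len^2}$.
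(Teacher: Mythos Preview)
Your proof is correct and follows essentially the same route as the paper: expand $\wdX{\len}^2$ to $\rad^2 + x^2 - 2x\sqrt{\rad^2-\len^2}$ and read off monotonicity from the monotonicity of $\sqrt{\rad^2-\len^2}$, then for (ii) split $\distY{\qAX{\len}}{\pnt}^2$ into the $\FlatB$-component $\distY{\qB}{\pnt_\Pb}^2$ (constant in $\len$) plus the two-dimensional term handled by (i). The paper does exactly this, writing $\pnt=(\pnt_\Pb,x)$ where you write $(\pnt_\Pb,\pnt_c,0)$.
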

\begin{proof}
    (i)
    \begin{math}
        D(\len)%
        =%
        (\wdX{\len })^2%
        =%
        \bigl( \sqrt{\rad^2 - \len^2} - x \bigr)^2 + \len^2%
        =%
        \rad^2 - 2 x \sqrt{\rad^2 - \len^2} + x^2,
    \end{math}
    and clearly this is a monotonically increasing function for
    $\ell \in [0,\rad]$.

    (ii) Let $\pnt_\Pb = \nnY{\pnt}{\FlatB}$, and let $x \geq 0$ be a
    number such that in the canonical representation, we have that
    $\pnt = (\pnt_\Pb, x)$. Using \Eqref{q:A:X}, we have
    \begin{align*}
      D(\len)%
      =%
      \distY{\qAX{\len}}{\pnt}^2%
      =%
      \distY{\bigl(\qB, \sqrt{\rad^2 - \len^2}, \len
      \bigr)}{ \bigl( \pnt_\Pb,x,0 \bigr) }^2%
      =%
      \distY{\qB}{\pnt_\Pb}^2 
      + \pth{\bigl.\wdX{\len}}^2,
    \end{align*}
    and the claim readily follows from (i) as $\qB$ and $\pnt_\Pb$ do
    not depend on $\len$.
\end{proof}
\subsubsection{Distance to a simplex via distance to the flat}

\begin{definition}
    \deflab{p:simplex}%
    Given a point $\query$, and a distance $\len$, let
    $\simCY{\query}{\len}$ be the unique simplex in $\FlatC$, having
    the points of $\Pb$ and the point $\qX{\len}$ as its vertices.
    Similarly, let $\simCX{\query} = \simCY{\query}{0}$.
\end{definition}

\noindent Next, we provide the necessary and sufficient conditions
for a simplex to be feasible. This lemma lies at the heart of our data structure.

\begin{lemma}
    \lemlab{w:nn}%
    Given a query point $\query \in \PrismB$, and a point
    $\pnt_k \in \PntSet \setminus \Pb$, for a number
    $0<x\leq \distSet{\query}{\FlatB}$ we have
    \begin{compactenum}[\quad(A)]
        \item ${\qX{x}} \in \simCX{\pnt_k}$ and
        $\distSet{\query}{\pFlatY{\Pb }{\pnt_k}} \leq x$ $\implies$
        \begin{math}
            \distSet{\query}{\simX{\Pb \cup \brc{\pnt_k}}} \leq x.
        \end{math}

        \item
        \begin{math}
            \distSet{\query}{\simX{\Pb \cup \brc{\pnt_k}}} \leq x
        \end{math}
        and $\query \in \PrismBX{\pnt_k}$ $\implies$
        ${\qX{x}} \in \simCX{\pnt_k}$ and
        $\distSet{\query}{\pFlatY{\Pb }{\pnt_k}} \leq x$.
    \end{compactenum}
\end{lemma}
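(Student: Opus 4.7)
The central observation is that $\pFlatY{\Pb}{\pnt_k}$ and the canonical halfflat $\FlatC$ are both $(k-1)$-dimensional halfflats bounded by $\FlatB$, so there is an isometry $\phi \colon \pFlatY{\Pb}{\pnt_k} \to \FlatC$ that fixes $\Pb$ pointwise. Under $\phi$, the simplex $\simX{\Pb \cup \brc{\pnt_k}}$ maps to $\simCX{\pnt_k}$; and, writing $w$ for the projection of $\query$ onto $\pFlatY{\Pb}{\pnt_k}$ and $y = \distSet{\query}{\pFlatY{\Pb}{\pnt_k}}$, the defining property of $\qX{\cdot}$ gives $\phi(w) = \qX{y}$. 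By \Eqref{q:A:X} the points $\qX{y}$ and $\qX{x}$ share the $\qB$-coordinate and differ only in their last coordinate, with respective heights $\sqrt{\rad^2 - y^2}$ and $\sqrt{\rad^2 - x^2}$ where $\rad = \radBX{\query}$. This picture drives both implications.

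For (A), the plan is to set $\pnt' = \phi^{-1}(\qX{x}) \in \simX{\Pb \cup \brc{\pnt_k}}$ and bound $\distY{\query}{\pnt'}$. Since $\query - w$ is orthogonal to $\pFlatY{\Pb}{\pnt_k}$, Pythagoras together with the fact that $\phi$ is an isometry gives
\[
\distY{\query}{\pnt'}^2 \;=\; y^2 + \distY{w}{\pnt'}^2 \;=\; y^2 + \distY{\qX{y}}{\qX{x}}^2 \;=\; y^2 + \Bigl(\sqrt{\rad^2-y^2} - \sqrt{\rad^2-x^2}\Bigr)^2.
\]
A short algebraic simplification reduces the target inequality $\distY{\query}{\pnt'} \leq x$ to the hypothesis $y \leq x$, yielding $\distSet{\query}{\simX{\Pb \cup \brc{\pnt_k}}} \leq x$.

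For (B), the hypothesis $\query \in \PrismBX{\pnt_k}$ places the projection of $\query$ onto $\flatX{\Pb \cup \brc{\pnt_k}}$ inside $\intX{\simX{\Pb \cup \brc{\pnt_k}}}$, which sits in the open halfflat $\pFlatY{\Pb}{\pnt_k}$. Hence that projection coincides with $w$, so $\distSet{\query}{\simX{\Pb \cup \brc{\pnt_k}}} = \distY{\query}{w} = y$, which combined with the assumption gives $y \leq x$ (the distance conclusion). For the other conclusion, $w \in \simX{\Pb \cup \brc{\pnt_k}}$ implies $\qX{y} = \phi(w) \in \simCX{\pnt_k}$, while $\query \in \PrismB$ implies $(\qB, 0) \in \intX{\simX{\Pb}} \subseteq \simCX{\pnt_k}$. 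Both $(\qB,0)$ and $\qX{y}$ share the first $k-2$ coordinates $\qB$, and the heights satisfy $0 \leq \sqrt{\rad^2 - x^2} \leq \sqrt{\rad^2 - y^2}$, so $\qX{x}$ lies on the segment joining them; convexity of $\simCX{\pnt_k}$ then yields $\qX{x} \in \simCX{\pnt_k}$.

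The main subtlety is setting up $\phi$ and verifying that, under the canonical parameterization, the whole family $\{\qX{\ell}\}_{\ell \in [0, \rad]}$ sits on the vertical fiber above $\qB$ with heights monotonically decreasing in $\ell$; once that is in place, (A) is Pythagorean bookkeeping and (B) is a one-line convexity argument along the same fiber.
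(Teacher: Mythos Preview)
Your proof is correct and follows the same overall blueprint as the paper: both arguments work in the canonical halfflat $\FlatC$, identify the projection $w$ of $\query$ onto $\pFlatY{\Pb}{\pnt_k}$ with $\qX{y}$ (where $y=\distSet{\query}{\pFlatY{\Pb}{\pnt_k}}$), and for part (B) both deduce $\qX{x}\in\simCX{\pnt_k}$ by noting that $\qX{x}$ lies on the segment from $\qB$ to $\qX{y}$ and invoking convexity.

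The one genuine difference is in part (A). The paper lifts to the point $\qAX{\len}\in\FlatD$, uses the identity $\distSet{\query}{\simX{\Pb\cup\brc{\pnt_k}}}=\distSet{\qAX{y}}{\simCX{\pnt_k}}$, and then appeals to \lemref{monotone} to compare $\distSet{\qAX{y}}{\simCX{\pnt_k}}$ with $\distSet{\qAX{x}}{\simCX{\pnt_k}}=x$. You instead produce an explicit witness $\pnt'=\phi^{-1}(\qX{x})$ inside the simplex and verify $\distY{\query}{\pnt'}\le x$ by a direct Pythagorean calculation. Your route is a bit more elementary in that it sidesteps the monotonicity lemma entirely; the paper's route is cleaner once that lemma is in hand and makes the monotone structure (later exploited algorithmically) more visible.

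One small point worth making explicit: your Pythagorean step ``$q-w$ is orthogonal to $\pFlatY{\Pb}{\pnt_k}$'' is only literally true when $w$ lies in the open halfflat, i.e.\ when $y<\rad$. The paper separates out the boundary case $w\in\FlatB$ at the start of its proof of (A). In your argument that case forces $y=x=\rad$, whence $\pnt'=w=\qB$ and the bound $\distY{\query}{\pnt'}=\rad=x$ is immediate, so nothing breaks---but it deserves a sentence.
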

\begin{proof}
    (A) Let $\nnq = \nnY{\query}{\pFlatY{\Pb }{\pnt_k}}$. If
    $\nnq \in \FlatB$ then $\nnq \in \intX{\simX{\Pb}}$ (see
    \defref{simplex}), because $\query \in \PrismB$. But then,
    $\distSet{\query}{\simX{\Pb \cup \brc{\pnt_k}}} =
    \distY{\query}{\nnq} \leq x$, as desired.

    Observe that by definition, the projection of $\qAX{x}$ to
    $\FlatC$ is the point $\qX{x}$, and since
    ${\qX{x}} \in \simCX{\pnt_k}$, it follows that
    \begin{math}
        \distSet{\qAX{x}}{ \simCX{\pnt_k}}%
        =%
        \distY{\qX{x}}{\qAX{x}}%
        =%
        x.
    \end{math}
    As such, let $\len = \distSet{\query}{\pFlatY{\Pb }{\pnt_k}}$, and
    observe that by definition of the parameterization
    \begin{math}
        \distSet{\query}{\simX{\Pb \cup \brc{\pnt_k}}}%
        =%
        \distSet{\qAX{\len}}{ \simCX{\pnt_k}}%
    \end{math}. However, since $\len \leq x$ and by \lemref{monotone}
    (ii), we have that
    \begin{math}
        \distSet{\qAX{\len}}{ \simCX{\pnt_k}}%
        \leq%
        \distSet{\qAX{x}}{ \simCX{\pnt_k}}%
        =%
        x.
    \end{math}
	
    \smallskip%
    (B) Let
    \begin{math}
        \len%
        =%
        \distSet{\query}{\pFlatY{\Pb }{\pnt_k}}
    \end{math}, and note that since $\query \in \PrismBX{\pnt_k}$, we
    have that
    \begin{math}
        \len%
        =%
        \distSet{\query}{\simX{\Pb \cup \brc{\pnt_k}}}%
        \leq%
        x,
    \end{math}
    proving one part.  Also, by definition of the parameterization, we
    have that
    $\distSet{\qAX{\len}}{\simCX{\pnt_k}} = \distSet{\query}{\simX{\Pb
          \cup \brc{\pnt_k}}}$ which is equal to $\len$.  As by
    definition $\qX{\len}$ is the projection of $\qAX{\len}$ onto
    $\FlatC$ and that $\distY{\qAX{\len}}{\qX{\len}} = \len$. Thus,
    $\qX{\len}$ is equal to $\nnY{\qAX{\len}}{\simCX{\pnt_k}}$, and
    therefore $\qX{\len} \in \simCX{\pnt_k}$.
    
    Moreover, \Eqref{q:A:X} implies that $\qX{y}$, for
    $y \in [0, \radBX{\query}]$, moves continuously and monotonically
    on a straight segment starting at $\qX{0}$, and as $y$ increases,
    it moves towards $\qX{\rad} = \qB =
    \nnY{\query}{\simX{\Pb}}$. Since $\len \leq x$, and by the
    assumption of the lemma, we have
    $x \leq \distSet{\query}{\FlatB} = \rad$. Thus we infer that
    $\qX{x}$ lies on the segment $\qB\qX{\len}$.  As
    $\query \in \PrismB$, we have that $\qB\in \simCX{\pnt_k}$ and
    $\qX{\len} \in \simCX{\pnt_k}$. Thus, by the convexity of
    $\simCX{\pnt_k}$, we have
    $\qX{x} \in\qB\qX{\len} \subseteq \simCX{\pnt_k}$.
\end{proof}

\subsection{Approximating the nearest page in a book}
\seclab{nnpageinbook}

\begin{definition}
    \deflab{book:page}%
    Let $\PntSet$ be a set of $n$ points in $\Re^d$, and let $\Pb$ be
    a sequence of $k-1$ points. Consider the set of simplices having
    $\Pb$ and one additional point from $\PntSet$; that is,
$
      \SimSetC%
      = %
      \SimSetI{\Pb}{\PntSet}%
      =%
      \Set{ \simX{\Pb\cup\brc{\pnt}} }{ \pnt \in \PntSet \setminus \Pb}.
$
    The set $\SimSetC$ is the \emphi{book} induced by $(\Pb,\PntSet)$,
    and to a single simplex in this book is (naturally) a
    \emphi{page}.
\end{definition}

The task at hand, is to preprocess $\SimSetC$ for \ANN queries, as
long as (i) the nearest point lies in the interior of one of these
simplices and (ii) $\query \in \PrismB$. To this end, we consider the
canonical representation of this set of simplices
$
  \SimSetD%
  = %
  \Set{ \simCX{\pnt} }{ \pnt \in \PntSet \setminus \Pb}.
  \eqlab{s:set}%
$

\mypar{Idea.}  The algorithm follows \lemref{w:nn} (A). Given a query
point, using standard range-searching techniques, we extract a small
number of canonical sets of the points, that in the parametric space,
their simplex contains the parameterized query point. This is
described in \secref{r:simplices}. For each of these canonical sets,
we use the data structure of \lemref{slr-k:b} to quickly query each
one of these canonical sets for their nearest positive flat (see
\remref{a:n:n:p:flat} below). This would give us the desired \ANN.

\subsubsection{Reporting all simplices containing a point}
\seclab{r:simplices}

   \begin{definition}
       Let $\Pb = \pth{\pnt_1, \ldots, \pnt_{k-1}}$ be a sequence of
       $k-1$ points in $\Re^d$.  For a point $\pnt \in \Re^d$,
       consider the $(k-1)$-simplex $\simX{\Pb \cup \brc{\pnt}}$,
       which is a full dimensional simplex in the flat
       $\flatX{\Pb \cup \brc{\pnt}}$ (see \defref{simplex}).  The
       \emphi{base angles} of $\pnt$ (with respect to $\Pb$), is the
       $(k-1)$-tuple
       \begin{math}
           \bAngleY{\Pb}{\pnt} = \pth{\bigl.\ba_1(\pnt),\ldots,
              \ba_{k-1}(\pnt)},
       \end{math}
       where $\ba_i(\pnt)$ is the dihedral angle between the facet
       $\simX{\Pb\cup\brc{\pnt}\setminus \brc{\pnt_i}}$ and the base
       facet $\simX{\Pb}$. See the figure on the right, where $k=3$.
   \end{definition}
\vspace{-0.5cm}
\begin{figure}[!h]
\centerline{\includegraphics[width=6cm]{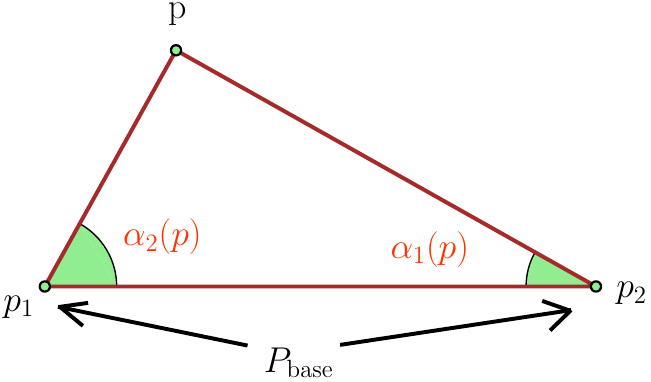}}
\caption{Example  base angles when $k=3$}
\figlab{fig:baseangles}
\end{figure}



\begin{observation}[Inclusion and base angles]
    \obslab{obs:baseangles}%
    Let $\Pb$ be a set of $k-1$ points in $\Re^{k-1}$ all with their
    $(k-1)$\th coordinate being zero, and let $\pnt$ be an additional
    point with its $(k-1)$\th coordinate being a positive
    number. Then, for a point $\query \in \Re^{k-1}$, we have that
    $\query \in \simX{\Pb \cup \brc{\pnt}}$ $\iff$
    $\bAngleY{\Pb}{\query} \leq \bAngleY{\Pb}{\pnt}$ (i.e.,
    $(\forall i: \ba_i(\query)\leq \ba_i(\pnt)$).
\end{observation}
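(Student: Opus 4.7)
The approach is to decompose $\query \in \simX{\Pb \cup \brc{\pnt}}$ into the $k$ facet-halfspace constraints that bound this full $(k-1)$-dimensional simplex in $\Re^{k-1}$. One constraint is the base $\simX{\Pb} \subseteq \{x_{k-1}=0\}$, giving $q_{k-1} \geq 0$; the remaining $k-1$ constraints are associated with the non-base facets $F_i = \simX{\Pb \cup \brc{\pnt}\setminus \brc{\pnt_i}}$, and state that $\query$ lies on the same side of $\flatX{F_i}$ as $\pnt_i$. I would then match the $i$-th constraint with $\ba_i(\query) \leq \ba_i(\pnt)$ and absorb the base constraint into the convention for how the base angles are measured.

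The core step is a two-dimensional reduction for each index $i$. The facet $F_i$ and the base share the sub-facet $\simX{\Pb \setminus \brc{\pnt_i}}$, whose affine hull $L_i$ is a $(k-3)$-flat. I would project orthogonally onto the $2$-dimensional plane $\Pi_i \subseteq \Re^{k-1}$ that is the orthogonal complement of $L_i$ at any point of $L_i$ -- precisely the plane in which the dihedral angle along $L_i$ is defined. The projection collapses $L_i$ to a single point $O_i$, sends $\pnt_i$ to a point $\overline{\pnt}_i$ on the horizontal axis $\Pi_i \cap \{x_{k-1}=0\}$, sends $\pnt$ to a point $\overline{\pnt}$ strictly above that axis, and sends $\query$ to $\overline{\query}$ whose signed vertical coordinate equals $q_{k-1}$. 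By the standard definition of dihedral angle as the planar angle in the orthogonal complement of the shared face, $\ba_i(\pnt) = \measuredangle \overline{\pnt}_i\, O_i\, \overline{\pnt}$ and $\ba_i(\query) = \measuredangle \overline{\pnt}_i\, O_i\, \overline{\query}$.

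Under this projection the facet hyperplane $\flatX{F_i}$ becomes the line in $\Pi_i$ through $O_i$ and $\overline{\pnt}$, and the halfspace containing $\pnt_i$ becomes the halfplane bounded by that line containing $\overline{\pnt}_i$. Provided $\overline{\query}$ lies in the closed upper halfplane of $\Pi_i$, membership of $\overline{\query}$ in this halfplane is exactly the planar inequality $\measuredangle \overline{\pnt}_i O_i \overline{\query} \leq \measuredangle \overline{\pnt}_i O_i \overline{\pnt}$, i.e., $\ba_i(\query) \leq \ba_i(\pnt)$. Conjoining the $k-1$ equivalences yields the claimed biconditional for $\query$ above the base. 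The main obstacle, and the only delicate point, is folding in the base halfspace constraint: I would fix the convention that $\ba_i(\query)$ is the planar angle in $[0,2\pi)$ measured from the ray $O_i \overline{\pnt}_i$, with orientation chosen so that the upper halfplane of $\Pi_i$ maps to $[0,\pi]$. Then any $\query$ with $q_{k-1}<0$ forces $\overline{\query}$ into the lower halfplane of every $\Pi_i$ (since $L_i \subseteq \{x_{k-1}=0\}$, so projection preserves the $x_{k-1}$ coordinate), making $\ba_i(\query) > \pi > \ba_i(\pnt)$ for every $i$ and violating all the angular inequalities. With this convention the base condition is automatically absorbed into the conjunction, completing the equivalence.
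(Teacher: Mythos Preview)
The paper states this as an \emph{Observation} and gives no proof; it is treated as immediate from the definition of the dihedral base angles and the accompanying figure. So there is no paper argument to compare against, only your own.

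Your argument is correct and is exactly the natural way to justify the observation. Membership in a full-dimensional $(k-1)$-simplex is the conjunction of $k$ facet-halfspace constraints; for each non-base facet $F_i$ you pass to the two-dimensional orthogonal complement $\Pi_i$ of the shared ridge $L_i=\affX{\Pb\setminus\{\pnt_i\}}$, where the dihedral angle is by definition a planar angle, and there the halfspace constraint ``same side of $\flatX{F_i}$ as $\pnt_i$'' becomes precisely $\ba_i(\query)\le\ba_i(\pnt)$ for $\overline{\query}$ in the upper halfplane. The claim that projection onto $\Pi_i$ preserves the $(k-1)$\th coordinate is correct, since the direction space of $L_i$ lies in $\{x_{k-1}=0\}$ and hence $e_{k-1}\in\Pi_i$.

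Your discussion of the base constraint and the $[0,2\pi)$ angle convention is more careful than the paper needs. In the paper the observation is only ever invoked (in \lemref{canonical} and \lemref{canonical:range}) with $\query=\qX{\ell}$ lying in the positive halfflat $\FlatC$, so $q_{k-1}\ge 0$ always holds and the base constraint is automatically satisfied; the paper's definition of $\ba_i$ as a dihedral angle (implicitly in $[0,\pi]$) then suffices. Your extension to handle $q_{k-1}<0$ by orienting the planar angle in $[0,2\pi)$ is a valid way to make the stated biconditional literally true for all $\query\in\Re^{k-1}$, but you could equally well just note that the statement is meant (and used) only for $\query$ in the closed upper halfspace.
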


\begin{lemma}
    \lemlab{canonical}%
    Given a set $n$ of $(k-1)$-simplices $\SimSetD$ in $\Re^{k-1}$,
    that all share common $k-1$ vertices, one can build a data
    structure of size $O( n \log^{k-1} n)$, such that given a query
    point $\query\in \Re^{k-1}$, one can compute $O( \log^{k-1} n)$
    disjoint canonical sets, such that the union of these sets, is the
    set of all simplices in $\SimSetD$ that contain $\query$. The
    query time is $O( \log^{k-1} n)$.
\end{lemma}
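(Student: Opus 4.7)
The plan is to reduce the problem to standard multi-dimensional orthogonal dominance range searching via \obsref{obs:baseangles}. The key observation is that the common $(k-1)$ vertices are fixed once and for all, so each simplex in $\SimSetD$ is determined by its unique ``apex'' point, and the containment of the query can be expressed entirely in terms of base angles.

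First, for each simplex $\simp \in \SimSetD$, let $\pnt_\simp$ denote its unique apex (the vertex not shared with the others), and compute the base-angle vector $\bAngleY{\Pb}{\pnt_\simp} \in \Re^{k-1}$. This gives a point set $A = \Set{\bAngleY{\Pb}{\pnt_\simp}}{\simp \in \SimSetD}$ of $n$ points in $\Re^{k-1}$. Given a query $\query \in \Re^{k-1}$, compute $\bAngleY{\Pb}{\query} \in \Re^{k-1}$ in $O(k)$ time. By \obsref{obs:baseangles}, the simplex with apex $\pnt_\simp$ contains $\query$ if and only if $\bAngleY{\Pb}{\query} \leq \bAngleY{\Pb}{\pnt_\simp}$ coordinatewise. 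Thus the simplices containing $\query$ correspond exactly to the points of $A$ that dominate $\bAngleY{\Pb}{\query}$.

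Next, I would apply the standard $(k-1)$-dimensional orthogonal range tree on $A$ to support dominance queries. Such a data structure can be built in $O(n \log^{k-1} n)$ space and supports queries of the form ``report all points of $A$ in an axis-parallel box'' in $O(\log^{k-1} n)$ time by returning a collection of $O(\log^{k-1} n)$ pairwise disjoint canonical subtrees whose union equals the answer. Here, the query box is the unbounded orthant $[\ba_1(\query), \infty) \times \cdots \times [\ba_{k-1}(\query), \infty)$. Each reported canonical subtree in the deepest level of the tree corresponds to a canonical subset of $A$, which we identify with the corresponding canonical subset of simplices of $\SimSetD$ (by precomputing, at each node, a pointer to the associated subset of simplices). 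This yields the claimed $O(\log^{k-1} n)$ query time and disjoint canonical decomposition.

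There is no real obstacle beyond verifying that the reduction is faithful: the base-angle ordering is well-defined because all simplices in $\SimSetD$ live in the common flat $\FlatC$ and share the base facet $\simX{\Pb}$, so \obsref{obs:baseangles} applies uniformly to every simplex in $\SimSetD$. The canonical sets are defined once, at preprocessing time, as the apex subsets stored at each node of the $(k-1)$-level tree, so they do not depend on the query. The query procedure simply traverses the range tree searching for the unbounded orthant dominating $\bAngleY{\Pb}{\query}$ and emits the standard collection of canonical nodes. The complexity bounds follow immediately from the standard analysis of range trees.
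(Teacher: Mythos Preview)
Your proposal is correct and follows essentially the same approach as the paper: reduce simplex containment to coordinatewise dominance of base-angle vectors via \obsref{obs:baseangles}, then answer the resulting unbounded-orthant query with a standard $(k-1)$-dimensional range tree. The only detail the paper adds that you gloss over is that, as stated, the lemma allows apices on either side of the base hyperplane $x_{k-1}=0$; the paper handles this by building two copies of the structure (one per halfspace), whereas you appeal to the fact that in the intended application all apices lie in the halfflat $\FlatC$ --- either fix is fine.
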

\begin{proof}
    By a rigid transformation of the space, we can assume that
    $\FlatB$ is the hyperplane $x_{k-1}=0$, and furthermore, all the
    vertices of $\SimSetD$ have $x_{k-1} \geq 0$ (we can handle the
    simplices $x_{k-1} \leq 0$ in a similar separate data
    structure). Let $\PntSet$ be the vertices of $\SimSetD$ not lying
    on $\FlatB$. We generate the corresponding set of base angles
    \begin{math}
        \PASet = \Set{\bAngleY{\Pb}{\pnt} }{\pnt \in \PntSet}.
    \end{math}
    Preprocess this set for orthogonal range searching, say, using
    range-trees \cite{bcko-cgaa-08}. Given a query point $\query$, by
    \obsref{obs:baseangles}, the desired simplices correspond to all
    points in $\pnt \in \PntSet$, such that
    $\bAngleY{\Pb}{\pnt} \geq \bAngleY{\Pb}{\query} $, which is an
    unbounded box query in the range tree of $\PASet$, with the
    aforementioned performance.
\end{proof}

\begin{lemma}
    \lemlab{canonical:range}%
    The data structure of \lemref{canonical} can be used to report all
    simplices that contain a specific point $\pnt$, and do not contain
    another point $\pnt'$, which is vertically above $\pnt$ (i.e., the
    same point with larger $(k-1)$th coordinate).
    This corresponds to $k$ (possibly unbounded) box queries instead
    of quadrant query in the orthogonal data structure. The query time
    and number of canonical sets will be multiplied by at most
    $k$. The space bound remains the same.  Moreover, we ensure these
    set of $k$ boxes are disjoint. 
\end{lemma}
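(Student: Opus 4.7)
My plan is to reduce the problem to a small number of disjoint orthogonal range queries over the same range tree built in the proof of \lemref{canonical}.

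First, by \obsref{obs:baseangles}, a simplex $\simX{\Pb \cup \brc{\pnt_k}}$ contains $\pnt$ iff $\bAngleY{\Pb}{\pnt_k} \geq a := \bAngleY{\Pb}{\pnt}$ coordinatewise, and it fails to contain $\pnt'$ iff $\bAngleY{\Pb}{\pnt_k} \not\geq b := \bAngleY{\Pb}{\pnt'}$ coordinatewise. The set of $\pnt_k \in \PntSet$ we want therefore corresponds, in the base-angle space $\Re^{k-1}$, to the region
\begin{align*}
   Q \;=\; \Set{x \in \Re^{k-1}}{x \geq a \text{ and } x \not\geq b}.
\end{align*}

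Next, I would argue that $a \leq b$ componentwise. Since $\pnt'$ is obtained from $\pnt$ by strictly increasing only the $(k-1)$\th coordinate (the height above $\FlatB$), and since the hinge of each facet $\simX{\Pb \cup \brc{\pnt}\setminus\brc{\pnt_i}}$ with the base $\simX{\Pb}$ lies entirely inside $\FlatB$, a simple trigonometric argument in the $2$-plane perpendicular to this hinge shows that each dihedral angle $\ba_i$ is monotone increasing in the height of $\pnt$. Consequently the ``bad'' quadrant $\{x \geq b\}$ is nested inside the ``good'' quadrant $\{x \geq a\}$, and $Q$ is exactly the set-difference of these two nested quadrants.

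I would then decompose $Q$ into $k-1$ pairwise disjoint axis-aligned (possibly unbounded) boxes using the standard ``first-failing-coordinate'' partition: for $i=1,\ldots,k-1$ define
\begin{align*}
   B_i \;=\; \Set{x \in \Re^{k-1}}{x_j \geq b_j \text{ for } j<i,\ a_i \leq x_i < b_i,\ x_j \geq a_j \text{ for } j>i}.
\end{align*}
Every $x \in Q$ has a well-defined smallest index $i^*(x) = \min\Set{j}{x_j < b_j}$, and $x \in B_{i^*(x)}$ only; thus $\brc{B_i}_{i=1}^{k-1}$ is a disjoint partition of $Q$ into at most $k$ boxes.

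Finally, each $B_i$ is an orthogonal (possibly unbounded) box query on the same range tree built over $\PASet$ in \lemref{canonical}, answered in $O(\log^{k-1} n)$ time returning $O(\log^{k-1} n)$ canonical sets. Summing over the $k-1 \leq k$ queries multiplies the query time and the total number of canonical sets by at most $k$; the underlying data structure (and thus its space usage) is unchanged; and since the boxes are pairwise disjoint, so are the returned canonical sets. The main obstacle I expect to handle carefully is the monotonicity claim $a \leq b$ under vertical translation of $\pnt$; the remaining components are the combinatorial box decomposition and standard range-tree machinery.
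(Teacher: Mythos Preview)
Your proposal is correct and follows essentially the same approach as the paper: both reduce to the base-angle space via \obsref{obs:baseangles}, note that $\bAngleY{\Pb}{\pnt'} \geq \bAngleY{\Pb}{\pnt}$ when $\pnt'$ is vertically above $\pnt$, and then partition the nested-quadrant difference into $k-1$ disjoint boxes using exactly the ``first-failing-coordinate'' scheme you describe. Your write-up is in fact slightly more careful than the paper's (you use a half-open interval $[a_i,b_i)$ on the failing coordinate to guarantee disjointness, and you give a justification for the monotonicity of the dihedral angles that the paper simply asserts), but the underlying decomposition and range-tree queries are identical.
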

\begin{proof}
    If $\pnt'$ is vertically above $\pnt$, we have that
    $\bAngleY{\Pb}{\pnt'} \geq \bAngleY{\Pb}{\pnt}$. Let us denote
    $\bAngleY{\Pb}{\pnt}$ by $\bm{\alpha}$ and $\bAngleY{\Pb}{\pnt'}$
    by $\bm{\alpha}'$. For $i=1, \ldots, k-1$, the algorithm would
    issue the box query
    \begin{align*}
      [\alpha_1',\infty],
      [\alpha_2',\infty],
      \cdots,
      [\alpha_{i-1}',\infty],
      [\alpha_i,\alpha_i'],
      [\alpha_{i+1},\infty],
      \cdots,
      [\alpha_{k-1}',\infty]
    \end{align*}
    It is easy to verify that these boxes are disjoint and their union
    will correspond to all simplices that contain $\pnt$ but not
    $\pnt'$. As each of these boxes correspond to polylogarithmic
    number of canonical sets, in total there are $O(k\log^{k-1} n)$
    canonical sets.
\end{proof}

\subsubsection{Data structure and
   correctness}\seclab{sec:page:ds-correct}

\begin{remark}
    \remlab{a:n:n:p:flat}%
    For a set of points $\PntSet$ and a base set $\Pb$, consider the
    set of positive halfspaces (the \emphi{positive bouquet})
    \begin{math}
        \pbouqY{\Pb}{\PntSet}%
        =%
        \Set{ \pFlatY{\Pb}{\pnt} }{ \pnt \in \PntSet \setminus \Pb}.
    \end{math}
    We can preprocess such a set for \ANN queries readily, by using
    the data structure of \lemref{slr-k:b}. The only modification is
    that for every positive flat we assign one vector (in the positive
    direction), instead of two vectors in both directions which we put
    in the data structure of \secref{bouquet}.
\end{remark}

\noindent\textbf{Preprocessing.}  The algorithm computes the set of canonical
simplices $\SimSetD$, see \Eqref{s:set}. Next, the algorithm builds
the data structure of \lemref{canonical} for this set of
simplices. For each canonical set $\PntSetA$ in this data structure,
for the corresponding set of original points, we build the data
structure of \remref{a:n:n:p:flat} to answer \ANN queries on the
positive bouquet $\pbouqY{\Pb}{\PntSetA}$. (Observe that the total
size of these canonical sets is $O(n \log^{k-1} n)$.)

\mypar{Answering a query.}  Given a query point $\query \in \PrismB$,
the algorithm computes its projection $\qB = \nnY{\query}{\FlatB}$,
where $\FlatB = \flatX{\Pb}$.  Let $\rad = \distY{\query}{\qB}$ be the
radius of $\query$. The desired \ANN distance is somewhere in the
interval $[0,\rad]$, and the algorithm maintains an interval
$[\alpha, \beta]$ where this distance lies, and uses binary search to
keep pruning away on this interval, till reaching the desired
approximation.

\newcommand{\critX}[1]{\Mh{\gamma}\pth{#1}}

Observe that for every point $\pnt \in \PntSet$, there is a critical
value $\critX{\pnt}$, such that for $x \geq \critX{\pnt}$, the
parameterized point $\qX{x}$ is inside the simplex $\simCX{\pnt}$, and
is outside if $x < \critX{\pnt}$, see \defref{p:simplex}. Note that
this statement only holds for queries in $\PrismB$ (otherwise it could
have been false on simplices $\simX{\Pb \cup \brc{\pnt}}$ with obtuse
angles, see \remref{obtuse} for handling the case of
$q\notin\PrismB$).

Now, by \lemref{canonical:range}, we can compute a polylogarithmic
number of canonical sets, such that the union of these sets, are
(exactly) all the points with critical values in the range
$[\alpha,\beta)$. As long as the number of critical values is at least
one, we randomly pick one of these values (by sampling from the
canonical sets -- one can assume each canonical set is stored in an
array), and let $\gamma$ be this value. We have to decide if the
desired \ANN is smaller or larger than $\gamma$. To this end, we
compute a representation, by polylogarithmic number of canonical sets,
of all the points of $\PntSet$ such that their simplex contains the
parameterized point $\qX{\gamma}$, using \lemref{canonical}. For each
such canonical set, the algorithm computes the approximate closest
positive halfflat, see \remref{a:n:n:p:flat}. Let $\tau$ be the
minimum distance of such a halfflat computed. If this distance is
smaller than $\gamma$, then the desired \ANN is smaller than $\gamma$,
and the algorithm continues the search in the interval
$[\alpha,\gamma)$, otherwise, the algorithm continues the search in
the interval $[\gamma, \beta)$.

After logarithmic number of steps, in expectation, we have an interval
$[\alpha', \beta')$, that contains no critical value in it, and the
desired \ANN distance lies in this interval. We compute the \ANN
positive flats for all the points that their parameterized simplex
contains $\qX{\beta'}$, and we return this as the desired \ANN
distance.

\mypar{Correctness.} For the sake of simplicity, we first assume the
\ANN data structure returns the exact nearest-neighbor.  \lemref{w:nn}
(A) readily implies that whatever simplex is being returned, its
distance from the query point is the closeset, as claimed by the data
structure. The other direction is more interesting -- consider the
unknown point $\pnt_k$, such that the (desired) nearest point to the
query point lies in the interior of the simplex
$\simX{\Pb \cup \brc{\pnt_k}}$ (we made this assumption at the
beginning of \secref{nnpageinbook}).  \lemref{w:nn} (B) implies that
the distance to this simplex is always going to be inside the active
interval, as the algorithm searches (if not, then the algorithm had
found even closer simplex, which is a contradiction).

\newcommand{\SetA}{\Mh{X}}%

To adapt the proof to the approximate case, suppose that the data
structure of \remref{a:n:n:p:flat} returns a $(1+\eps)$ approximate
nearest neighbor. Consider some iteration of the algorithm.  Let
$\PntSetB \subseteq \PntSet$ be the set of all points $\pnt$ such that
the simplex $\simCX{\pnt}$ contains $\qX{\gamma}$, and let $\SetA$ be
the set of simplices $\simX{\Pb\cup\brc{\pnt}}$ corresponding to the
points in $\PntSetB$, and $\FlatSet$ be the set of half-flats
$\pFlatY{\Pb }{\pnt}$ corresponding to the points in $\PntSetB$.

Suppose that $\ell^*$ is the minimum distance of the half-flats in
$\FlatSet$ to the query, and let $\tau$ be the distance of the
half-flat reported by the \ANN data structure to the query. Thus, we
have $\tau\leq \ell^*(1+\eps)$. Note that, if $\tau < \gamma$, the
optimal distance $\ell^*$ is also less than $\gamma$ and recursing on
the interval $[\alpha,\gamma)$ works as it satisfies the
precondition. However, if $\tau \geq \gamma$, then either
$\ell^* \geq \gamma$ as well, in which case the recursion would work
for the same reason, or $\ell^* <\gamma \leq \tau \leq\ell^*(1+\eps)$.

In the latter case, let $\pnt$ be the reported point corresponding to
$\tau$. We know that the distance of the query to the half-flat
$\pFlatY{\Pb }{\pnt}$ is $\tau$ which is at most
$\ell^*(1+\eps)$. Now, if the simplex $\simCX{\pnt}$ contains the
point $\query_G(\tau)$, as well, then the distance of the query to the
simplex $\simX{\Pb\cup\brc{\pnt}}$ is equal to its distance to the
half-flat $\pFlatY{\Pb }{\pnt}$ which is $\tau \leq \ell^*(1+\eps)$.
Therefore, we can assume that $\query_G(\tau)$ is outside of the
simplex $\simCX{\pnt}$. However in this case, we get that the distance
of the query to the simplex $\simX{\Pb\cup\brc{\pnt}}$ is
\begin{align*}
  \rho
  &
    =%
    \distSet{\query}{\simX{\Pb\cup\brc{\pnt}}}%
    =
    \distSet{\qAX{\tau}}{\simCX{\pnt}}%
    \leq%
    \distSet{\qAX{\tau}}{\qX{\gamma}}%
    =%
    \sqrt{ \tau^2 + \distSet{\query_G(\tau)}{\query_G(\gamma)}^2}
  \\&%
  =%
  \sqrt{ \tau^2 + (\sqrt{\rad^2-\gamma^2} -\sqrt{\rad^2-\tau^2})^2}
  \leq \sqrt{\tau^2 + \tau^2-\gamma^2} \leq \ell^*\sqrt{2(1+\eps)^2
  - 1} < \ell^*(1+2\eps),
\end{align*}
using $\qX{\gamma}\in \simCX{\pnt}$, \Eqref{q:A:X}, and
$\gamma \leq \tau\leq \rad$.  The above implies that the distance of
the query to the simplex $\simX{\Pb\cup\brc{\pnt}}$ is a good
approximation to the distance to the closest simplex. Thus, it is
sufficient to modify the algorithm so that at each iteration, it
checks the distance of simplex it finds to the query, and reports the
best one found in all iterations.

\mypar{Query time.}
In order to sample a value from $[\alpha,\beta)$ the algorithm uses
\lemref{canonical:range} which has running time of $O(k\log^{k-1}
n)$. Moreover, the algorithm performs $O( \log^{k-1} n)$ \ANN queries
in each iteration of the search corresponding to
\lemref{canonical}. As the search takes $O( \log n)$ iterations in
expectation (and also with high-probability), the query time is
$O( \tmQueryY{n}{\eps} \log^ k n)$. Note that this assumes $k$ is
smaller than $\tmQueryY{n}{\eps}$ which holds as we are working in a
low sparsity regime.

\begin{lemma}[Approximate nearest induced page]
    \lemlab{slr-k:page}%
    Given a set $\PntSet$ of $n$ points in $\Re^d$, a set $\Pb$ of
    $k-1$ points, and a parameter $\eps>0$, one can preprocess them,
    such that given a query point, the algorithm computes an
    $(1+\eps)$-\ANN to the closest page in $\SimSetI{\Pb}{\PntSet}$,
    see \defref{book:page}.  This assumes that (i) the nearest point
    to the query lies in the interior of the nearest page, and (ii)
    $\query \in \PrismB$.  The algorithm space and preprocessing time
    is $O(\spY{n}{\eps} \log^k n)$, and the query time is
    $O(\tmQueryY{n}{\eps} \log^k n)$.
\end{lemma}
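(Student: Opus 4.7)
My plan is to establish the three stated guarantees --- space/preprocessing, query time, and $(1+\eps)$-approximation --- for the data structure constructed in the preceding subsection. For space and preprocessing, the \lemref{canonical} range tree over $\SimSetD$ maintains $O(n\log^{k-1} n)$ canonical subsets whose total size is $O(n\log^{k-1} n)$. On each canonical subset $\PntSetA$ I erect the positive-bouquet \ANN structure from \remref{a:n:n:p:flat} (which internally invokes \lemref{slr-k:b}). Assuming $\spY{\cdot}{\eps}$ is at least linear, summing across canonical subsets yields $O(\spY{n}{\eps}\log^k n)$, within the claimed bound; preprocessing time is analogous. For query time, each binary-search iteration uses \lemref{canonical:range} to sample a critical value and \lemref{canonical} to identify $O(\log^{k-1} n)$ canonical sets, running one $(1+\eps)$-\ANN query per canonical set for $O(\tmQueryY{n}{\eps}\log^{k-1} n)$ per iteration. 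Since $\gamma$ is sampled uniformly from the remaining critical values, a standard randomized-binary-search analysis gives $O(\log n)$ iterations in expectation and with high probability, hence $O(\tmQueryY{n}{\eps}\log^k n)$ total.

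The main obstacle is correctness, which I would prove in two stages. Assume first that \remref{a:n:n:p:flat} returns exact nearest neighbors. Let $\pnt_k$ realize the true optimum, with $\ell^* = \distSet{\query}{\simX{\Pb\cup\brc{\pnt_k}}}$. \lemref{w:nn}(B) gives $\qX{x}\in\simCX{\pnt_k}$ and $\distSet{\query}{\pFlatY{\Pb}{\pnt_k}}\leq x$ for every $x\in[\ell^*, \rad]$; equivalently $\critX{\pnt_k}\leq \ell^*$, so $\pnt_k$ is feasible at every scale sampled above $\ell^*$. Conversely, \lemref{w:nn}(A) certifies that any half-flat found feasible at scale $x$ corresponds to a simplex at distance $\leq x$. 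These two facts make the invariant $\ell^*\in[\alpha,\beta)$ inductive: if the sampled $\gamma$ yields feasible nearest half-flat distance $\tau<\gamma$ then \lemref{w:nn}(A) produces a simplex at distance $<\gamma$, so $\ell^*<\gamma$ and we recurse left; otherwise $\ell^*\geq \gamma$ and we recurse right. When the search terminates, the surviving interval contains no critical value, so the feasible point set is constant across it, and the final \ANN call recovers the optimum.

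To extend to the approximate case, the delicate scenario is $\ell^* < \gamma \leq \tau \leq (1+\eps)\ell^*$ with the returned $\pnt$ satisfying $\qX{\tau}\notin \simCX{\pnt}$, so \lemref{w:nn}(A) does not directly certify $\simX{\Pb\cup\brc{\pnt}}$. Using $\qX{\gamma}\in \simCX{\pnt}$ (since $\pnt$ was feasible at scale $\gamma$) and the parameterization \Eqref{q:A:X}, I would bound
\begin{align*}
\distSet{\query}{\simX{\Pb\cup\brc{\pnt}}}^2
  &\leq \distSet{\qAX{\tau}}{\qX{\gamma}}^2
  = \tau^2 + \pth{\sqrt{\rad^2-\gamma^2}-\sqrt{\rad^2-\tau^2}}^2 \\
  &\leq 2\tau^2 - \gamma^2 \leq \pth{2(1+\eps)^2 - 1}(\ell^*)^2,
\end{align*}
where the penultimate inequality uses $\gamma\leq \tau$ and the final one uses $\gamma > \ell^*$ together with $\tau \leq (1+\eps)\ell^*$. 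This gives distance below $(1+2\eps)\ell^*$. Having the algorithm track the minimum over all candidate simplices encountered during the search, and running with approximation parameter $\eps/2$, yields the claimed $(1+\eps)$-approximation.
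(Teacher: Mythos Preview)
Your proposal is correct and follows essentially the same approach as the paper: the space and query-time analysis via the range tree of \lemref{canonical} with per-canonical-set \ANN structures, randomized binary search over critical values with $O(\log n)$ iterations, exact-case correctness via \lemref{w:nn}(A)/(B), and the approximate-case distance bound through $\qX{\gamma}\in\simCX{\pnt}$ and \Eqref{q:A:X}, culminating in the same $\sqrt{2(1+\eps)^2-1}<1+2\eps$ computation. The only cosmetic difference is that the paper makes the intermediate equality $\distSet{\query}{\simX{\Pb\cup\brc{\pnt}}}=\distSet{\qAX{\tau}}{\simCX{\pnt}}$ explicit before bounding by $\distSet{\qAX{\tau}}{\qX{\gamma}}$, and it phrases the fix as ``track the best simplex over all iterations'' rather than rescaling $\eps$; both are equivalent.
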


\subsection{Result: nearest induced simplex}
\seclab{induced-result} The idea is to use brute-force to handle the
distance of the query to the $\leq (k-2)$-simplices induced by the
given point set which takes $O(n^{k-1})$ time.  As such, the remaining
task is to handle the $(k-1)$-simplices, and thus we can assume that
the nearest point to the query lies in the interior of the nearest
simplex,
as desired by \lemref{slr-k:page}.  To this end, we generate the
$\binom{n}{k-1} = O(n^{k-1})$ choices for $\Pb \subseteq \PntSet$, and
for each one of them we build the data structure of
\lemref{slr-k:page}, and query each one of them, returning the closet
one found.

\begin{remark}\remlab{obtuse}
    Note that for a set of $k$ points $\SetA \subset_k \PntSet$, if
    the projection of the query onto the simplex $\simX{A}$ falls
    inside the simplex, i.e. $\query \in \Mh{\Phi}_{A}$, then there
    exists a subset of $k-1$ points $\Pb \subset_{k-1} \SetA$ such
    that the projection of the query onto the simplex $\simX{\Pb}$
    falls inside the simplex, i.e., $\query \in
    \Mh{\Phi}_{\Pb}$. Therefore, either the brute-force component of
    the algorithm finds an \ANN, or there exists a set $\Pb$ for which
    the corresponding data structure reports the correct \ANN.
\end{remark}

We thus get the following result.

\begin{theorem}[Convex SLR]
    \lemlab{slr-k:simplex}%
    Given a set $\PntSet$ of $n$ points in $\Re^d$, and parameters $k$
    and $\eps>0$, one can preprocess them, such that given a query
    point, the algorithm can compute a $(1+\eps)$-\ANN to the closest
    $(k-1)$-simplex in $\SimpsY{k}{\PntSet}$, see
    \defref{induced:flats}.  The algorithm space and preprocessing
    time is $O(n^{k-1}\spY{n}{\eps} \log^{k} n )$, and the query time
    is $O(n^{k-1} \tmQueryY{n}{\eps}\log^{k} n)$.
\end{theorem}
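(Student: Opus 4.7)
The plan is to reduce the theorem to an enumeration over $(k-1)$-subsets of $\PntSet$ combined with a black-box application of \lemref{slr-k:page}. First, I would dispose of the low-dimensional cases: for every $j\le k-2$, the number of $j$-simplices induced by $\PntSet$ is $\binom{n}{j+1}$, so the total number of such simplices is $O(n^{k-1})$. Computing the distance from the query to each of them directly uses $O(n^{k-1})$ query time and no nontrivial preprocessing. This also takes care of every case in which the nearest point on the optimal induced simplex lies on its boundary, since that boundary is itself a union of lower-dimensional induced simplices.

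Next, for the $(k-1)$-simplices I would enumerate every base $\Pb \subset_{k-1} \PntSet$ and, for each, build the data structure of \lemref{slr-k:page} with input $(\Pb,\PntSet)$. This yields $\binom{n}{k-1} = O(n^{k-1})$ data structures, each of size $O(\spY{n}{\eps}\log^k n)$ and query time $O(\tmQueryY{n}{\eps}\log^k n)$. To answer a query $\query$, I would invoke each data structure whose base $\Pb$ satisfies $\query \in \PrismB$, compare the returned distances with the brute-force minimum from the low-dimensional phase, and output the best simplex found. Multiplying the per-base space and time bounds by $O(n^{k-1})$ gives exactly the space/preprocessing bound $O(n^{k-1}\spY{n}{\eps}\log^k n)$ and query bound $O(n^{k-1}\tmQueryY{n}{\eps}\log^k n)$ claimed in the statement.

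The main subtlety is correctness, because \lemref{slr-k:page} only applies under two preconditions: (i) $\query \in \PrismB$, and (ii) the nearest point on the optimal page lies in its interior. The second precondition is handled by the brute-force step, since if the nearest point on $\simX{A}$ lies on its boundary, the optimal distance is achieved by some proper face of $\simX{A}$, which is a $(\le k-2)$-simplex already enumerated. The first precondition is exactly what \remref{obtuse} supplies: whenever the optimal $(k-1)$-simplex $\simX{A}$ (with $A\subset_k \PntSet$) satisfies $\query \in \Mh{\Phi}_A$, some $(k-1)$-subset $\Pb \subset_{k-1} A$ also satisfies $\query \in \PrismB$, so at least one of the enumerated data structures has the correct base and will return $\simX{A}$ (up to a $(1+\eps)$ factor) by \lemref{slr-k:page}. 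The approximation guarantee is then inherited verbatim, and the main obstacle is really just the bookkeeping to ensure that every geometric case, interior and boundary projections alike, is covered by one of the two phases.
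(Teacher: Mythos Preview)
Your proposal is correct and follows essentially the same approach as the paper: brute-force enumeration of all induced $(\le k-2)$-simplices to handle boundary cases, plus one instance of the \lemref{slr-k:page} data structure per base $\Pb\subset_{k-1}\PntSet$, with correctness of the prism precondition handled via \remref{obtuse}. The only cosmetic difference is that you explicitly filter on $\query\in\PrismB$ before querying, whereas the paper simply queries all $O(n^{k-1})$ structures and takes the minimum; this does not affect the bounds or the argument.
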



\section{Offline nearest induced segment problem}
\seclab{offline}

In this section, we consider the offline variant of the convex \SLR
problem, for the case of $k=2$. We present an algorithm which achieves
constant factor approximation and has sub-quadratic running time.

   \mypar{Input \& task.} %
   We are given a set of $n$ points
   $\PntSet = \brc{\pnt_1, \cdots, \pnt_n}$, along with the query
   point $\query$, and a parameter $\eps > 0$. The task is to find the
   (approximate) closest segment $\pnt_i\pnt_j$ to the query point
   $\query$, where $\pnt_i,\pnt_j\in \PntSet$.

\mypar{Notation.}%
   Fix a point $\query$, and let
   $r > 0$ be a parameter.  For the sphere
   $\SphereChar = \SphereY{\query}{r}$, and a point $\pntA \in \Re^d$,
   let $\projSX{\pntA}$ be the intersection point of
   $\LineY{\pntA}{\query}$ with the sphere $\SphereChar$ that is
   closer to $\pntA$. The other intersection point is the
   \emph{spherical reflection} of $\pntA$, and is denoted by
   $\refX{\pntA} = \projSX{\pth{\query-(\pntA -\query)}}$.

\begin{figure}[!h]
\centerline{\includegraphics[scale=1,page=1]{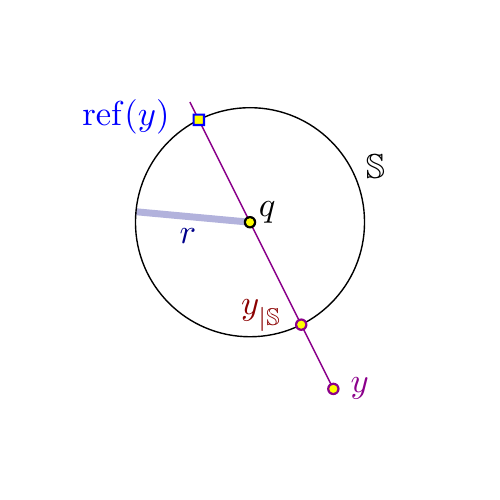}}
\caption{}
\figlab{defs:silly:1}
\end{figure}

   \mypar{Algorithm.} %
The algorithm picks a value $r > 0$ (any value works), and let
$\SphereChar = \SphereY{\query}{r}$. Let
\begin{math}
    \PntSetA%
    =%
    \Setw{ \projSX{\pnt} }{ \pnt \in \PntSet}
\end{math}
be the spherical projection of $\PntSet$ on $\SphereChar$.  Next, the
algorithm builds a data structure $\DS$ for answering $(1+\eps)$-\ANN
queries on $\PntSetA$.

   Now, the algorithm queries this data structure $n$ times. For each
   point $\pnt_i\in \PntSet$, let
\begin{math}
    \pnt_i' =\refX{\pnt_i}
\end{math}
be the spherical reflection of $\pnt_i$.  The algorithm queries the
data structure $\DS$ for the point $\pnt_i'$, and let
$\pntC_i \in \PntSetA$ be the approximate nearest neighbor of
$\pnt_i'$. Let $\widehat{\pnt_i}$ be the original point of $\PntSet$
that induced the point $\pntC_i$.  The algorithm then computes the
distance of the query $\query$ to the segment $\pnt_i\widehat{\pnt_i}$
to see if it improves the closest segment found so far.

\begin{figure}[!h]
\centerline{\includegraphics[scale=1,page=1]{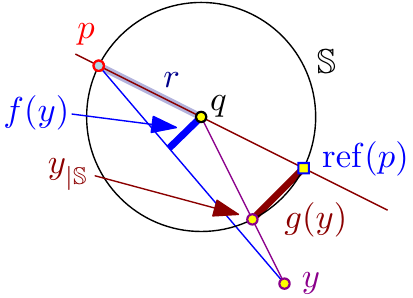}}
\caption{}
\figlab{defs:silly}
\end{figure}


\subsection{Correctness}

   \noindent%
   For points $\pnt, \pntA \in \Re^d$, let $r = \distY{\query}{\pnt}$,
   and let
   \begin{align*}
     f(\pntA)%
     =%
     \distSet{ \query }{ \pnt\pntA}%
     \qquad\text{ and  }\qquad%
     g(\pntA)%
     =%
     \distwY{ \refX{\pnt}}{ \projSX{\pntA}},
   \end{align*}
   see \figref{defs:silly} (note, that $\SphereChar$ is determined by
   the value of $r$ specified).  The task at hand is to approximate
   the minimum of $f(\pntA)$, over all $\pnt, \pntA \in \PntSet$. The
   key observation is that $g(\pntA)$ is a good approximation to
   $f(\pntA)$.
   
\begin{lemma}
    \lemlab{f:g}%
    Using the above notations, consider a point $\pntA \in \Re^d$,
    such that $\distY{\pntA}{\query} \geq r$. Then,
    $f(\pntA) \leq g(\pntA) \leq 2f(\pntA)$.
\end{lemma}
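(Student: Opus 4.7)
\medskip

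\noindent\textbf{Proof proposal.}
My plan is to parameterize everything by three scalars and reduce both quantities to closed-form expressions. Let $r = \|q-p\|$ (so $p \in \SphereChar$), $s = \|q - \pntA\| \geq r$ by hypothesis, and $\theta = \Angle{pq\pntA}$ the angle at $q$ in triangle $pq\pntA$. First I would compute $g(\pntA)$ directly: since $\refX{p} = 2q - p$ and $\projSX{\pntA} = q + r(\pntA-q)/\|\pntA-q\|$ both lie on the sphere $\SphereChar$ of radius $r$, the vector $(q-p) - r(\pntA-q)/\|\pntA-q\|$ has two unit-vector summands whose angle is $\pi - \theta$, so
\[
g(\pntA) = r\sqrt{2 + 2\cos\theta} = 2r\cos(\theta/2).
\]

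Next I would compute $f(\pntA)$ by splitting into two cases according to whether the foot of the perpendicular from $q$ to the line through $p$ and $\pntA$ lies on the segment $p\pntA$. The law of cosines gives $\cos(\Angle{qp\pntA}) = (r - s\cos\theta)/\|p-\pntA\|$, so the foot leaves the segment on the $p$-side exactly when $\cos\theta \geq r/s$ (it cannot leave on the $\pntA$-side since $s \geq r$ forces the angle at $\pntA$ to be no larger than the angle at $p$). Thus:
\begin{itemize}
\item \emph{Case 1} ($\cos\theta \geq r/s$): $f(\pntA) = r$, and $\theta \in [0, \arccos(r/s)] \subseteq [0, \pi/2]$, so $\cos(\theta/2) \in [1/\sqrt{2}, 1]$, whence $g(\pntA) \in [r\sqrt{2}, 2r] \subseteq [f, 2f]$.
\item \emph{Case 2} ($\cos\theta < r/s$): $f(\pntA) = rs\sin\theta / \|p-\pntA\|$ with $\|p-\pntA\|^2 = (s-r)^2 + 4rs\sin^2(\theta/2)$. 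Using $\sin\theta = 2\sin(\theta/2)\cos(\theta/2)$, a direct calculation yields
\[
\frac{f(\pntA)}{g(\pntA)} = \frac{s\sin(\theta/2)}{\sqrt{(s-r)^2 + 4rs\sin^2(\theta/2)}}.
\]
\end{itemize}

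The remaining work is to show $(f/g) \in [1/2, 1]$ in Case 2. Setting $y = \sin^2(\theta/2)$, the ratio $(f/g)^2 = s^2 y / ((s-r)^2 + 4rsy)$ is monotonically increasing in $y$, so its supremum is attained at $\theta = \pi$ and equals $s^2/(s+r)^2 < 1$, giving $f \leq g$. For the other direction, $(f/g)^2 \geq 1/4$ rearranges to $4s(s-r)y \geq (s-r)^2$, i.e., $y \geq (s-r)/(4s)$. At the Case 2 boundary $\theta = \arccos(r/s)$ one has $y = (s-r)/(2s)$, which already exceeds $(s-r)/(4s)$, and $y$ only grows as $\theta$ increases, so $g \leq 2f$ throughout.

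The main obstacle is really just keeping the case analysis clean: one must notice that the side condition $s \geq r$ is exactly what rules out the foot falling past $\pntA$, and one must verify that the critical threshold $y = (s-r)/(4s)$ for the lower bound is automatically crossed at the transition $\theta = \arccos(r/s)$ between the two cases (this is where the constant $2$ in the approximation ratio comes from). Everything else is routine trigonometry.
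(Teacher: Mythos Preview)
Your argument is correct and complete (modulo the harmless degeneracy at $\theta=\pi$, where both $f$ and $g$ vanish and the ratio $f/g$ is undefined; that case is handled trivially and should be mentioned separately rather than calling the supremum ``attained''). The paper, however, takes a more synthetic route: it fixes the direction of $\pntA$ from $\query$ and slides $\pntA$ along that ray. Since $g$ depends only on the direction it stays constant, while $f$ is monotone nondecreasing in $\|\pntA-\query\|$; hence it suffices to check the two extreme positions. At $\|\pntA-\query\|=r$ an elementary chord computation gives exactly $g=2f$, and as $\|\pntA-\query\|\to\infty$ the paper notes that the triangles $\triangle \pnt\query\pntB$ and $\triangle\query\,\refX{\pnt}\,\projSX{\pntA}$ (with $\pntB$ at distance $r$ from $\pnt$ on the limit line) are translates of one another, so the limit of $f$ is bounded by the side $\|\pntB-\query\|=g$. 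Your explicit parameterization by $(r,s,\theta)$ with a case split on the foot of the perpendicular is more computational but also more self-contained: it makes visible exactly where the hypothesis $s\ge r$ enters (to prevent the foot from falling past $\pntA$) and isolates the tightness of the constant $2$ at the Case~1/Case~2 interface, whereas the paper's monotonicity argument gives a shorter conceptual reason why the two extreme positions control everything.
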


\begin{figure}[!h]
    \centerline{%
       \begin{tabular}{cc}
         \includegraphics[scale=0.999,page=3]{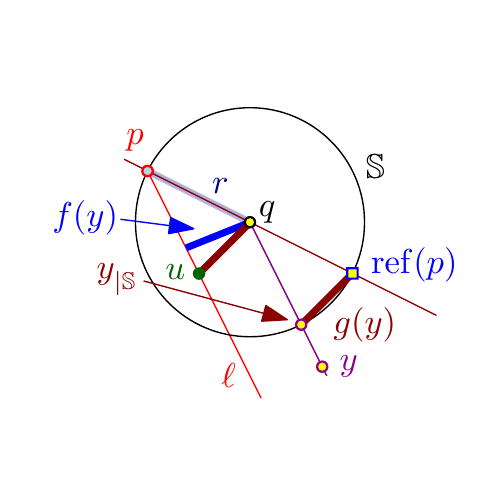}
         \qquad&
                 \qquad
                 \includegraphics[scale=0.999,page=3]{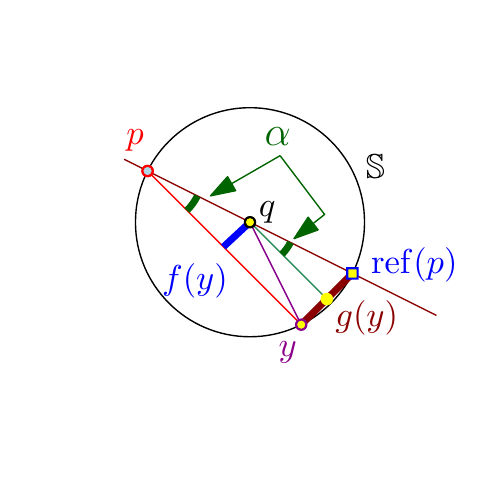}\\
         (A) & (B)            
       \end{tabular}
    } \captionof{figure}{}%
    \figlab{in:proof}%
\end{figure}
\begin{proof}
    As $\pntA$ moves away from $\query$ along $\LineY{\query}{\pntA}$,
    the quantity $g(\pntA)$ remains the same, while $f(\pntA)$
    increases.

    In particular, let $\ell$ be the limit line of
    $\LineY{\pnt}{\pntA}$, as $\pntA$ moves to infinity. Let $\pntB$
    be a point in distance $r$ along $\ell$ from $\pnt$, see
    \figref{in:proof} (A).  Clearly, the two triangles
    $\triangle \pnt \query \pntB$ and
    $\triangle \query \refX{\pnt}\projSX{\pntA}$ are the same up to
    translation. As such, $f(\pntA)$ is bounded by the height of
    $\query$ in $\triangle \pnt \query \pntB$, as this is the limit
    value of $f(\pntA)$ as $\pntA$ moves to infinity.  Thus,
    $f(\pntA)$ is bounded by the length of the edge $\pntB\query$,
    which is equal to $g(\pntA)$. Implying that
    $f(\pntA) \leq g(\pntA)$.

    Similarly, $f(\pntA)$ is minimized when
    $\pntA \in \SphereY{\query}{r}$; that is,
    $\distY{\query}{\pntA} = r$, see \figref{in:proof} (B). Letting
    $\alpha = \measuredangle \query \pnt \pntA$, we have that
    $f(\pntA) = r \sin \alpha$ and
    \begin{math}
        g(\pntA) = 2r \sin (\alpha) = 2f(\pntA),
    \end{math}
    which implies the claim.
\end{proof}

\begin{lemma}
    The above algorithm $2(1+\eps)$-approximates the closest induced
    segment of $\PntSet$.
\end{lemma}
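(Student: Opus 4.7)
The plan is to analyze the iteration of the algorithm corresponding to the closer endpoint of the optimal segment and invoke \lemref{f:g} to certify a $2(1+\eps)$ approximation. Let $\pnt\pntA^*$ (with $\pnt,\pntA^*\in\PntSet$) be the closest induced segment, write $\opt = \distSet{\query}{\pnt\pntA^*}$, and assume without loss of generality that $\distY{\query}{\pnt} \leq \distY{\query}{\pntA^*}$. In the iteration with $\pnt_i = \pnt$, the \ANN data structure returns a point $\widehat{\pnt_i}\in\PntSet$ such that
\begin{align*}
g(\widehat{\pnt_i}) \leq (1+\eps)\min_{\pntA\in\PntSet} g(\pntA) \leq (1+\eps)\,g(\pntA^*),
\end{align*}
where $g(\pntA) := \distY{\refX{\pnt}}{\projSX{\pntA}}$ is the quantity analyzed in \lemref{f:g}.

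The key observation is that $g$ depends only on the directions from $\query$ to $\pnt$ and from $\query$ to $\pntA$, not on the sphere radius $r$: both $\refX{\pnt}$ and $\projSX{\pntA}$ lie on $\SphereChar$ along those directions, so rescaling $r$ rescales $g$ uniformly and preserves both the approximate minimizer and the \ANN guarantee. For the analysis it therefore costs nothing to take $r := \distY{\query}{\pnt}$, placing $\pnt$ itself on $\SphereChar$. Since $\distY{\query}{\pntA^*} \geq r$, the upper bound in \lemref{f:g} yields $g(\pntA^*) \leq 2\,f(\pntA^*) = 2\,\opt$, and combining with the \ANN guarantee gives $g(\widehat{\pnt_i}) \leq 2(1+\eps)\,\opt$.

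The main obstacle is the companion inequality $f(\widehat{\pnt_i}) \leq g(\widehat{\pnt_i})$, which closes the chain but nominally requires $\distY{\query}{\widehat{\pnt_i}} \geq r$ --- a condition the \ANN output need not satisfy. The resolution is that the bound $f \leq g$ in the proof of \lemref{f:g} is established solely from (i) the monotonicity of $f$ along the ray from $\query$ through $\pntA$, and (ii) the identification of $g(\pntA)$ with the limiting value of $f$ at infinity via the isoceles-triangle argument; neither step uses the sphere restriction, which is genuinely needed only for the companion bound $g \leq 2f$. Hence $\distSet{\query}{\pnt_i\widehat{\pnt_i}} = f(\widehat{\pnt_i}) \leq g(\widehat{\pnt_i}) \leq 2(1+\eps)\,\opt$ holds in all cases, and since the algorithm reports the minimum over iterations of the exact segment distance, its output is at most $2(1+\eps)\,\opt$.
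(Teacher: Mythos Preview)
Your proof is correct and follows the same route as the paper: exploit scale-invariance to set $r=\distY{\query}{\pnt}$, apply the $g\le 2f$ half of \lemref{f:g} to the optimal $\pntA^*$, then the \ANN guarantee, then the $f\le g$ half to the returned point. The paper handles the possibility $\distY{\query}{\widehat{\pnt_i}}<r$ via an explicit case split (first on whether $\opt$ is large, then using $f(\widehat{\pnt_i})\le f(\projSX{\widehat{\pnt_i}})$), whereas you observe---correctly---that the $f\le g$ direction of \lemref{f:g} actually holds without the hypothesis $\distY{\query}{\pntA}\ge r$, which cleanly subsumes both of the paper's case distinctions.
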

\begin{proof}
    Let $\pnt \pntA$ be the two points such that the segment
    $\pnt \pntA$ is the closest to $\query$ among all the induced
    segments of $\PntSet$. Assume that
    $\distY{\query}{\pnt} \leq \distY{\query}{\pntA}$. Note, that the
    algorithm works the same for any value of $r$. Indeed, the point
    set $\PntSetA$ scales up with the value of $r$, but the
    nearest-neighbor queries are also scaled accordingly, and the
    answers returned by the \ANN data-structure are the same.
    
    In particular, for the analysis, we set $r
    =\distY{\query}{\pnt}$. Assume $\DS$ returned the point
    $\projSX{\pntB}$, and as such, the data-structure computed the
    distance $\Delta = \distSet{\query}{\pnt \pntB} = f(\pntB)$, while
    the desired distance is
    $\delta = \distSet{\query}{\pnt \pntA} = f(\pntA)$.  If
    \begin{math}
        \delta \geq r/[2(1+\eps)],
    \end{math}
    then we are done as the algorithm always returns a distance
    $\leq r$ (since the point $\pnt$ is in distance $r$ from
    $\query$), and thus get the desired approximation. As such, assume
    that $\delta < r/[2(1+\eps)]$.  By \lemref{f:g}, we have that
    $\delta = f( \pntA) \leq g(\pntA) \leq 2f(\pntA) < r/(1+\eps)$. As
    $g(\pntA) = \distwY{\refX{\pnt}}{\projSX{\pntA}}$, and since we
    are using $(1+\eps)$-\ANN data-structure, we have that
    \begin{math}
        g(\pntB)%
        =%
        \distwY{\refX{\pnt}}{\projSX{\pntB}}%
        \leq%
        (1 + \eps)g(\pntA)%
        \leq %
        r.
    \end{math}
    
    If $\distY{\query}{\pntB} \geq r$, then \lemref{f:g} readily
    implies that
    \begin{math}
        f(\pntB) \leq g(\pntB)%
        \leq %
        (1 + \eps)g(\pntA)%
        \leq 2(1 + \eps)f(\pntA),
    \end{math}
    which implies the claim.
    
    If $\distY{\query}{\pntB} < r$, then observe that
    $f(\pntB) \leq f( \projSX{\pntB})$, as moving $u$ in the direction
    of vector from $\pntB$ to $\projSX{\pntB}$ only increases the
    distance of the segment $\pnt \pntB$ from $\query$. As such, by
    the same argument as above, we have that
    \begin{math}
        f(\pntB)%
        \leq%
        f(\projSX{\pntB})%
        \leq%
        g(\pntB)%
        \leq %
        (1 + \eps)g(\pntA)%
        \leq 2(1 + \eps)f(\pntA),
    \end{math}
    as desired.
\end{proof}

We thus get the following.

\begin{theorem}
    \thmlab{n:i:segment}%
    Given a set $\PntSet$ of $n$ points in $\Re^d$, a query point
    $\query$, and a parameter $\eps > 0$, one can compute
    $2(1+\eps)$-\ANN to the closest induced segment by $\PntSet$. The
    algorithm uses $O(\spY{n}{\eps})$ space, and its running time is
    $O\pth{n\tmQueryY{n}{\eps} +\spY{n}{\eps}}$, where $\spY{n}{\eps}$
    is the space and $\tmQueryY{n}{\eps}$ is the query time for a
    $(1+\eps)$-\ANN data-structure for $n$ points in $\Re^d$.
    
    In particular, using the \ANN data-structure of \cite{ai-nohaa-06}
    for the \ANN in the Euclidean space, the resulting running time is
    $n^{1+O({1}/{(1+\eps)^2})}$ (i.e., sub-quadratic in $n$).
\end{theorem}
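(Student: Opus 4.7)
The plan is essentially to assemble the ingredients already in place and read off the resource bounds. The approximation guarantee $2(1+\eps)$ is exactly the content of the preceding lemma, which shows that for the reflected point $\refX{\pnt_i}$ of every input point $\pnt_i$, an $(1+\eps)$-\ANN query on the projected set $\PntSetA$ returns a partner $\widehat{\pnt_i}$ such that $\pnt_i\widehat{\pnt_i}$ approximates the optimal induced segment (when $\pnt_i$ happens to be the ``closer endpoint'' of the optimum, which must occur for some $i$). So the correctness half of the theorem is immediate.

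For the resource bounds, I would walk through the algorithm line by line. First, computing the $n$ spherical projections $\projSX{\pnt}$ and the $n$ reflections $\refX{\pnt_i}$ takes $O(nd)$ time, which is dominated by the \ANN preprocessing cost $\spY{n}{\eps}$ (we have assumed in \asmref{ANN} that \ANN bounds are at least linear in $d$). Building $\DS$ on $\PntSetA$ then costs $\spY{n}{\eps}$ in space and preprocessing. Next, issuing $n$ queries to $\DS$ contributes $O(n\,\tmQueryY{n}{\eps})$. For each returned candidate $\widehat{\pnt_i}$, computing $\distSet{\query}{\pnt_i\widehat{\pnt_i}}$ and comparing to the running minimum is $O(d)$, hence $O(nd)$ overall, again absorbed. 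Summing gives the stated $O\pth{n\,\tmQueryY{n}{\eps}+\spY{n}{\eps}}$ time and $O(\spY{n}{\eps})$ space.

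For the concrete Euclidean bound, I would plug in the \ANN data-structure of \cite{ai-nohaa-06}, which for approximation factor $1+\eps$ achieves $\spY{n}{\eps}=n^{1+O(1/(1+\eps)^2)}$ and $\tmQueryY{n}{\eps}=n^{O(1/(1+\eps)^2)}$ (ignoring factors polynomial in $d$). Substituting yields a total running time of
\begin{align*}
n \cdot n^{O(1/(1+\eps)^2)} + n^{1+O(1/(1+\eps)^2)} = n^{1+O(1/(1+\eps)^2)},
\end{align*}
which is sub-quadratic for every fixed $\eps>0$, as claimed.

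The only genuine subtlety, and therefore the piece I would double-check, is that the construction is oblivious to the choice of the sphere radius $r$. The analysis of the preceding lemma fixed $r=\distY{\query}{\pnt}$ where $\pnt$ is the unknown closer endpoint of the optimum, but the algorithm obviously does not know $\pnt$. I would justify this by the scale-invariance observation already used in the lemma: replacing $r$ by $\alpha r$ scales $\PntSetA$ and the query points $\refX{\pnt_i}$ by the same factor $\alpha$, leaving the \ANN answers unchanged. Thus a single arbitrary choice of $r$ suffices, and the same execution simultaneously handles every candidate role of ``closer endpoint'', so no additional loop over $r$ is needed and the stated bounds hold.
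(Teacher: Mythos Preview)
Your proposal is correct and matches the paper's approach: the paper simply writes ``We thus get the following'' before the theorem, treating it as immediate from the algorithm description and the preceding correctness lemma, while you have spelled out the resource-bound bookkeeping and the scale-invariance observation (which the paper handles inside the proof of that lemma) explicitly. There is nothing to add.
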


\begin{corollary}
    \corlab{n:i:s:low:dim}%
    In constant dimension, using the \ANN data-structure of Arya \etal
    \cite{amnsw-oaann-98}, \thmref{n:i:segment} yields a
    data-structure with $O(n )$ preprocessing, and
    $O( n \log n + n/\eps^d)$ running time.
\end{corollary}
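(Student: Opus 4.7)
Since this is a direct instantiation of \thmref{n:i:segment}, the plan is simply to substitute the parameters of the Arya \etal ANN data-structure into the bounds from that theorem. First I would recall that, for a set of $n$ points in $\Re^d$ with $d = O(1)$, the Arya \etal data-structure supports $(1+\eps)$-approximate nearest-neighbor queries with space $\spY{n}{\eps} = O(n)$, preprocessing time $O(n \log n)$, and query time $\tmQueryY{n}{\eps} = O(\log n + 1/\eps^d)$.

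Next I would read off from \thmref{n:i:segment} that the offline nearest induced segment algorithm uses $O(\spY{n}{\eps})$ space and has total running time $O(n\,\tmQueryY{n}{\eps} + \spY{n}{\eps})$, which accounts for building the ANN data-structure once and then issuing $n$ queries into it (one per input point $\pnt_i$, to find the ANN to its spherical reflection). Substituting the Arya \etal bounds gives space $O(n)$ and total running time
\[
    O(n \log n) \;+\; n \cdot O(\log n + 1/\eps^d) \;=\; O(n \log n + n/\eps^d),
\]
where the first $O(n \log n)$ term absorbs the preprocessing cost of the ANN data-structure (which slightly exceeds its $O(n)$ space bound). This matches the bound claimed in the corollary, and I do not anticipate any real obstacle beyond this bookkeeping; the entire argument is a mechanical substitution once the Arya \etal bounds are in hand.
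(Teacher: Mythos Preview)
Your proposal is correct and is exactly the intended argument: the paper gives no separate proof of this corollary, treating it as an immediate substitution of the Arya \etal\ bounds $\spY{n}{\eps}=O(n)$ and $\tmQueryY{n}{\eps}=O(\log n + 1/\eps^d)$ into \thmref{n:i:segment}, just as you do.
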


\section{Conditional lower bound}
\seclab{lowerbound}

In this section, we reduce the $k$-sum problem to offline variant of
all of our problems, \ANLF, \ANIF and \ANIS problems, providing an
evidence that the time needed to solve these problems is
$\tilde\Omega(n^{k/2}/e^k)$.  In the $k$-sum problem, we are given $n$
integer numbers $a_1, \cdots, a_n$. The goal is to determine if there
exist $k$ numbers among them $a_{i_1},\cdots, a_{i_k}$ such that their
sum equals zero.  The problem is conjectured to require
$\Omega(n^{\lceil k/2 \rceil}/\log^{\Theta(1)} n)$ time, see
\cite{pw-pfsa-10}, Section 5.

We reduce this problem as follows. Let
$\PntSet = \brc{v_1, \cdots, v_n}$ be a set of $n$ vectors of
dimension $k+1$. More precisely, each $v_i \in \Re^{k+1}$ has its
first coordinate equal to $a_i$ and all the other coordinates are $0$
except for one coordinate chosen uniformly at random from
$\{2, \ldots, k+1\}$, whose value we set to $1$.  The query is also a
vector of dimension $(k+1)$ and is of the form
$\query=[0, 1/k, 1/k, \cdots, 1/k]^T$. We query the point $\query$ and
let $v_{i_1},\cdots,v_{i_k}$ be the points corresponding to the
approximate closest flat/simplex reported by the algorithm. We then
check if $\sum_{j=1}^k a_{i_j} = 0$, and if so, we report
$\brc{a_{i_1},\cdots, a_{i_k}}$. Otherwise, we report that no such $k$
numbers exist. Next, we prove the correctness via the following two
lemmas.

\begin{lemma}
    If there is no solution to the $k$-sum problem, that is if there
    is no set $\brc{i_1, \cdots, i_k}$ such that
    $\sum_{j=1}^k a_{i_j} = 0$, then the distance of the query to the
    closest flat/simplex is non-zero.
\end{lemma}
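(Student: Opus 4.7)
The plan is to argue by contrapositive: assume that for some choice of $k$ indices $i_1, \ldots, i_k$ the query $\query$ lies on the flat spanned by $v_{i_1}, \ldots, v_{i_k}$, and extract from this a zero-sum subset of size $k$, contradicting the hypothesis. Since every induced simplex is contained in its induced flat, it suffices to handle the flat case.

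First, I would write $\query = \sum_{j=1}^{k} \alpha_j v_{i_j}$ with $\sum_j \alpha_j = 1$ (the defining property of the affine hull). Then I would inspect the last $k$ coordinates one at a time. Recall that each $v_i$ is zero outside the first coordinate except for a single $1$ at the randomly chosen position $\pi(i) \in \{2, \ldots, k+1\}$. For a fixed position $p \in \{2, \ldots, k+1\}$, the $p$-th coordinate of the linear combination equals $\sum_{j : \pi(i_j) = p} \alpha_j$, and this must equal $1/k$ since $\query_p = 1/k$.

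The key step is to observe that this forces $j \mapsto \pi(i_j)$ to be a bijection from $\{1,\ldots,k\}$ onto $\{2, \ldots, k+1\}$. Indeed, if some position $p$ has no preimage, then the $p$-th coordinate of the combination is $0 \neq 1/k$, a contradiction. Once injectivity is established, each fiber consists of a single index, so $\alpha_j = 1/k$ for every $j$. Reading off the first coordinate (which is $0$ in $\query$) then yields $\sum_{j=1}^k \alpha_j a_{i_j} = (1/k) \sum_{j=1}^k a_{i_j} = 0$, so $\{a_{i_1}, \ldots, a_{i_k}\}$ is a valid $k$-sum solution, contradicting the hypothesis.

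I do not anticipate any serious obstacle here: the structure of the auxiliary ``indicator'' coordinates is tailored precisely to pin down the affine weights to be uniformly $1/k$, after which the first coordinate encodes exactly the $k$-sum equation. The only mild subtlety is confirming that the argument works equally for both the \ANIF and \ANIS variants, which is immediate because the extracted coefficients $\alpha_j = 1/k$ are automatically non-negative and so the recovered point lies in the simplex as well as the flat.
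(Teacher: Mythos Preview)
Your proposal is correct and follows essentially the same route as the paper's proof: assume $\query$ lies on an induced flat/simplex, write it as a combination $\sum_j \alpha_j v_{i_j}$, use the indicator coordinates $2,\ldots,k+1$ to force the chosen positions to be a permutation and hence each $\alpha_j = 1/k$, and then read off $\sum_j a_{i_j} = 0$ from the first coordinate. One tiny wording slip: where you write ``once injectivity is established,'' you have actually established \emph{surjectivity} (every position has a preimage); bijectivity then follows by counting, which is what you use next.
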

\begin{proof}
    Suppose that the distance of the query to the closest flat/simplex
    is zero. Thus, there exist $k$ vectors $v_{i_1},\cdots, v_{i_k}$
    and the coefficients $c_1,\cdots,c_k$, such that
    $c_1 v_{i_1} + \cdots + c_k v_{i_k} = q$.  Let $t_1, \cdots, t_k$
    be the coordinates ($t_i$ has value from $2$ to $(k+1)$) such that
    $v_{i_j}$ has nonzero value in its $t_j$th coordinate. Note that
    $\query$ has exactly $k$ nonzero coordinates, and each $v_{i_j}$
    has exactly one non-zero coordinate from $2$ to $(k+1)$th
    coordinates, and there are $k$ such vectors. Thus
    $t_1, \cdots, t_k$ should be a permutation from $2$ to
    $k+1$. Therefore all $c_j$'s should be equal to $1/k$. Hence, we
    also have that $a_{i_1}+\cdots + a_{i_k} = 0$ which is a
    contradiction. Therefore the lemma holds.
\end{proof}

\begin{lemma}
    If there exist $a_{i_1}, \cdots, a_{i_k}$ such that
    $\sum_{j=1}^k a_{i_j} = 0$, then with probability $e^{-k}$ the
    solution of the affine \SLR/convex \SLR/\SLR would be a
    flat/simplex which contains the query $\query$.
\end{lemma}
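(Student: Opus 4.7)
The plan is to exhibit, for the $k$ indices $i_1,\ldots,i_k$ guaranteed by the hypothesis $\sum_{j=1}^k a_{i_j}=0$, a specific flat/simplex through the corresponding points $v_{i_1},\ldots,v_{i_k}$ that contains $\query$ whenever a clean ``indicator'' event occurs. Let $t_j \in \{2,\ldots,k+1\}$ denote the random coordinate of $v_{i_j}$ whose entry is set to $1$, and let $E$ be the event that $(t_1,\ldots,t_k)$ is a permutation of $\{2,\ldots,k+1\}$.

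First I would compute $\Pr[E] = k!/k^k$ using independence and uniformity of the $t_j$'s, and then invoke the standard bound $k! \geq (k/e)^k$ to conclude $\Pr[E] \geq e^{-k}$, matching the claim. Next, conditioned on $E$, I would verify the identity $\query = \frac{1}{k}\sum_{j=1}^k v_{i_j}$ coordinate by coordinate: the first entry of the right hand side is $\frac{1}{k}\sum_{j=1}^k a_{i_j}=0$ by hypothesis, and for each $t \in \{2,\ldots,k+1\}$ exactly one of the $v_{i_j}$ places a $1$ in coordinate $t$ (since $E$ holds), yielding entry $1/k$. This matches $\query = [0, 1/k, \ldots, 1/k]^T$ exactly.

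The coefficients $1/k$ are nonnegative and sum to $1$, so this single identity realizes $\query$ simultaneously as a convex combination, an affine combination, and a linear combination of the selected $k$ vectors. Consequently $\query$ lies in the induced simplex, the induced $(k-1)$-flat, and the induced $k$-dimensional linear subspace, so the true optimum distance is exactly $0$ in each of the three variants \SLR, Affine \SLR, and Convex \SLR. Because any $(1+\eps)$-multiplicative \ANN approximation of $0$ is still $0$, the data structure must return a flat/simplex at distance $0$ from $\query$, which is what the statement asserts.

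The argument is short and I do not anticipate a real obstacle; the only point worth naming is that the multiplicative approximation does not weaken the reduction precisely because we are distinguishing distance $0$ from strictly positive distance (guaranteed by the preceding lemma), so the $(1+\eps)$ slack leaves the zero/nonzero decision intact. A minor cosmetic choice is whether to state the probability bound as $k!/k^k$ (clean combinatorial form) or as $e^{-k}$ (matching the lemma statement via Stirling); I would state both for clarity.
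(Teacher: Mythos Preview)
Your proposal is correct and follows essentially the same argument as the paper: define the event that the random coordinates $t_1,\ldots,t_k$ form a permutation of $\{2,\ldots,k+1\}$, show this has probability $k!/k^k \geq e^{-k}$ via Stirling, and verify that on this event $\query = \tfrac{1}{k}\sum_j v_{i_j}$ with coefficients that are nonnegative and sum to $1$. Your added remarks (the coordinate-by-coordinate check and the observation that multiplicative approximation preserves zero distance) are sound elaborations but do not depart from the paper's approach.
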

\begin{proof}
    We consider the sum of
    $\frac{1}{k} v_{i_1}+\cdots + \frac{1}{k}v_{i_k}$ and show that it
    equals $\query$ with probability $e^{-k}$. Let $t_j$ be the
    position (from $2$ to $(k+1))$ of the coordinate with value $1$ in
    vector $v_{i_j}$. Then if $t_1, \cdots, t_k$ is a permutation from
    $2$ to $k+1$, we have that
    $\sum_{j=1}^k \frac{1}{k}\cdot v_{i_j} = q$ and thus the solution
    of the affine \SLR/convex \SLR/\SLR would contain $\query$ (notice
    that the coefficients are positive and they sum to $1$, so they
    satisfy the required constraints). The probability that
    $t_1, \cdots, t_k$ is a permutation is
    ${k! \over k^k} \approx {\sqrt{2\pi k}(k/e)^k \over k^k } \geq
    e^{-k}$.
\end{proof}
Therefore we repeat this process $e^k$ times, and if any of the
reported flats/simplices contained the query point, then we report the
corresponding solution $\brc{a_{i_1},\cdots,a_{i_k}}$. Otherwise, we
report that no such $a_{i_1},\cdots, a_{i_k}$ exists. This algorithm
reports the answer correctly with constant probability by the above
lemmas. Moreover as the algorithm needs to detect the case when the
optimal distance is zero or not, this lower bound works for any
approximation of the problem.  Thus we get the following theorem.

\begin{theorem}
    There is an algorithm for the $k$-sum problem with the running
    time bounded by $O(e^{k})$ times the required time to solve any of
    the three variants of the approximate \SLR problem.
\end{theorem}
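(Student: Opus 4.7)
The plan is to execute the randomized reduction already sketched and bundle together the two preceding lemmas. Given a $k$-sum instance with integers $a_1, \ldots, a_n$, I would construct the random vectors $v_1, \ldots, v_n \in \Re^{k+1}$ and the query $\query = (0, 1/k, \ldots, 1/k)^T$ as in the reduction, call the given approximate \SLR (or affine/convex \SLR) algorithm on $(\PntSet, \query)$, and test whether the $k$ indices returned define a flat or simplex passing through $\query$ --- equivalently, whether the corresponding $k$ integers sum to $0$, which is an $O(k)$ check.

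Correctness in the NO direction follows from the first lemma: when no $k$-sum solution exists, every induced flat or simplex lies at strictly positive distance from $\query$, so any multiplicative $(1+\eps)$-approximate answer is also at strictly positive distance, and the zero-sum test fails. In the YES direction, the second lemma says that a single random construction makes the optimum distance equal to $0$ with probability at least $e^{-k}$; conditioned on this event, any $(1+\eps)$-approximate solver must return a zero-distance configuration, and the test then recovers a valid $k$-tuple.

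To boost the per-trial success probability $e^{-k}$ in the YES case to a constant, I will repeat the random construction $\Theta(e^{k})$ times with independent choices of the random standard-basis coordinates and run the \SLR algorithm on each instance. By independence, the probability that every trial fails is at most $(1 - e^{-k})^{\Theta(e^{k})} = O(1)$, so the overall algorithm succeeds with constant probability. The total running time is $O(e^{k})$ times the cost of a single invocation of the approximate \SLR solver on an instance of size $n$, plus $O(e^{k} n k)$ overhead for construction and verification, which is dominated by the former. This yields exactly the claimed bound.

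Since the two lemmas already stated do the heavy lifting, there is no genuine technical obstacle left; the only things I need to be careful about are (i) that a multiplicative $(1+\eps)$ approximation preserves the zero-versus-positive distinction on the optimum distance, and (ii) that the same reduction simultaneously handles all three variants, because the witness coefficients $1/k$ are positive and sum to $1$, hence satisfy the convex, affine, and unconstrained \SLR conditions at once.
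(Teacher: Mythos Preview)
Your proposal is correct and follows essentially the same approach as the paper: repeat the randomized construction $\Theta(e^{k})$ times, invoke the approximate \SLR solver on each instance, and use the two lemmas to argue correctness in the NO and YES directions respectively, relying on the fact that a multiplicative approximation distinguishes zero from nonzero optimum. The only minor imprecision is your parenthetical ``equivalently'' --- the flat/simplex containing $\query$ is not literally equivalent to $\sum a_{i_j}=0$ (it also requires the random basis coordinates to form a permutation), but the actual test you perform, checking $\sum a_{i_j}=0$, is exactly what is needed and matches the paper.
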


\section{Approximating the nearest induced %
   segment}
\seclab{sec:seg}

In this section, we consider the online / query variant of the \ANIS
problem for the case of $k=2$. That is, given a set of $n$ points
$\PntSet \subset \Re^d$, the goal is to preprocess $\PntSet$, such
that given a query point $\query$, one can approximate the closest
segment formed by any of the two points in $\PntSet$ to $\query$.

\subsection{Approximating the nearest neighbor %
   in a star}
\seclab{segment2}

Here, we show how to handle the non-uniform star case -- namely, we
have a set of $n$ segments (of arbitrary length), all sharing an
endpoint, and given a query point, we would like to compute quickly
the \ANN on the star to this query.

\subsubsection{Preliminaries}
\label{sec:segment:prelim}

\begin{lemma}
    \lemlab{prefix}%
    Let $\PntSet = \brc{\pnt_1, \ldots, \pnt_n}$ be an ordered set of
    $n$ points in $\Re^d$. Given a data structure that can answer
    $(1+\eps)$-\ANN queries for $n$ points in $\Re^d$, with space
    $\spY{n}{\eps}$ and query time $\tmQueryY{n}{\eps}$, then one can
    build a data structure, such that for any integer $i$ and a query
    $\query$, one can answer $(1+\eps)$-\ANN queries on the prefix set
    $\PntSet_{i} = \brc{\pnt_1, \ldots, \pnt_i}$. The query time and
    space requirement becomes $O(\spY{n}{\eps} \log n)$ and
    $O(\tmQueryY{n}{\eps} \log n)$, respectively.
\end{lemma}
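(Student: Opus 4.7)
The plan is to use a standard segment-tree (dyadic decomposition) over the index range $[1,n]$, layered with the given \ANN data structure at each node.

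Concretely, I would build a balanced binary tree $T$ whose leaves, in left-to-right order, correspond to the points $\pnt_1, \ldots, \pnt_n$. Each internal node $v$ of $T$ is naturally associated with a contiguous index interval $I_v \subseteq \{1,\ldots,n\}$, namely the set of leaves in its subtree. At every node $v$, I would preprocess the set $\PntSet_{I_v} = \Set{\pnt_j}{j \in I_v}$ into the provided $(1+\eps)$-\ANN data structure. Since each point $\pnt_j$ participates in exactly $O(\log n)$ ancestor nodes, the total space used across all nodes is $O(\spY{n}{\eps} \log n)$ (this relies on $\spY{\cdot}{\cdot}$ being at least linear, so that the sum of $\spY{\cardin{I_v}}{\eps}$ over any one level is $O(\spY{n}{\eps})$, and there are $O(\log n)$ levels).

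Given a prefix query $\PntSet_i$ with a query point $\query$, I would decompose $\{1,\ldots,i\}$ into $O(\log n)$ pairwise-disjoint canonical intervals $I_{v_1}, \ldots, I_{v_t}$, using the standard prefix decomposition in a segment tree (descend from the root, and whenever the current node's right child is entirely contained in $[1,i]$, take it as one canonical piece and recurse left, otherwise recurse right). For each $v_j$, query its precomputed \ANN structure with $\query$, and return the nearest among all $t = O(\log n)$ returned points. Correctness is immediate: the disjoint union $\bigcup_{j} I_{v_j}$ equals $\{1,\ldots,i\}$, so the overall minimum distance over the $t$ queries is a $(1+\eps)$-approximation of $\min_{j \le i}\distY{\query}{\pnt_j}$, since the minimum of $(1+\eps)$-approximations over a partition of the point set is itself a $(1+\eps)$-approximation.

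The query time is $O(t \cdot \tmQueryY{n}{\eps}) = O(\tmQueryY{n}{\eps}\log n)$, and the preprocessing/space bound follows from the per-level argument above. There is essentially no obstacle; the only subtlety worth stating explicitly is the assumption (already implicit in \asmref{ANN}) that $\spY{m}{\eps}$ and $\tmQueryY{m}{\eps}$ are monotone and at least linear in $m$, which ensures that summing over canonical nodes collapses cleanly to the stated bounds. The same construction gives suffix queries by symmetry, and arbitrary contiguous range queries by combining a prefix and a suffix (although only the prefix version is needed here).
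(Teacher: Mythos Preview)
Your proposal is correct and is essentially identical to the paper's own proof, which also builds a balanced binary tree over $\PntSet$ with an \ANN structure at each node and decomposes a prefix query into $O(\log n)$ canonical sets. Your write-up is in fact more detailed than the paper's, which simply cites this as the standard range-tree technique.
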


\begin{proof}
    This is a standard technique used for example in building range
    trees \cite{bcko-cgaa-08}. Build a balanced binary tree over
    $\PntSet$, with the leafs of the tree ordered in the same way as
    in $\PntSet$. Build for any internal node in this tree, the \ANN
    data structure for the canonical set of points stored in this
    subtree. Now, an $(i,\query)$-\ANN query, can be decomposed into
    \ANN queries over $O( \log n)$ such canonical sets. For each
    canonical set, the algorithm uses the \ANN data structure built
    for it. The algorithm returns the best \ANN found.
\end{proof}

Given a set of points $\PntSet$, and a base point $\cen$, its
\emphi{star} is the collection of segments
$\starY{\cen}{\PntSet} = \bigcup^{}_{\pntA \in \PntSet} \cen \pntA$ as
defined in \defref{star}.  The set of points in such a star at
distance $r$ from $\cen$, is the set of points
\begin{align}
  \PRZ{\cen}{r}{\PntSet}%
  =%
  \Set{\pntC}{ \Bigl. \pntA \in \PntSet,
  \normY{\cen}{\pntA} \geq r, \text{ and } \pntC = \cen\pntA \cap
  \SphereY{\cen}{r} },
  \eqlab{star:sphere}
\end{align}
where $\SphereY{\cen}{r}$ denotes the sphere of radius $r$ centered at
$\cen$.

\begin{lemma}
    \lemlab{query:around}%
    Let $\cen$ be a point, and let $\PntSet$ be a set of $n$ points in
    $\Re^d$.  Given a data structure that can answer $(1+\eps)$-\ANN
    queries for $n$ points in $\Re^d$, with space $\spY{n}{\eps}$ and
    query time $\tmQueryY{n}{\eps}$, then one can build a data
    structure, such that given a query point $\query$ and radius $r$,
    one can answer $(1+\eps)$-\ANN queries on the set of points
    $\PRZ{\cen}{r}{\PntSet}$ (see \Eqref{star:sphere}).  The query
    time and space bounds are $O(\spY{n}{\eps} \log n)$ and
    $O(\tmQueryY{n}{\eps} \log n)$, respectively.
\end{lemma}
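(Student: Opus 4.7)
The plan is to reduce the problem to a prefix/suffix \ANN query on a precomputed set of unit direction vectors, and then apply \lemref{prefix}. First I would sort the points of $\PntSet$ by their distance from $\cen$ in increasing order, obtaining an ordered sequence $\pnt_1, \ldots, \pnt_n$ with $\len_i = \normY{\cen}{\pnt_i}$. For each point $\pnt_i$ I would precompute its unit direction vector $\vecA_i = (\pnt_i - \cen)/\len_i$. Build the suffix \ANN data structure of \lemref{prefix} (which is equivalent to the prefix version after reversing the order) over the set $\brc{\vecA_1, \ldots, \vecA_n}$, so that for any index $i$ and any query point one can answer a $(1+\eps)$-\ANN query on $\brc{\vecA_i, \ldots, \vecA_n}$ in time $O(\tmQueryY{n}{\eps}\log n)$ and space $O(\spY{n}{\eps}\log n)$.

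To answer a query $(\query, r)$: by \Eqref{star:sphere}, the candidate set $\PRZ{\cen}{r}{\PntSet}$ consists of the points $\cen + r\vecA_i$ for every index $i$ with $\len_i \geq r$. Use a binary search on the sorted distances to locate the smallest index $j$ such that $\len_j \geq r$, so that the candidate indices form the suffix $\brc{j, j+1, \ldots, n}$. The key observation is a simple rescaling identity: for every such $i$,
\begin{align*}
  \normY{\query}{\cen + r\vecA_i}
  =
  r \cdot \normY{(\query-\cen)/r}{\vecA_i}.
\end{align*}
Thus ordering of candidate points by distance to $\query$ is identical to ordering of the vectors $\vecA_i$ by distance to the scaled query vector $\wquery = (\query-\cen)/r$, and any $(1+\eps)$-\ANN to $\wquery$ in $\brc{\vecA_j, \ldots, \vecA_n}$ pulls back to a $(1+\eps)$-\ANN in $\PRZ{\cen}{r}{\PntSet}$ with the same multiplicative approximation factor. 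Hence the query algorithm is: binary-search for $j$, run the suffix \ANN query for $\wquery$ on $\brc{\vecA_j, \ldots, \vecA_n}$, and return the corresponding scaled-back point $\cen + r\vecA_{i^*}$ where $\vecA_{i^*}$ is the returned direction.

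The only nontrivial point is the approximation-preserving reduction from the sphere $\SphereY{\cen}{r}$ to the unit sphere of direction vectors; this is just the homogeneity identity above. Once that is in hand, correctness is immediate and the space and query time bounds follow directly from \lemref{prefix} (with an additive $O(\log n)$ for the binary search, which is absorbed by the stated bounds).
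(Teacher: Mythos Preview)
Your proposal is correct and essentially identical to the paper's proof: both sort the points by distance to $\cen$, precompute the unit direction vectors, build the prefix \ANN structure of \lemref{prefix} on those directions, binary-search for the cutoff index determined by $r$, and query with the translated/scaled point $(\query-\cen)/r$. The only differences are cosmetic --- you sort in increasing order and use a suffix query while the paper sorts in decreasing order and uses a prefix query, and you return the actual sphere point $\cen + r\vecA_{i^*}$ whereas the paper returns the corresponding original point $\pnt_j$.
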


\begin{proof}
    Let $\PntSet = \brc{\pnt_1, \ldots, \pnt_n}$ be the ordering of
    $\PntSet$ such that the points are in \emph{decreasing} distance
    from $\cen$. Let $\PntSetA = \brc{ \pntC_1, \ldots, \pntC_n}$, be
    the ordering of the direction vectors of the points of $\PntSet$,
    where $\pntC_i = (\pnt_i - \cen) / \norm{ \pnt_i -\cen}$.  Build
    the data structure of \lemref{prefix} on the (ordered) point set
    $\PntSetA$.

    Given a query point $\query$, using a balanced binary search tree,
    find the maximal $i$, such that
    $\ell = \distY{\cen}{\pnt_i} \geq r$. Compute the affine
    transformation $\TrX{\query} = (\query - \cen)/r$, and compute the
    $(1+\eps)$-\ANN to $\TrX{\query}$ in $\PntSetA_i$, and let
    $\pntC_j$ be this point. The algorithm returns $\pnt_j$ is the
    desired \ANN.

    To see why this procedure is correct, observe that
    $\PRZ{\cen}{r}{\PntSet_i} = \PRZ{\cen}{r}{\PntSet}$, where
    $\PntSet_i = \brc{\pnt_1, \ldots, \pnt_i}$.  Furthermore,
    $\PntSetA_i= \brc{ \pntC_1, \ldots, \pntC_i} =
    \TrX{\PRZ{\cen}{r}{\PntSet_i}}$. In particular, since $\TrC$ is
    only translation and scaling, it preserves order between distances
    of pairs of points. In particular, if $\pntA$ is the \ANN to
    $\query$ in $\PRZ{\cen}{r}{\PntSet}$, then $\TrX{\pntA}$ is an
    \ANN to $\TrX{\query}$ in
    $\TrX{\PRZ{\cen}{r}{\PntSet}} = \PntSetA_i$, implying the
    correctness of the above.
\end{proof}

\subsubsection{The query algorithm}
\seclab{alg:detail}%

\mypar{The algorithm in detail.}

We are given a set of $n$ points $\PntSet$ and a center point
$\cen \in \PntSet$, and we are interested in answering $(1+\eps)$-\ANN
queries on $\Star = \starY{\cen}{\PntSet}$.

\mypar{Preprocessing.}  We sort the points of $\PntSet$ to be in
decreasing distance from $\cen$, and let
$\PntSet = \brc{\pnt_1,\ldots, \pnt_n}$ be the points in this sorted
order. We build the data structure $\DS$ of \lemref{query:around} for
the points of $\PntSet$ to answer $(1+\eps/4)$-\ANN queries.

\mypar{Answering a query.}  Given a query point $\query$, let
$r = \distY{\query}{\cen}$.  As a first step, we perform
$(1+\eps/4)$-\ANN query on $\PntSet$. Next, let
$r_i = i (\eps^2/16) r$, for $i=1, \ldots, N = 32 /\eps^2$.  For each
$i$, find the $(1+\eps/4)$-\ANN in the point set
$\PRZ{\cen}{r_i}{\PntSet}$, using the data structure \DS. Return the
nearest point to $\query$ found as the desired \ANN.

\subsubsection{Correctness}
Observe that $\distSet{\query}{\Star} \leq r$, since
$\cen \in \PntSet \subseteq \Star = \starY{\cen}{\PntSet}$.
\begin{lemma}
    \lemlab{far}%
    If $\distSet{\query}{\Star} \geq r \sqrt{\eps}/2$ then the above
    algorithm returns a point $\pnt \in \Star$, such that
    $\distY{\query}{\pnt} \leq (1+\eps) \distSet{\query}{\Star}$, for
    any $0 < \eps \leq 1$.
\end{lemma}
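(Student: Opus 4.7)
My plan is to identify the closest point on the star and argue by a grid-rounding case split. Let $\pnt^* \in \Star$ realize $\delta := \distSet{\query}{\Star} = \distY{\query}{\pnt^*}$, and suppose $\pnt^*$ lies on the segment $\cen\pnt_j$ at distance $\rho = \distY{\cen}{\pnt^*}$ from $\cen$. Because $\cen \in \Star$ we have $\delta \leq \distY{\query}{\cen} = r$, so the triangle inequality gives $\rho \leq r + \delta \leq 2r$, placing $\rho$ inside the range $[0,2r]$ covered by the grid $r_i = i(\eps^2/16)r$, $i = 1,\ldots,N = 32/\eps^2$. I then split on whether $\rho \geq r_1 = (\eps^2/16)r$ or $\rho < r_1$.

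In the main case $\rho \geq r_1$, I would pick the index $i$ with $r_i \leq \rho < r_{i+1}$, so that $\rho - r_i \leq (\eps^2/16)r$. Since $r_i \leq \rho \leq \distY{\cen}{\pnt_j}$, the point $\pnt^\sharp$ lying on $\cen\pnt_j$ at distance $r_i$ from $\cen$ belongs to $\PRZ{\cen}{r_i}{\PntSet}$ and satisfies $\distY{\pnt^*}{\pnt^\sharp} = \rho - r_i$. Hence $\distSet{\query}{\PRZ{\cen}{r_i}{\PntSet}} \leq \delta + (\eps^2/16)r$, and the $(1+\eps/4)$-\ANN query performed by the algorithm on this set returns a point $\pntC \in \Star$ with $\distY{\query}{\pntC} \leq (1+\eps/4)(\delta + (\eps^2/16)r)$. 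The hypothesis $\delta \geq r\sqrt{\eps}/2$ converts the additive grid error into a relative one: $(\eps^2/16)r \leq (\eps^{3/2}/8)\delta \leq (\eps/8)\delta$ for $\eps \leq 1$, after which the routine inequality $(1+\eps/4)(1+\eps/8) \leq 1+\eps$ closes this case.

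In the residual case $\rho < r_1$, the true closest star-point sits within $(\eps^2/16)r$ of $\cen$, which forces $\delta \geq r - \rho > r(1 - \eps^2/16)$, so $r$ and $\delta$ coincide up to a factor $1 + O(\eps^2)$. Since $\cen \in \PntSet \subseteq \Star$, the first step of the algorithm, an unrestricted $(1+\eps/4)$-\ANN query on $\PntSet$ itself, returns a point $\pnt' \in \Star$ with $\distY{\query}{\pnt'} \leq (1+\eps/4)\distSet{\query}{\PntSet} \leq (1+\eps/4)r$, and combining with $r \leq \delta/(1-\eps^2/16)$ yields $\distY{\query}{\pnt'} \leq (1+\eps)\delta$ for $\eps \leq 1$. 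This boundary regime is the main obstacle, since the grid $\{r_i\}$ does not reach below $r_1$; the initial unrestricted \ANN query exists precisely to cover it, and the argument works because pushing $\pnt^*$ near $\cen$ already makes $\cen$ itself an essentially optimal answer.
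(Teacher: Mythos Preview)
Your proof is correct and follows essentially the same approach as the paper: locate the true nearest star point $\pnt^*$, round its radial coordinate to the nearest grid radius $r_i$, and use the hypothesis $\delta \geq r\sqrt{\eps}/2$ to turn the additive grid error $(\eps^2/16)r$ into a multiplicative $(1+O(\eps))$ factor. Your treatment is in fact slightly more careful than the paper's: the paper simply asserts that some $\pntB \in \PntSetB \cap \cen\pntA$ lies within $(\eps^2/16)r$ of $\query'$, which tacitly assumes the segment $\cen\pntA$ is long enough to meet the first grid sphere, whereas you explicitly isolate the boundary regime $\rho < r_1$ and dispatch it via the initial \ANN query on $\PntSet$ and the observation that then $\delta > r(1-\eps^2/16)$, making $\cen$ itself nearly optimal.
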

\begin{proof}
    Let $\ell = \distSet{\query}{\Star}$.  Consider the point set
    $\PntSetB_i = \bigcup_{i=1}^N\PRZ{\cen}{r_i}{\PntSet} $. The
    algorithm effectively performs $(1+\eps /4)$-\ANN query over this
    point set. So, let $\query' = \nnY{\query}{\Star}$, and let
    $\pntA \in \PntSet$ be the point, such that
    $\query'\in \cen \pntA$.


    By construction, there is a choice of $i$, and a point
    $\pntB \in \PntSetB_i \cap \cen \pntA$, such that
    $\distY{\pntB}{\query'} \leq (\eps^2/16)r$.  Namely, we have
    $\distSet{\query}{\PntSetB} \leq \distSet{\query}{\Star} +
    (\eps^2/16)r$.  As such, the $(1+\eps/4)$-\ANN point returned for
    $\query$ in $\PntSetB$, is in distance at most
    \begin{align*}
      (1+\eps/4) \distY{\query}{\pntB}%
      &\leq%
        (1+\eps/4) \pth{ \bigl. \distY{\query}{\query'} +
        (\eps^2/16)r}%
        \leq%
        (1+\eps/4) \distY{\query}{\query'} +
        (\eps^2/8)r
      \\&%
      \leq 
      (1+\eps/2) \distSet{\query}{\Star},
    \end{align*}
    since $\distSet{\query}{\cen \pntA} = \distSet{\query}{\Star}$.
\end{proof}

The above lemma implies that (conceptually) the hard case is when the
\ANN distance is small (i.e., $O(\sqrt{\eps} r)$). The intuition is
that in this case the (regular) \ANN query on $\PntSet$ and
$\PRZ{\cen}{r}{\PntSet}$ would ``capture'' this distance, and returns
to us the correct \ANN.  The following somewhat tedious lemma
testifies to this effect.

\begin{lemma}
    \lemlab{choose}%
    Let $\cen, \query$ be two given points, where
    $r=\distY{\query}{\cen}$, and let $0<\eps \leq 1$ be a
    parameter. Consider a set of $n$ points
    $\PntSetC = \brc{\pnt_1, \ldots, \pnt_n} \subseteq
    \ballY{\cen}{r}$, where $\ballY{\cen}{r}$ denotes the ball
    centered at $\cen$ of radius $r$.  Let
    $\Star = \starY{\cen}{\PntSetC}$, and assume that
    $\distSet{\query}{\Star} \leq r \sqrt{\eps}/2$.  Then, we have
    that
    \begin{math}
        \distSet{\query}{\PntSetC}%
        \leq%
        (1+\eps/4)\distSet{\query}{\Star}.
    \end{math}
\end{lemma}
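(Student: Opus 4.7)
The plan is to exhibit a single point $\pntA \in \PntSetC$ whose distance to $\query$ is close to $\ell := \distSet{\query}{\Star}$. Let $\pntA \in \PntSetC$ be a point whose induced segment $\cen\pntA$ realizes this minimum, and let $\query' \in \cen\pntA$ be the nearest point on that segment to $\query$. The nearest point on a segment is either an endpoint or the foot of the perpendicular from $\query$ to the underlying line, so I will dispose of the endpoint cases first and then handle the perpendicular-foot case via a short Pythagorean computation.

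If $\query' = \pntA$, then $\distY{\query}{\pntA} = \ell$, and the inequality holds trivially. If $\query' = \cen$, then $\ell = \distY{\query}{\cen} = r$, which contradicts the hypothesis $\ell \le r\sqrt{\eps}/2$ since $\eps \le 1$. Hence we may assume $\query'$ lies strictly in the interior of $\cen\pntA$ and is therefore the foot of the perpendicular from $\query$ to $\LineY{\cen}{\pntA}$. Set $\theta = \Angle{\query \cen \pntA}$, $s = \distY{\cen}{\pntA}$, and $t = \distY{\cen}{\query'}$, so $t = r\cos\theta$, $\ell = r\sin\theta$, and $0 \le t \le s \le r$ (the last inequality uses $\PntSetC \subseteq \ballY{\cen}{r}$).

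By the Pythagorean theorem applied to the right triangle $\query\query'\pntA$,
\begin{align*}
    \distY{\query}{\pntA}^2 \;=\; \distY{\query}{\query'}^2 + \distY{\query'}{\pntA}^2 \;=\; \ell^2 + (s-t)^2.
\end{align*}
The main obstacle is bounding $(s-t)^2$: while $\query'$ is close to $\query$, the far endpoint $\pntA$ could in principle lie much farther along the segment. This is exactly where the constraint $\pntA \in \ballY{\cen}{r}$ saves us, giving $s - t \le r - t = r(1-\cos\theta)$. Using the identity $(1-\cos\theta)/\sin\theta = \tan(\theta/2)$ together with the elementary inequality $\tan(\theta/2) \le \sin\theta$ valid for $\theta \in [0,\pi/2]$, I would conclude
\begin{align*}
    s - t \;\le\; r\sin^2\theta \;=\; \ell \sin\theta,
\end{align*}
and hence $(s-t)^2 \le \ell^2 \sin^2\theta \le \ell^2 \eps/4$ by the hypothesis $\sin\theta \le \sqrt{\eps}/2$.

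Combining, $\distY{\query}{\pntA}^2 \le \ell^2(1+\eps/4)$, so
\begin{align*}
    \distSet{\query}{\PntSetC} \;\le\; \distY{\query}{\pntA} \;\le\; \ell\sqrt{1+\eps/4} \;\le\; \ell\pth{1+\eps/8} \;\le\; (1+\eps/4)\,\distSet{\query}{\Star},
\end{align*}
using $\sqrt{1+x} \le 1 + x/2$ for $x \ge 0$. This gives the desired inequality, and in fact slack by a factor of two in the approximation constant, which is reassuring for the coming usage in \secref{alg:detail}.
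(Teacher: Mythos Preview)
Your proof is correct. Both you and the paper reduce to the same triangle $\cen\query\pntA$ with perpendicular foot $\query'$, dispose of the endpoint cases, and then bound $\distY{\query}{\pntA}$ in terms of $\ell$ using the smallness of the angle at $\cen$.

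The execution differs, however. The paper introduces a cone of half-angle $\phi=\sqrt{\eps}/2$ around the ray $\cen\query$, argues that $\query'$ must lie inside it, and then bounds the angle $\beta=\Angle{\query\pntA\cen}$ from below by $\pi/2-\phi/2$, using $\distY{\query}{\query'}=\distY{\query}{\pntA}\sin\beta$ together with $1-\sin\beta\le\phi^2/8$. You bypass the cone entirely and go straight through Pythagoras on the right triangle $\query\query'\pntA$, bounding the leg $s-t$ by $r(1-\cos\theta)\le r\sin^2\theta=\ell\sin\theta$ via the half-angle identity $\tan(\theta/2)\le\sin\theta$ on $[0,\pi/2]$. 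This is shorter and avoids the paper's separate case analysis on whether $\query'$ falls inside or outside the cone. Both arguments end up with slack (the paper gets $(1+\eps/16)$, you get $(1+\eps/8)$) relative to the stated $(1+\eps/4)$.
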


\begin{proof}
    Observe that by definition
    $\distSet{\query}{\Star} \leq \distSet{\query}{\PntSetC}$.  Let
    $\query' = \nnY{\query}{\Star}$, and consider the clipped cone
    $\Cone$ of all points $\pntA\in\PntSetC$ in distance at most $r$
    from $\cen$, such that
    $\Angle{\pntA\ts \cen\ts \query} \leq \phi = \sqrt{\eps}/2$.
    
    Assume that $\query'$ is in $\Cone$. Then, there must be a point
    $\pntA \in \PntSetC \cap \Cone$ such that
    $\query' \in \cen \pntA$.  Observe that
    $\distSet{\query}{\Star} = \distY{\query}{\query'} =
    \distSet{\query}{\cen\pntA}$.  If
    $\distSet{\query}{\cen\pntA} = \distY{\query}{\pntA} \geq
    \distSet{\query}{\PntSetC}$, which implies that
    $\distSet{\query}{\PntSetC} = \distSet{\query}{\Star}$, and we are
    done.

    As such, consider the case that
    $\distSet{\query}{\cen\pntA} < \distY{\query}{\pntA}$, which is
    depicted in \figref{case}. Clearly,
    $\beta = \Angle{\query \pntA \cen}$ is minimized when
    $\distY{\cen }{\pntA} = r$, and
    $\Angle{\pntA \cen \query} = \phi$. But then, this angle is equal
    to $\pi/2 - \phi/2$, which implies that
    $\beta \geq \pi/2 - \phi/2$.

    \begin{figure}[h]
        \centerline{\includegraphics{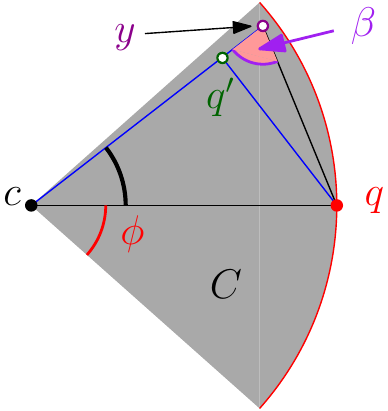}}%
        \caption{Illustration.}
        \figlab{case}
    \end{figure}

    Now, $\normY{\query}{\query'} =\normY{\query}{\pntA}\sin \beta$,
    and $\normY{\query}{\pntA} \leq 2 r \sin (\phi/2)$.  As such,
    \begin{align*}
      \Delta%
      =%
      \normY{\query}{\pntA} - \normY{\query}{\query'}%
      =%
      \normY{\query}{\pntA} - \normY{\query}{\pntA}\sin \beta 
      =%
      \normY{\query}{\pntA}
      \pth{\bigl.1 - \sin \beta }.%
    \end{align*}
    In particular, we have
    \begin{math}
        {1 - \sin \beta }%
        \leq%
        {\Bigl.1 - \sin (\pi/2 - \phi/2) }%
        =%
        {\Bigl.1 - \cos ( \phi/2) }%
        \leq%
        1-(1 - (\phi/2)^2/2) %
        = \phi^2 / 8 = \eps/32,
    \end{math}
    since $\cos(x) \geq 1-x^2/2$, for $x \leq \pi/4$. We conclude that
    \begin{math}
        \Delta \leq (\eps/32) \normY{\query}{\pntA},
    \end{math}
    which implies that
    \begin{math}
        (1-\eps/32)\distY{\query}{\pntA} \leq \normY{\query}{\query'}.
    \end{math}
    But then, we have
    \begin{math}
        \distSet{\query}{\PntSetC} \leq%
        \distY{\query}{\pntA} \leq%
        \frac{1}{1-\eps/32} \normY{\query}{\query'}%
        \leq%
        (1+\eps/16) \distSet{\query}{\Star},
    \end{math}
    implying the claim.

    The remaining case is that $\query'$ is in
    $\ballY{\cen}{r} \setminus \Cone$. This implies that
    $\distSet{\query}{\Star} \geq \distSet{\query}{\ballY{\cen}{r}
       \setminus \Cone}$, which is at least
    $r \sin \phi \geq r \sqrt{\eps}/4$, since $\sin x \geq x/2$, for
    $x \leq 1/2$. But this is a contradiction to the assumption that
    $\distSet{\query}{\Star} \leq r \sqrt{\eps}/2$.
\end{proof}

\begin{lemma}
    The query algorithm of \secref{alg:detail} returns a
    $(1+\eps)$-\ANN to $\distSet{\query}{\Star}$, for any
    $\eps \in (0,1)$.
\end{lemma}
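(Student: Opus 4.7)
My plan is to reduce the lemma to the two preparatory results \lemref{far} and \lemref{choose} via a threshold split at $\distSet{\query}{\Star} = r\sqrt{\eps}/2$. If $\distSet{\query}{\Star} \geq r\sqrt{\eps}/2$, \lemref{far} directly certifies that the grid of sphere queries from step~(2) returns a $(1+\eps)$-\ANN, since the additive error from the grid spacing $(\eps^2/16)r$ is small relative to $\distSet{\query}{\Star}$ and gets absorbed into the multiplicative factor.

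For the harder case $\distSet{\query}{\Star} < r\sqrt{\eps}/2$, I will let $\pntA \in \PntSet$ realize the nearest segment and $\query' \in \cen\pntA$ realize the nearest point on that segment, and split on whether $\pntA$ lies inside $\ballY{\cen}{r}$. If $\distY{\cen}{\pntA} \leq r$, the subset $\PntSet_{\leq r} := \{\pnt \in \PntSet : \distY{\cen}{\pnt} \leq r\}$ is contained in $\ballY{\cen}{r}$ and satisfies $\distSet{\query}{\starY{\cen}{\PntSet_{\leq r}}} = \distSet{\query}{\Star} < r\sqrt{\eps}/2$ (since $\pntA \in \PntSet_{\leq r}$), so \lemref{choose} applies and yields $\distSet{\query}{\PntSet_{\leq r}} \leq (1+\eps/4)\distSet{\query}{\Star}$. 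Combining with the step~(1) $(1+\eps/4)$-\ANN on $\PntSet$ gives a point $\pnt$ with $\distSet{\query}{\cen\pnt} \leq \distY{\query}{\pnt} \leq (1+\eps/4)^2 \distSet{\query}{\Star} \leq (1+\eps)\distSet{\query}{\Star}$.

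If $\distY{\cen}{\pntA} > r$, then either $\query' = \pntA$ (so $\pntA$ itself is close to $\query$ and step~(1) handles it as above), or $\query'$ is the foot of perpendicular from $\query$ to $\LineY{\cen}{\pntA}$. In the latter case, Pythagoras in the triangle $\cen\query'\query$ gives $\rho := \distY{\cen}{\query'} = \sqrt{r^2 - \distSet{\query}{\Star}^2}$, which lies in $[r\sqrt{1-\eps/4}, r]$. I pick the largest $r_i \leq \rho$ (so that $\rho - r_i \leq (\eps^2/16)r$ and $r_i \leq r \leq \distY{\cen}{\pntA}$); the sphere-projection $\tilde\pntA := \cen\pntA \cap \SphereY{\cen}{r_i}$ therefore belongs to $\PRZ{\cen}{r_i}{\PntSet}$, and Pythagoras in triangle $\query\query'\tilde\pntA$ gives $\distY{\query}{\tilde\pntA} = \sqrt{\distSet{\query}{\Star}^2 + (\rho - r_i)^2}$. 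The step~(2) $(1+\eps/4)$-\ANN on $\PRZ{\cen}{r_i}{\PntSet}$ then returns a candidate sphere point $\tilde\pntB$ at distance $\leq (1+\eps/4)\distY{\query}{\tilde\pntA}$ from $\query$, whose corresponding segment $\cen\pntB$ satisfies $\distSet{\query}{\cen\pntB} \leq \distY{\query}{\tilde\pntB}$, and the algorithm returns the minimum over all candidates.

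I expect the main technical obstacle to be reconciling the additive error $(\rho - r_i)^2 \leq (\eps^2 r/16)^2$ with the multiplicative $(1+\eps)$ target. The calibration is immediate when $\distSet{\query}{\Star}$ is at least $\Omega(\eps r)$; in the residual very-small regime, the Pythagorean identity $\rho = \sqrt{r^2 - \distSet{\query}{\Star}^2}$ forces $r-\rho = O(\distSet{\query}{\Star}^2/r)$, so that some $r_i$ in the grid is extremely close to $\rho$, and a law-of-cosines argument at $\cen$ converts the angular closeness of $\tilde\pntB$ to $\query$ on the sphere into the corresponding angular closeness of $\cen\pntB$ to $\cen\pntA$ as rays, which in turn yields $\distSet{\query}{\cen\pntB} \leq (1+\eps)\distSet{\query}{\Star}$ as required.
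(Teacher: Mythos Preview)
Your split at $r\sqrt{\eps}/2$ and your handling of the case $\distY{\cen}{\pntA}\le r$ via \lemref{choose} match the paper exactly. The issue is your treatment of the ``long'' case $\distY{\cen}{\pntA}>r$.

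You try to pick the largest $r_i\le\rho$ and bound $\distY{\query}{\tilde\pntA}=\sqrt{\distSet{\query}{\Star}^2+(\rho-r_i)^2}$, but the grid spacing is a fixed $(\eps^2/16)r$, so when $\distSet{\query}{\Star}$ is much smaller than $\eps^2 r$ the additive term $(\rho-r_i)^2$ dominates and cannot be absorbed into a multiplicative $(1+\eps)$. Your proposed rescue---that $r-\rho=O(\distSet{\query}{\Star}^2/r)$ makes ``some $r_i$ extremely close to $\rho$''---does not follow: the grid does not refine as $\distSet{\query}{\Star}\to 0$, and the nearest grid point to $\rho$ can still be $\Theta(\eps^2 r)$ away. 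The subsequent ``law-of-cosines'' sketch is too vague; working it out, one ends up essentially re-deriving \lemref{choose}, which is exactly the lemma you already invoked in the short case.

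The paper's argument for the long case is much simpler and avoids this gap entirely: since $\query'\in\ballY{\cen}{r}$, truncating the star at radius $r$ does not change the nearest point, i.e.\ $\distSet{\query}{\starY{\cen}{\PRZ{\cen}{r}{\PntSet}}}=\distSet{\query}{\Star}$. Now $\PRZ{\cen}{r}{\PntSet}\subseteq\ballY{\cen}{r}$, so \lemref{choose} applies verbatim to this set and gives $\distSet{\query}{\PRZ{\cen}{r}{\PntSet}}\le(1+\eps/4)\distSet{\query}{\Star}$; the $(1+\eps/4)$-\ANN query at $r_i=r$ then finishes. In short, you should invoke \lemref{choose} a \emph{second} time at radius $r$, rather than trying to control the grid error directly.
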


\begin{proof}
    If $\distSet{\query}{\Star} \geq r \sqrt{\eps}/2$ then the claim
    follows from \lemref{far}. Otherwise, consider the point
    $\pnt \in \PntSet$, such that $\nnY{\query}{\Star}$ lies on
    $\cen \pnt$. If $\distY{\cen}{\pnt} \leq r$, then by
    \lemref{choose}, the $(1+\eps/4)$-\ANN query on
    $\ballY{\cen}{r} \cap \PntSet \subseteq \PntSet$ would return the
    desired \ANN. (Formally, the result is worse by a factor of
    $(1+\eps/4)$, which is clearly smaller than the desired threshold
    of $(1+\eps)$.) Note, that the algorithm performs \ANN query on
    $\PntSet$, and this resolves this case.

    As such, we remain with the long case, that is
    $\distY{\cen}{\pnt} > r$. But then,
    $\query' = \nnY{\query}{\Star}$ must lie inside $\ballY{\cen}{r}$,
    where $\Star = \starY{\cen}{\PntSet}$. As such, in this case
    $\nnY{\query}{\Star} =
    \nnY{\query}{\starY{\cen}{\PRZ{\cen}{r}{\PntSet}}}$. As such,
    again by \lemref{choose}, the $(1+\eps/4)$-ANN to
    $\PRZ{\cen}{r}{\PntSet}$ is the desired approximation.  Note, that
    the algorithm performs explicitly an \ANN query on this point-set,
    implying the claim.
\end{proof}

\mypar{Running time analysis and result.}  The algorithm built the
data structure of \lemref{prefix} once, and we performed
$O( 1/\eps^2 )$ \ANN queries on it. This results in
$O( (\log n)/\eps^2)$ queries on the original \ANN data structure.  We
thus conclude the following:

\begin{lemma}[\ANN in a star]
    \lemlab{qua:qua}%
    Let $\PntSet = \brc{\pnt_1, \ldots, \pnt_n}$ be a set of $n$
    points in $\Re^d$ and let $0<\eps \leq 1$ be a parameter. Given a
    data structure that can answer $(1+\eps)$-\ANN queries for $n$
    points in $\Re^d$, using $\spY{n}{\eps}$ space and
    $\tmQueryY{n}{\eps}$ query time, then one can build a data
    structure, such that for any query point $\query$, it returns the
    $(1+\eps)$-\ANN to $\query$ in $\Star = \starY{\cen}{\PntSet}$.
    The space needed is $O(\spY{n}{\eps} \log n)$, and the query time
    is $O(\tmQueryY{n}{\eps} \eps^{-2} \log n)$.
\end{lemma}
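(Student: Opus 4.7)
The plan is to package the development of this section into the stated bounds. I would argue correctness by direct appeal to the preceding analysis, and then tally the complexity.

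For correctness, the lemma immediately preceding \lemref{qua:qua} already establishes that the query procedure of \secref{alg:detail} returns a $(1+\eps)$-\ANN to $\query$ in $\Star = \starY{\cen}{\PntSet}$. That proof dispatches the far regime $\distSet{\query}{\Star} \geq r \sqrt{\eps}/2$ via \lemref{far} (by appealing to the shell grid $\PRZ{\cen}{r_i}{\PntSet}$ for $r_i = i (\eps^2/16) r$), and the near regime via \lemref{choose}, further split by whether the endpoint $\pnt \in \PntSet$ with $\nnY{\query}{\Star} \in \cen\pnt$ lies inside $\ballY{\cen}{r}$ (handled by the plain $(1+\eps/4)$-\ANN on $\PntSet$) or outside (in which case $\nnY{\query}{\Star} \in \ballY{\cen}{r}$, so $\nnY{\query}{\Star} = \nnY{\query}{\starY{\cen}{\PRZ{\cen}{r}{\PntSet}}}$ and the shell query for $r_i = r$ suffices). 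Invoking this result directly yields the approximation guarantee.

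For the complexity, the preprocessing consists of a single instance of the data structure of \lemref{query:around} on $\PntSet$, tuned for $(1+\eps/4)$-\ANN queries. By \lemref{query:around}, this occupies $O(\spY{n}{\eps/4}\log n) = O(\spY{n}{\eps}\log n)$ space, since the constant factor in the approximation parameter may be absorbed. A query performs one $(1+\eps/4)$-\ANN call on $\PntSet$, followed by one call on $\PRZ{\cen}{r_i}{\PntSet}$ for each of the $N = 32/\eps^2$ radii; each of the latter costs $O(\tmQueryY{n}{\eps}\log n)$ by \lemref{query:around}, and the initial query is subsumed. Summing gives a total query time of $O(\tmQueryY{n}{\eps} \eps^{-2} \log n)$, matching the statement.

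The genuine technical obstacle — namely, proving that discretizing the distance by $O(1/\eps^2)$ spherical shells, combined with a plain \ANN query to catch the small-distance case, suffices for a full $(1+\eps)$ approximation — has already been overcome in \lemref{far} and \lemref{choose}. Thus what remains to be written for \lemref{qua:qua} itself is a two- or three-line summary: cite the preceding correctness lemma, then invoke \lemref{query:around} once for the preprocessing and $O(1/\eps^2)$ times per query.
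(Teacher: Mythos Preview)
Your proposal is correct and takes essentially the same approach as the paper: the paper's own ``proof'' of \lemref{qua:qua} is a two-sentence running-time tally (one build of the prefix/\lemref{query:around} structure, $O(1/\eps^2)$ queries to it, hence $O((\log n)/\eps^2)$ calls to the underlying \ANN), with correctness inherited from the unnamed lemma immediately preceding it. Your write-up is in fact slightly more explicit than the paper's, correctly citing \lemref{query:around} (which the algorithm description uses) rather than \lemref{prefix} directly, and noting that the single plain \ANN query on $\PntSet$ is subsumed.
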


\subsection{The result -- approximate nearest induced segment}

By building the data structure of \lemref{qua:qua} around each point
of $\PntSet$, and querying each of these data structures, we get the
following result.

\begin{theorem}
    \thmlab{a:n:n:i:segment}%
    Let $\PntSet = \brc{\pnt_1, \ldots, \pnt_n}$ be a set of $n$
    points in $\Re^d$ and let $0<\eps \leq 1$ be a parameter. Given a
    data structure that can answer $(1+\eps)$-\ANN queries for $n$
    points in $\Re^d$, using $\spY{n}{\eps}$ space and
    $\tmQueryY{n}{\eps}$ query time, then one can build a data
    structure, such that for any query point $\query$, it returns a
    segment induced by two points of $\PntSet$, which is
    $(1+\eps)$-\ANN to the closest such segment.  The space needed is
    $O(\spY{n}{\eps} n \log n)$, and the query time is
    $O(\tmQueryY{n}{\eps} n \eps^{-2} \log n)$.
\end{theorem}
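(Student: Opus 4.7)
The plan is to apply \lemref{qua:qua} once for each point of $\PntSet$ treated as the base of a star, and then take the minimum answer across all $n$ queries. Specifically, for each $i \in \brc{1,\ldots,n}$, I would invoke \lemref{qua:qua} with center $\cen = \pnt_i$ and point set $\PntSet \setminus \brc{\pnt_i}$ (still $n-1$ points, so asymptotics are unchanged) to obtain a data structure $\DS_i$ that answers $(1+\eps)$-\ANN queries on the star $\Star_i = \starY{\pnt_i}{\PntSet}$.

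For a query point $\query$, the algorithm runs the query procedure on each $\DS_i$, collects the returned segment, and reports the one nearest to $\query$. Correctness is immediate from the observation that the collection of induced segments $\Set{\pnt_i\pnt_j}{1 \leq i < j \leq n}$ is exactly $\bigcup_{i=1}^n \Star_i$ (each segment appears in the star centered at either endpoint). Hence, if $\pnt_a\pnt_b$ realizes the optimum $\distSet{\query}{\bigcup_i \Star_i}$, then it is a segment in $\Star_a$, and the query to $\DS_a$ returns a segment in $\Star_a$ whose distance is at most $(1+\eps) \distSet{\query}{\Star_a} \leq (1+\eps) \distSet{\query}{\pnt_a \pnt_b}$, which is the desired bound; the minimum over $i$ is no worse.

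For the resource bounds, each $\DS_i$ uses $O(\spY{n}{\eps}\log n)$ space and $O(\tmQueryY{n}{\eps}\eps^{-2}\log n)$ query time by \lemref{qua:qua}, so summing over $i=1,\ldots,n$ gives total space $O(\spY{n}{\eps}\,n\log n)$ and total query time $O(\tmQueryY{n}{\eps}\,n\,\eps^{-2}\log n)$, matching the claimed bounds. There is essentially no obstacle here beyond bookkeeping: the main theorem is a direct ``try every basepoint'' reduction of the nearest induced segment problem to the nearest segment in a star problem solved by \lemref{qua:qua}, and neither correctness nor complexity requires any further argument beyond the observations above.
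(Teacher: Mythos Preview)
Your proposal is correct and matches the paper's own proof essentially verbatim: the paper simply states that by building the data structure of \lemref{qua:qua} around each point of $\PntSet$ and querying all of them, the result follows. Your bookkeeping of correctness and resource bounds is exactly what is implicit in that one-line reduction.
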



\subparagraph*{Acknowledgements.}
We thank Arturs Backurs for useful discussions on how the problem relates to the Hopcroft's problem.


   \bibliographystyle{alpha}%
   \bibliography{linear_reg}%


\appendix

\section{Connection to Hopcroft's problem} \seclab{sec:hopcroft}
In the Hopcroft's problem, we are given two sets $U$ and $V$, each consisting of $N$ vectors in $\mathbb{R}^d$ and the goal is to check whether there exists $u\in U$ and $v\in  V$ such that $u$ and $v$ are orthogonal. 

Given an instance of the Affine \SLR with $n$ points and $k=4$, we proceed as follows. Suppose that the input is a set of $n$ points in $\Re^d$ and the query is the origin $\origin$. Moreover, suppose that $d=4$. So the goal is to decide whether there exist four points $a, b, c, d$ such that the three dimensional flat that passes through them also passes through the origin. This is equivalent to checking whether the determinant of the matrix which is formed by concatenating these four points as its columns, is zero or not. We can pre-process pairs of points to solve it fast. 

Take all ${n \choose 2}$ pairs of points $a=(a_1, a_2, a_3, a_4)$ , $b=(b_1, b_2, b_3, b_4)$ and preprocess them by constructing a vector $u$ in 24 dimensional space such that $u=(a_1b_2, -a_1b_2, -a_1b_3, a_1b_3, \cdots , -a_4 b_3 , a_4b_3)$ and let $U$ be the set of such vectors $u$. Also, for each pair of points $c=(c_1, c_2, c_3, c_4)$ , $d=(d_1, d_2, d_3, d_4)$, we construct $v=(c_3d_4, c_4d_3, c_2d_4, c_4d_2, \cdots, c_1d_2, c_2d_1)$  and let $V$ be the set of ${n \choose 2}$ such vectors $v$. It is easy to check that the determinant is zero, if and only if the inner product of $u$ and $v$ is zero. 

Thus, we have two collections $U$ and $V$ of $N = {n\choose 2}$ vectors in $\Re^{24}$ and would like to check whether there exists two points, one from each collection, that are orthogonal.
So any lower bound better than $\omega (n^{k/2}) = \omega (n^{2})$ would imply a super linear lower bound $\omega(N)$ for the Hopcroft's problem.
\end{document}